\newcommand{\trans}{^{\mathrm{T}}}
\newcommand{\fring}{\mathcal{R}} 
\newcommand{\tlcycliccode}{$\theta$-$\lambda$-cyclic code}
\renewcommand{\leq}{\leqslant}
\newcommand{\Aut}{\mathrm{Aut\,}}
\newcommand{\id}{\mathrm{id}}
\newcommand{\wt}{\mathrm{wt}}
\newcommand{\ord}{\mathrm{ord\,}}
\newcommand{\cchar}{\mathrm{char\,}}
\newcommand{\dhamming}{\mathrm{d}_H}
\journal{arXiv}
    \theoremstyle{nonumberplain}
    \newtheorem{proof}{Proof.}
\newtheorem{theorem}{Theorem}[section]
\newtheorem{definition}[theorem]{Definition}
\newtheorem{proposition}[theorem]{Proposition}
\newtheorem{corollary}[theorem]{Corollary}
\newtheorem{lemma}[theorem]{Lemma}
\newtheorem{example}{Example}[section]
\newtheorem{remark}[example]{Remark}
\begin{document}

\begin{frontmatter}

\title{Skew constacyclic codes over a class of finite commutative semisimple rings}

\author{Ying Zhao}
\address{27 Shanda Nanlu, Jinan, P.R.China 250100}




\begin{abstract}
    Let $F_q$ be a finite field of $q=p^r$ elements, where $p$ is a prime, $r$ is a positive integer, we determine automorphism $\theta$ of a class of finite commutative semisimple ring $\fring =\prod_{i=1}^t F_q$ and the structure of its automorphism group $\Aut(\fring)$. We find that $\theta$ is totally determined by its action on the set of primitive idempotent $e_1, e_2,\dots,e_t$ of $\fring$  and its action on $F_q1_{\fring}=\{a1_{\fring} \colon a\in F_q\},$ where $1_{\fring}$ is the multiplicative identity of $\fring.$ We show $\Aut(\fring) = G_1G_2,$ where $G_1$ is a normal subgroup of $\Aut(\fring)$ isomorphic to the direct product of $t$ cyclic groups of order $r,$ and $G_2$ is a subgroup of $\Aut(\fring)$ isomorphic to the symmetric group $S_t$ of $t$ elements.

    For any linear code $C$ over $\fring,$  we establish a one-to-one correspondence between $C$ and $t$ linear codes $C_1,C_2,\dots,C_t$ over $F_q$ by defining an isomorphism $\varphi.$ For any $\theta$ in $\Aut(\fring)$ and any invertible element $\lambda$ in $\fring,$ we give a necessary and sufficient condition that a linear code over $\fring$ is a $\theta$-$\lambda$-cyclic code in a unified way. When $ \theta\in G_1,$ the $C_1,C_2,\dots,C_t$ corresponding to the $\theta$-$\lambda$-cyclic code $C$ over $\fring$ are skew  constacyclic codes over $F_q.$ When $\theta\in G_2,$ the $C_1,C_2,\dots,C_t$ corresponding to the skew cyclic code $C$ over $\fring$ are quasi cyclic codes over $F_q.$ For general case, we give conditions that $C_1,C_2,\dots,C_t$ should satisfy when the corresponding linear code $C$ over $\fring$ is a skew constacyclic code.

    Linear codes over $\fring$ are closely related to linear codes over $F_q.$ We define homomorphisms which map linear codes over $\fring$ to matrix product codes over $F_q.$ One of the homomorphisms is a generalization of the $\varphi$ used to decompose linear code over $\fring$ into linear codes over $F_q,$ another homomorphism is surjective. Both of them can be written in the form of $\eta_M$ defined by us, but the matrix $M$ used is different. As an application of the theory constructed above, we construct some optimal linear codes over $F_q.$
\end{abstract}

\begin{keyword}
    Finite commutative semisimple rings\sep Skew constacyclic codes\sep Matrix product codes
\MSC[2020] Primary 94B15 \sep  94B05; Secondary 11T71
\end{keyword}

\end{frontmatter}


\section{Introduction}

In modern society, the efficient and reliable transmission of information is inseparable from coding theory. Cyclic codes have been widely studied and applied due to their excellent properties. In recent years, more and more people have studied cyclic codes. Early studies mainly considered codes over finite fields, but when Hammons et al. \cite{Z4linear} found that some good nonlinear codes over binary fields can be regarded as Gray images of linear codes over $\mathbb{Z}_4,$ more and more people began to study linear codes over finite rings, and by defining appropriate mappings, codes over finite rings were related to codes over finite fields.

 Boucher et al. studied skew cyclic codes over finite fields \cite{Boucher2007Skew, 2009Coding}, and gave some self dual linear codes over $F_4$ with better parameters than known results, which attracted many people to study skew cyclic codes. Skew cyclic codes is a generalization of the cyclic codes, and it can be characterized by skew polynomial ring $F_q\left[x;\theta\right].$ The elements in skew polynomial ring are still polynomials, but multiplication is non-commutative, and factorization is not unique. It is because of these differences that Boucher et al. obtained self dual linear codes with batter parameters over finite fields. But they required $\ord (\theta) $ divides the length of codes. Then, Siap et al. \cite{siap2011skew} removed this condition and studied skew cyclic codes of arbitrary length.

Boucher et al. also studied skew constacyclic codes over $F_q$ \cite{Boucher2009codesMod, boucher2011note} and skew constacyclic codes over Galois rings \cite{boucher2008skew}. Influenced by the work of Boucher et al., people began to research skew cyclic codes and skew constacyclic codes over finite rings. But general finite rings no longer have many good properties like finite fields, so it is not easy to study linear codes over general finite rings. Therefore, people always consider a very specific ring or a class of rings, and then study linear codes over the ring. Abualrub et al. \cite{2012Onskewcyclic} studied skew cyclic codes and its dual codes over the ring $F_2+vF_2, v^2=v,$ they used skew polynomial rings.

Dougherty et al. \cite{dougherty1999self} used the Chinese remainder theorem to study self dual codes over $\mathbb{Z}_{2k},$ which brought new ideas for the study of linear codes over finite commutative rings. Similar to the method used by Zhu et al. \cite{zhu2010some} to study cyclic codes over $F_2+vF_2,$ when people study linear codes over $F_q+vF_q, v^2=v,$ they mainly use $F_q +vF_q$ is isomorphic to the finite commutative semisimple ring $F_q\times F_q,$ and then they show that to study a linear code $C$ over the ring, it is only necessary to study the two linear codes $C_1,C_2$ over $F_q.$ Gao \cite{gao2013skew} studied skew cyclic codes over $F_p +vF_p,$ and the main result he obtained can be regarded as a special case of the result in special case two we will deal with later. Gursoy et al. \cite{gursoy2014construction} studied skew cyclic codes over $F_q +vF_q,$ their main result can be seen as a special case of the result in special case one we deal with later. Gao et al. \cite{gao2017skew} later studied skew constacyclic codes over $F_q+vF_q,v^2=v,$ what they considered can be seen as special case one in this paper.

A more general ring than the one above is $F_q\left[v\right]/\left(v^{k+1}-v\right),\ k\mid (q-1),$ which is isomorphic to finite commutative semisimple ring $\prod_{j=1}^{k+1} F_q.$ Shi et al. \cite{shi2015skew} studied skew cyclic codes over the ring $F_q+vF_q+v^2F_q,$ $v^3=v,$ $q=p^m, $ where $p$ is an odd prime number, they also used the structure of the ring, and then turned the problem into the study of linear codes over $F_q,$ the skew cyclic codes they studied can be included in the special case one of this paper.

The ring $F_q[u,v]/(u^2-u,v^2-v)$ is isomorphic to the finite commutative semisimple ring $F_q\times F_q\times F_q\times F_q,$ the study of linear codes over this ring is similar, Yao et al. \cite{ting2015skew} studied skew cyclic codes over this ring, although they chose a different automorphism of this ring, they limited linear codes $C$ over the ring satisfy specific conditions. Islam et al. \cite{islam2018skew} also studied skew cyclic codes and skew constacyclic codes over this ring, the automorphism they chose can be classified into our special case one.

Islam et al. \cite{islam2019note} also studied skew constacyclic codes over the ring $F_q+uF_q+vF_q$ ($u^2=u,v^2=v,uv=vu=0$), they mainly used the ring is isomorphic to the finite commutative semisimple ring $F_q\times F_q\times F_q,$ the skew cyclic codes they studied can also be classified as special case one in this paper. Bag et al. \cite{bag2019skew} studied skew constacyclic codes over $F_p+u_1F_p + \dots + u_{2m}F_p,$ this ring is isomorphic to $\prod_{j=1}^{2m+1} F_p,$ their results are similar to those in our special case two. The main work they do is to define two homomorphisms, so that the skew constacyclic codes over the ring will be mapped to special linear codes over $F_p.$

There are also many people who study linear codes over rings with similar structures and then construct quantum codes. For example, Ashraf et al. \cite{Ashraf2019Quantum} studied cyclic codes over $F_p\left[u,v\right]/\left(u^2 -1,v^3-v\right),$ and then using the obtained result to construct quantum codes, this ring is isomorphic to $\prod_{j=1}^{6} F_p.$ Bag et al. \cite{bag2020quantum} studied skew constacyclic codes over $F_q [u, v]/\langle u^2- 1, v^2- 1, uv- vu\rangle$ to construct quantum codes, this ring is isomorphic to $F_q\times F_q \times F_q \times F_q,$ the part of their paper on skew constacyclic codes can be classified into our special case one.

It is noted that the above rings are all special cases of a class of finite commutative semisimple rings $\fring=\prod_{i=1}^tF_q,$ and the methods used in the related research and the results obtained are similar, so we are inspired to study skew constacyclic codes over $\fring.$ In addition, we noticed that when the predecessors studied skew constacyclic codes over rings, they often considered not all but a given automorphism, the results are not very complete, so we intend to characterize skew constacyclic codes over $\fring$ corresponding to all automorphisms. However, in order to make the results easier to understand, we also discuss two special cases.

 Dinh et al. have studied constacyclic codes and skew constacyclic codes over finite commutative semisimple rings\cite{dinh2017constacyclic,Dinh2019Skew}, but what they discussed were similar to our special case one. Our results can be further generalized to skew constacyclic codes over general finite commutative semisimple rings. But to clarify the automorphism of general finite commutative semisimple rings, the symbol will be a bit complicated, and when we want to establish relations from linear codes over rings to codes over fields, what we use is essentially this kind of special finite commutative semisimple rings we considered, so we do not consider skew constacyclic codes over general finite commutative semisimple rings here.

In the rest of this paper, we first introduce linear codes over $F_q,$ and review the characterization of skew constacyclic codes and their dual codes over $F_q.$ In order to clearly describe skew constacyclic codes over $\fring,$ We determine the automorphism of $\fring$ and the structure of its automorphism group, we define an isomorphism to decompose the linear code over $\fring,$ and give the characterizations of skew constacyclic codes in two special cases, and finally the characterization of the general skew constacyclic codes over $\fring$ is given. For the last part of this paper, we define homomorphisms to relate linear codes over $\fring$ to matrix product codes over $F_q,$ and give some optimal linear codes over $F_q.$

\section{Skew constacyclic codes over \texorpdfstring{$F_q$}{Fq}}

Let $F_q$ be a $q$ elements finite field, and if $C$ is a nonempty subset of the $n$ dimensional vector space $F_q^n$ over $F_q,$ then $C$ is a code of length $n$ over $F_q,$ and here we write the elements in $F_q^n$ in the form of row vectors. The element in a code $C$ is called codeword, and for $x\in C,$ denote by $\wt_H(x)$ the number of non-zero components in $x,$ which is called the Hamming weight of $x.$ For two elements $x,y\in C$ define the distance between them as $\dhamming(x,y) = \wt_H(x-y). $ If $C$ has at least two elements, define the minimum distance $\dhamming(C) = \min\{\dhamming(x,y) \colon x, y\in C,\, x\neq y\}$ of $C$.  Define the inner product of any two elements $x=(x_0,x_1,\dots,x_{n-1}), y=(y_0,y_1,\dots,y_{n-1}) \in F^n_q$ as $x\cdot y = \sum_{i=0}^{n-1} x_iy_i.$ With the inner product we can define the dual of the code $C$ as $C^\perp = \{x\in F_q^n \colon x\cdot y=0, \, \forall\, y \in C\},$ the dual code must be linear code we define below.

\begin{definition}
    If $C$ is a vector subspace of $F_q^n,$ then $C$ is a linear code of length $n$ over $F_q.$
\end{definition}

If $C$ is a nonzero linear code, then there exists a basis of $C$ as a vector subspace, for any basis $\alpha_1,\alpha_2,\dots,\alpha_k,$ where $k$ is the dimension of $C$ as a linear space over $F_q,$ then we say that the $k\times n$ matrix $\left(\alpha_1,\alpha_2,\dots,\alpha_k\right)\trans$ is the generator matrix of $C.$ Suppose $H$ is a generator matrix of $C^\perp,$ then $x\in C$ if and only if $Hx\trans = 0,$ that is, we can use $H$ to check whether the element $x$ in $F_q^n$ is in $C,$ and call $H$ a parity-check matrix of $C.$ Let $C$ be a linear code, if $C \subset C^\perp,$ then $C$ is a self orthogonal code; if $C = C^\perp ,$ then $C$ is a self dual code.


If $C$ is a linear code with length $n,$ dimension $k,$ and minimum distance $d,$ then we say it has parameters $[n,k,d].$ Sometimes, without indicating the minimum distance of $C,$ we say the parameters of $C$ is $[n,k]. $ For given positive integer $n,k,$ where $1\leq k \leq n,$ define the maximum value of the minimum distance of all linear codes over $F_q$ with parameters $[n,k]$ as $\dhamming(n,k,q) = \max\left\{ \dhamming(C) \colon C\subset F_q^n,\, \dim C=k \right\}. $ In general, it is difficult to determine the exact value of $\dhamming(n,k,q)$, often only the upper and lower bounds are given, the famous Singleton bound is that $\dhamming(n,k,q) \leq n-k+1.$
A linear code $C$ with parameter $[n,k]$ is said to be an optimal linear code over $F_q$ if the minimum distance $\dhamming(C)$ is exactly equal to $\dhamming(n,k,q).$


The most studied linear codes are cyclic codes, and the definitions of cyclic codes and quasi cyclic codes over $F_q$ are given below.

\begin{definition}
    Let $\rho$ be a map from $F_q^n$ to $F_q^n$, which maps $ (c_0,c_1,\dots,c_{n-1})$ to $\left(c_{n-1},c_0,\dots,c_{n-2}\right). $ Let $C$ be a linear code of length $n$ over $F_q$ and $\ell$ be a positive integer, if $\rho^{\ell}(C) = C,$ then $C$ is a quasi cyclic code of length $n$ with index $\ell$ over $F_q.$ In particular, $C$ is said to be a cyclic code when $\ell =1.$
\end{definition}

From the definition, it is clear that quasi cyclic code is a generalization of cyclic code, and another generalization of cyclic code is skew constacyclic code.

\subsection{Characterization of skew constacyclic codes over \texorpdfstring{$F_q$}{Fq}}\label{sec:skewF_q}

\begin{definition}
    Let $\theta\in \Aut(F_q),$ $\lambda\in F_q^\ast,$ and use $\rho_{\theta,\lambda}$ to denote the mapping from $F_q^n$ to $F_q^n$, which maps $ (c_0,c_1,\dots,c_{n-1})$ to $\left(\lambda \theta(c_{n-1}),\theta(c_0),\dots,\theta(c_{n-2})\right). $ If $C$ is a linear code of length $n$ over $F_q,$ and for any $c\in C,$ with $\rho_{\theta,\lambda}(c)\in C,$ then $C$ is a $\theta$-$\lambda$-cyclic code of length $n$ over $F_q$, and if $\theta$ and $\lambda$ are not emphasized, $C$ is also called a skew constacyclic code. In particular, if $\theta = \id,$  $C$ is a cyclic code when $\lambda=1$, a negative cyclic code when $\lambda=-1,$ and a $\lambda$-cyclic code or constacyclic code when $\lambda$ is some other invertible element. If $\lambda=1,\theta\neq \id ,$ then it is called a $\theta$-cyclic code, also called a skew cyclic code.
\end{definition}

Skew polynomials were first studied by Ore \cite{Ore1933poly}, and a detailed description of skew polynomial ring $F_q[x;\theta]$ over a finite field $F_q$ can be found in McDonald's monograph \citep[\uppercase\expandafter{\romannumeral2}. (C),][]{McDonald1974FiniteRW}.

\begin{definition}
    Let $\theta\in \Aut(F_q),$ define the ring of skew polynomials over the finite field $F_q$
    \begin{equation*}
        F_q[x;\theta] = \left\{a_0 +a_1x+\dots+a_kx^k\mid a_i \in F_q, 0\leq i \leq k\right\}.
    \end{equation*}
    That is, the element in $F_q[x;\theta]$ is element in $F_q[x]$, except that the coefficients are written on the left, which is called a skew polynomial, and the degree of skew polynomial is defined by the degree of polynomial. Addition is defined in the usual way, while $ax^n \cdot (bx^m) = a\theta^n(b)x^{n+m}, $ and then the general multiplication is defined by the law of associativity and the law of distribution.
\end{definition}

If $\theta = \id ,$ then $F_q[x;\theta]$ is $F_q[x].$ If $\theta \neq \id,$ then $F_q[x;\theta]$ is a non-commutative ring, with properties different from $F_q[x]$, such as right division algorithm.

\begin{theorem}
    \citep[Theorem \uppercase\expandafter{\romannumeral2}.11,][]{McDonald1974FiniteRW}
    For any $ f(x) \in F_q[x;\theta], 0\neq g(x) \in F_q[x;\theta],$ there exists unique $q(x),r(x) \in F_q[x;\theta]$ such that
    $f(x) = q(x)g(x) + r(x),$ where $r(x) = 0$ or $0\leq \deg r(x) < \deg g(x). $
\end{theorem}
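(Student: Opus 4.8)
The plan is to prove existence by induction on $\deg f(x)$ and uniqueness by a degree argument, following the shape of ordinary Euclidean division but keeping careful track of the twisted multiplication $ax^n\cdot bx^m = a\theta^n(b)x^{n+m}$. The essential preliminary observation, which I would record first since both halves of the proof depend on it, is that the degree remains additive: if $p(x),s(x)\in F_q[x;\theta]$ are nonzero with leading terms $a x^{\deg p}$ and $b x^{\deg s}$, then the leading term of $p(x)s(x)$ is $a\,\theta^{\deg p}(b)\,x^{\deg p+\deg s}$. Since $\theta\in\Aut(F_q)$ it is injective, so $b\neq 0$ forces $\theta^{\deg p}(b)\neq 0$, and as $F_q$ is a field this product is nonzero; hence $\deg\bigl(p(x)s(x)\bigr)=\deg p(x)+\deg s(x)$.

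For existence, write $n=\deg f(x)$ and $m=\deg g(x)$ with leading coefficients $a$ and $b$. If $f(x)=0$ or $n<m$, I take $q(x)=0$ and $r(x)=f(x)$, and the degree condition holds trivially. Otherwise $n\geq m$, and I cancel the leading term of $f(x)$ by subtracting a suitable \emph{left} multiple of $g(x)$. The key computation is $c x^{n-m}\cdot g(x)=c\,\theta^{n-m}(b)\,x^{n}+(\text{lower order terms})$; to match the leading term $a x^n$ of $f(x)$ I set $c=a\,\theta^{n-m}(b)^{-1}$, which is well defined precisely because $\theta^{n-m}(b)\neq 0$ and $F_q$ is a field. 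Then $f(x)-c x^{n-m}g(x)$ has degree strictly less than $n$, so the induction hypothesis applies and yields $q_1(x),r(x)$ with $r(x)=0$ or $\deg r(x)<m$; setting $q(x)=c x^{n-m}+q_1(x)$ then gives $f(x)=q(x)g(x)+r(x)$ and completes the inductive step.

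For uniqueness, suppose $f(x)=q_1(x)g(x)+r_1(x)=q_2(x)g(x)+r_2(x)$ with each remainder either zero or of degree less than $m$. Rearranging gives $\bigl(q_1(x)-q_2(x)\bigr)g(x)=r_2(x)-r_1(x)$. If $q_1(x)\neq q_2(x)$, the additivity of degree established above forces the left-hand side to have degree at least $m$, whereas the right-hand side is either zero or of degree less than $m$, a contradiction. Hence $q_1(x)=q_2(x)$, and then $r_1(x)=r_2(x)$ follows immediately.

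This argument is essentially routine, and the one point where non-commutativity genuinely matters is that the quotient monomial must be placed on the \emph{left} of $g(x)$, so that the cancellation condition reads $c\,\theta^{n-m}(b)=a$ rather than $cb=a$. The only thing I must verify to make the construction go through is therefore that this equation is solvable for $c$ for every choice of $a$, which is guaranteed by $\theta^{n-m}(b)$ being a nonzero element of the field $F_q$; this is the feature that distinguishes right division in $F_q[x;\theta]$ from division in the commutative ring $F_q[x]$.
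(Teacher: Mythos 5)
Your proof is correct and follows exactly the route the paper indicates: the paper's own proof is just the one-line remark that the result follows by induction as in the ordinary polynomial ring, and your argument is precisely that induction carried out in full, with the right attention to the twisted leading-coefficient condition $c\,\theta^{n-m}(b)=a$ and to degree additivity for the uniqueness step. No gaps; you have simply written out the details the paper leaves to the reader.
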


\begin{proof}
    Similar to the proof in the polynomial ring, which is obtained by induction.
\end{proof}

For convenience, use $\left\langle x^n-\lambda\right\rangle$ to abbreviate the left $F_q[x;\theta]$-module $F_q[x;\theta]\left(x^n-\lambda\right)$ generated by $x^n-\lambda$, then left $F_q[x;\theta]$-module $R_n = F_q[x;\theta]/\left\langle x^n-\lambda\right\rangle$ is the set of equivalence classes obtained by dividing the elements in $F_q[x;\theta]$ by $x^n-\lambda$ using right division algorithm. Define a map $\Phi:$
\begin{align*}
    F_q^{n} & \rightarrow F_q[x;\theta]/\left\langle x^n-\lambda\right\rangle \\
    (c_0,c_1,\dots,c_{n-1}) & \mapsto c_0+c_1x+\dots+c_{n-1}x^{n-1}+\left\langle x^n-\lambda\right\rangle.
\end{align*}

Clearly $\Phi$ is an isomorphism of vector spaces over $F_q.$

The following theorem is a general form of \citep[Theorem 10,][]{siap2011skew} by Siap et al.. They consider the case $\lambda =1$. It is worth noting that Boucher et al. should also have noticed the fact that they used submodules to define the module $\theta$-constacyclic codes \citep[Definition 3,][]{Boucher2009codesMod}.

\begin{theorem} \label{thm:skewcodesoverFbymod}
    Let $\theta \in \Aut(F_q),$  $\lambda \in F_q^\ast,$  $C$ be a vector subspace of $F_q^{n},$ 
    then $C$ is a \tlcycliccode\ if and only if $ \Phi(C)$ is a left $F_q[x;\theta]$-submodule
    of $R_n.$
\end{theorem}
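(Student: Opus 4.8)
The plan is to reduce the entire statement to a single computation, namely that left multiplication by $x$ in $R_n$ corresponds under $\Phi$ exactly to the shift map $\rho_{\theta,\lambda}$. First I would take an arbitrary $c=(c_0,c_1,\dots,c_{n-1})\in F_q^n$ and evaluate $x\cdot\Phi(c)$ in $R_n$. Applying the skew multiplication rule $x\cdot(bx^m)=\theta(b)x^{m+1}$ termwise produces $\theta(c_0)x+\theta(c_1)x^2+\dots+\theta(c_{n-1})x^n$; then, reducing modulo $\langle x^n-\lambda\rangle$ so that $x^n\equiv\lambda$, and using that $F_q$ is commutative to write $\theta(c_{n-1})\lambda=\lambda\theta(c_{n-1})$, I obtain
\[
    x\cdot\Phi(c) = \lambda\theta(c_{n-1}) + \theta(c_0)x + \dots + \theta(c_{n-2})x^{n-1} = \Phi\bigl(\rho_{\theta,\lambda}(c)\bigr).
\]
This identity is the engine of the whole argument, and both implications fall out of it once we remember that $\Phi$ is a vector-space isomorphism.

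For the forward direction I would assume $C$ is a \tlcycliccode, so $\rho_{\theta,\lambda}(C)\subseteq C$, and verify that $C^x$ is a left $F_q[x;\theta]$-submodule of $R_n$. Since $\Phi$ is linear and $C$ is a subspace, $C^x$ is automatically closed under addition and under left multiplication by scalars $a\in F_q$ (because $a\cdot\Phi(c)=\Phi(ac)$ and $ac\in C$). The only remaining point is closure under left multiplication by $x$, which is precisely the displayed identity together with $\rho_{\theta,\lambda}(c)\in C$. Because every element of $F_q[x;\theta]$ is a left $F_q$-combination of powers of $x$, closure on the generators $F_q$ and $x$ propagates, by induction on degree and by additivity, to closure under multiplication by an arbitrary skew polynomial; hence $C^x$ is a left submodule.

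For the converse I would assume $C^x$ is a left $F_q[x;\theta]$-submodule and recover $\rho_{\theta,\lambda}(C)\subseteq C$. Given any $c\in C$ we have $\Phi(c)\in C^x$, so $x\cdot\Phi(c)\in C^x$, and by the identity this says $\Phi\bigl(\rho_{\theta,\lambda}(c)\bigr)\in C^x=\Phi(C)$. Injectivity of $\Phi$ then forces $\rho_{\theta,\lambda}(c)\in C$, so $C$ is a \tlcycliccode.

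I do not expect a genuine obstacle here, as the argument is routine once the $x$-to-$\rho_{\theta,\lambda}$ correspondence is in place. The one step demanding care is the reduction inside the key computation: one must check that the constant $\lambda$ acquired from $x^n\equiv\lambda$ lands on the correct side and interacts correctly with $\theta$ (commutativity of $F_q$ makes $\lambda\theta(c_{n-1})$ unambiguous, and $\theta(c_{n-1})(x^n-\lambda)$ genuinely lies in the left ideal $\langle x^n-\lambda\rangle$), and one must confirm that $F_q$ and $x$ really generate $F_q[x;\theta]$ as a ring, so that testing closure on these generators suffices.
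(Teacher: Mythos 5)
Your proof is correct and follows essentially the same route as the paper: both hinge on the computation $x\cdot\Phi(c)=\Phi\bigl(\rho_{\theta,\lambda}(c)\bigr)$ in $R_n$, use it directly for sufficiency, and extend closure under $x$ to closure under arbitrary skew polynomials for necessity. Your write-up is somewhat more explicit than the paper's (spelling out the induction on degree and the injectivity of $\Phi$), but there is no substantive difference in method.
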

\begin{proof}
    Necessity. Note that
    \begin{align*}
         & \mathrel{\phantom{=}} x\cdot\left(c_0+c_1x+\dots+c_{n-1}x^{n-1}+\left\langle x^n-\lambda\right\rangle\right) \\
         & = \theta(c_0) x + \theta(c_1)x^2+\dots+\theta(c_{n-1})x^n + \left\langle x^n-\lambda\right\rangle \\
         & = \lambda\theta(c_{n-1}) + \theta(c_0)x+\dots+\theta(c_{n-2})x^{n-1} + \left\langle x^n-\lambda\right\rangle \\
         & = \Phi\left(\lambda\theta(c_{n-1}),\theta(c_0),\dots,\theta(c_{n-2})\right) \in \Phi(C),
    \end{align*}
    Thus for any $a(x) \in F_q[x;\theta],c(x) + \left\langle x^n-\lambda\right\rangle \in \Phi(C),$ we have $a(x)\left(c(x) + \left\langle x^n-\lambda\right\rangle \right) \in \Phi(C).$

    Sufficiency. For any $(c_0,\dots,c_{n-1}) \in C,$ we need to prove that $$\left(\lambda\theta(c_{n-1}),\theta(c_0),\dots,\theta(c_{n-2})\right) \in C.$$ Notice that $c_0+c_1x+\dots+c_{n-1}x^{n-1}+\left\langle x^n-\lambda\right\rangle \in \Phi(C),$ and $ \Phi(C)$ is a left $F_q[x;\theta]$-submodule of $R_n,$ so
    $x \cdot \left(c_0+c_1x+\dots+c_{n-1}x^{n-1}+\left\langle x^n-\lambda\right\rangle\right) \in \Phi(C),$ which gives the proof.
\end{proof}

Each nonzero element in the left $F_q[x;\theta]$-module $R_n$ corresponds to a skew polynomial in $F_q\left[x;\theta \right]$ with degree no more than $n-1$. If $C$ is a \tlcycliccode\ of length $n$ over $F_q$ and $C\neq \{0\},$ then there exists a monic skew polynomial $g(x)$ with minimal degree such that $g(x) + \left\langle x^n-\lambda\right\rangle \in \Phi(C),$ then $F_q[x;\theta]\left(g(x)+\left\langle x^n-\lambda\right\rangle\right) = \Phi(C).$ This is because for each $ f(x) + \left\langle x^n-\lambda\right\rangle \in \Phi(C),$
by right division algorithm we have $f(x) = q(x)g(x) + r(x),$ where $r(x) = 0$ or $0\leq \deg r(x) < \deg g(x). $
The former case corresponds to $f(x) + \left\langle x^n-\lambda\right\rangle = q(x)\left(g(x)+\left\langle x^n-\lambda\right\rangle\right). $ In the latter case $r(x)+\left\langle x^n-\lambda\right\rangle = f(x)+\left\langle x^n-\lambda\right\rangle - q(x)\left(g(x)+\left\langle x^n-\lambda\right\rangle\right) \in \Phi(C),$ so that a monic skew polynomial of lower degree can be found, it is a contradiction. Similarly, it can be shown that the monic skew polynomial $g(x)$ with the minimal degree is unique.

\begin{definition}
    Let $C$ be a \tlcycliccode\ of length $n$ over $F_q$ and $C\neq \{0\},$ define $g(x)$ as described above to be the  generator skew polynomial of $C,$ we also call it generator polynomial when $\theta = \id.$
\end{definition}

The generator skew polynomial $g(x)$ should be a right factor of $x^n-\lambda.$ In fact, according to the right division algorithm $x^n-\lambda = q(x)g(x) + r(x),$ if $r(x)\neq 0,$ then $r(x)+\left\langle x^n-\lambda\right\rangle = -q(x)\left(g(x) + \left\langle x^n-\lambda\right\rangle\right) \in \Phi(C),$
contradicts with  $g(x)$ is the monic skew polynomial of minimal degree. Let $g(x) = a_0+a_1x+\dots + a_{n-k}x^{n-k},$ then one of the generator matrices of $C$ is

\begin{equation}\label{eq:genmat}
    G = \begin{pmatrix}
        a_0 & \dots & a_{n-k} & & & \\
            & \theta(a_0) & \dots & \theta(a_{n-k}) & & \\
            & & \ddots & \ddots & \ddots & \\
            & & & \theta^{k-1}(a_0) & \dots & \theta^{k-1}(a_{n-k})
    \end{pmatrix}.
\end{equation}

If $C$ is a nontrivial (i.e., $C\neq 0$ and $C\neq F_q^n$) \tlcycliccode\ of length $n,$ there is a monic skew polynomial $g(x) = a_0+a_1x+\dots + a_{n-k}x^{n-k}, (a_{n-k}= 1),$ and $g(x)$ right divides $x^n-\lambda.$ Conversely, for a monic right factor $g(x)$ with degree less than $n$ of $x^n-\lambda,$ the left $F_q[x;\theta]$-module generated by $ g(x) + \langle x^n-\lambda \rangle$ corresponds to a \tlcycliccode\ of length $n$ over $F_q$. Thus, we establish a one-to-one correspondence between the nontrivial \tlcycliccode\ of length $n$ over $F_q$ and the nontrivial monic right factor of $x^n-\lambda$ in $F_q[x;\theta].$ Unfortunately, however, since factorization in skew polynomial ring is different from factorization in polynomial ring, we cannot use this correspondence to explicitly give a formula counting the number of \tlcycliccode\ over $F_q.$ We can concretely perceive the difference in factorization from the following example.

\begin{example}
    Consider $F_4=\{0,1,\alpha, 1+\alpha\},$ where $\alpha^2 = 1+\alpha.$ Let $\theta$ be a nontrivial automorphism of $F_4.$ It is straightforward to verify that $x^3-\alpha$ is irreducible in $F_4[x]$, while it can be decomposed in $F_4[x;\theta]$ as $x^3-\alpha = (x+\alpha)(x^2+\alpha^2 x + 1) = (x^2+\alpha^2 x + 1)(x+\alpha). $ Also, $x^3-1$ can be decomposed in $F_4[x]$ as $x^3-1 = (x-\alpha)(x-\alpha^2)(x-1),$ but in $F_4[x;\theta]$ it cannot be decomposed as a product of linear factors, only as $x^3-1 = (x^2 + x+1)(x-1) = (x-1)(x^2+x+1).$
\end{example}

\begin{definition}
    Let $\lambda \in F_q^\ast,$ and use $\rho_{\lambda} $ to denote the mapping from $F_q^n$ to $F_q^n$, which maps $ (c_0,c_1,\dots,c_{n-1})$ to $\left(\lambda c_{n-1},c_0,\dots,c_{n-2} \right). $ Let $C$ be a linear code of length $n$ over $F_q$ and $\ell$ be a positive integer, if $\rho_{\lambda}^{\ell}(C) = C,$ then $C$ is a quasi $\lambda$-cyclic code of length $n$ with index $\ell$ over $F_q$.
\end{definition}

Siap et al. studied skew cyclic codes of arbitrary length over finite field, their main results \citep[Theorem 16 and Theorem 18,][]{siap2011skew} are that skew cyclic codes over $F_q$ are either cyclic codes or quasi cyclic codes. Their results can be generalized to the following

\begin{theorem}
    Let $\theta\in \Aut(F_q), \lambda \in F_q^\ast ,$ $\ord (\theta) = m, $ $\theta(\lambda) = \lambda,$  $C$ is a $\theta$-$\lambda$-cyclic code of length $n$ over $F_q.$ If $(m,n)=1,$ then $C$ is a $\theta$-$\lambda$-cyclic code of length $n$ over $F_q$. If $(m,n)=\ell,$ then $C$ is a quasi $\lambda$-cyclic code over $F_q$ of length $n$ with index $\ell.$
\end{theorem}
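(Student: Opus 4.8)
The plan is to reduce everything to a single bookkeeping identity for the powers of the shift $\rho_{\theta,\lambda}$. Write $\Theta$ for the componentwise action of $\theta$ on $F_q^n$, so that $\Theta(c_0,\dots,c_{n-1}) = (\theta(c_0),\dots,\theta(c_{n-1}))$, and let $\rho_\lambda$ be the ordinary $\lambda$-constacyclic shift of the previous definition. A direct comparison of coordinates shows $\rho_{\theta,\lambda} = \rho_\lambda\circ\Theta = \Theta\circ\rho_\lambda$, and the two factors commute precisely because $\theta(\lambda)=\lambda$ (this hypothesis is what prevents $\theta$ from twisting the scalar $\lambda$ introduced at the wrap-around position). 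This decomposition is the linchpin of the argument: it cleanly separates the ``field-automorphism'' part $\Theta$ from the ``constacyclic'' part $\rho_\lambda$.

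Next I would record two structural facts. First, $\Theta^m = \id$, since $\ord(\theta)=m$. Second, $\rho_\lambda^n = \lambda\,\id$: iterating the shift $n$ times returns every coordinate to its original slot after passing the wrap-around exactly once, contributing a single factor $\lambda$ (equivalently, $\rho_\lambda$ is multiplication by $x$ in $F_q[x]/(x^n-\lambda)$, where $x^n=\lambda$). Because $\Theta$ and $\rho_\lambda$ commute, for any exponent $k$ we then have $\rho_{\theta,\lambda}^k = \rho_\lambda^k\circ\Theta^k$.

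The heart of the proof is an application of the Chinese Remainder Theorem. Set $\ell=(m,n)$ and choose a positive integer $k$ with $k\equiv \ell \pmod n$ and $k\equiv 0 \pmod m$; this system is solvable because its compatibility condition $\ell\equiv 0\pmod{(m,n)}$ holds trivially. Writing $k=\ell+ns$, the two structural facts (and $\Theta^k=\id$, since $m\mid k$) give
\[
  \rho_{\theta,\lambda}^k = \rho_\lambda^{k}\circ\Theta^{k} = \rho_\lambda^{\,\ell+ns} = \rho_\lambda^\ell\circ(\rho_\lambda^n)^s = \lambda^s\,\rho_\lambda^\ell .
\]
Since $C$ is a $\theta$-$\lambda$-cyclic code we have $\rho_{\theta,\lambda}(C)=C$, hence $\rho_{\theta,\lambda}^k(C)=C$, so $\lambda^s\rho_\lambda^\ell(C)=C$. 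As $C$ is a linear code it is stable under multiplication by the scalar $\lambda^{-s}\in F_q^\ast$, and therefore $\rho_\lambda^\ell(C)=C$ (finiteness of $C$, equivalently bijectivity of the shifts, upgrades any inclusion to an equality). For $\ell=1$ this says $C$ is a $\lambda$-cyclic code, and for $\ell=(m,n)>1$ it says $C$ is a quasi $\lambda$-cyclic code of index $\ell$, so the two cases of the statement are really one assertion. As written, the first conclusion ``$C$ is a $\theta$-$\lambda$-cyclic code'' is tautological; the intended claim is surely that $C$ is a $\lambda$-cyclic code, i.e. a quasi $\lambda$-cyclic code of index $1$.

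I expect the only genuinely delicate point to be the careful tracking of the scalar factors $\lambda$ under iteration. The hypothesis $\theta(\lambda)=\lambda$ is exactly what keeps $\Theta$ and $\rho_\lambda$ commuting, which in turn makes $\rho_\lambda^n$ a clean scalar $\lambda\,\id$ rather than something twisted by powers of $\theta$; without it the identity $\rho_{\theta,\lambda}^k=\lambda^s\rho_\lambda^\ell$ would fail. Everything else---the solvability of the congruences and the absorption of $\lambda^{-s}$ into the linear code---is routine.
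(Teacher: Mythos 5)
Your proposal is correct and follows essentially the same route as the paper: the same commuting decomposition $\rho_{\theta,\lambda}=\rho_\lambda\circ\tilde{\theta}=\tilde{\theta}\circ\rho_\lambda$ (valid precisely because $\theta(\lambda)=\lambda$), followed by choosing an exponent $k$ with $m\mid k$ and $k\equiv\ell\pmod n$ --- the paper writes this as $am=\ell+bn$ rather than invoking CRT, but it is the identical step --- and then absorbing the scalar $\lambda^{s}$ by linearity of $C$. Your added remarks (that the first conclusion as stated is tautological and should read ``$\lambda$-cyclic,'' and that finiteness upgrades the inclusion $\rho_\lambda^\ell(C)\subset C$ to equality) are accurate refinements of points the paper leaves implicit.
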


\begin{proof}
    We only need to proof the case $(m,n)=\ell.$ In this case, define a mapping $\tilde{\theta}$ from $F_q^n$ to $F_q^n$ which maps $(x_0,x_1,\dots,x_{n-1})$ to $(\theta(x_0),\theta(x_1),\dots,\theta(x_{n-1})). $
    You can directly verify $\rho_{\lambda} \circ \tilde{\theta} = \tilde{\theta} \circ \rho_{\lambda} = \rho_{\theta,\lambda}. $ Since $(m,n) = \ell, $ so there exists $a,b \in \mathbb{N},$ such that $am = \ell + bn.$

    For any $x = (x_0,x_1,\dots,x_{n-1}) \in C,$ since $C$ is a \tlcycliccode, therefore
    \begin{equation*}
        \rho_{\theta,\lambda}^{am} (x) = \rho_{\lambda}^{\ell + bn} \tilde{\theta}^{am} (x) = \rho_{\lambda}^{\ell + bn}(x) = \lambda^b \rho_{\lambda}^{ \ell} (x) \in C,
    \end{equation*}
    so $\rho_{\lambda}^{ \ell} (x) \in C.$
\end{proof}

\begin{remark}
    The converse of the above theorem does not hold, i.e., quasi cyclic code over $F_q$ is not necessarily a skew cyclic code. For example, let $F_4=\{0,1,\alpha, 1+\alpha\},$ where $\alpha^2 = 1+\alpha.$ Let $\theta$ be the nontrivial automorphism of $F_4.$ Consider the code $C=\{(0,0,0),(\alpha,\alpha^2,1),(\alpha^2,1,\alpha),(1,\alpha,\alpha^2)\}$ of length $3$ over $F_4,$ and it is straightforward to verify that $C$ is a cyclic code over $F_4,$ but it is not a $\theta$-cyclic code over $F_4.$ 
\end{remark}

\subsection{Dual codes of skew constacyclic codes over \texorpdfstring{$F_q$}{Fq}}

The following theorem is well known, and we give a simple proof similar to the proof of \citep[Theorem 2.4,][]{valdebenito2018dual} by Valdebenito et al.

\begin{theorem} \label{dualcodesoverF}
    If $C$ is a \tlcycliccode\ of length $n$ over $F_q$, where $\theta\in \Aut(F_q),\lambda \in F_q^\ast, $ then $C^\perp$ is a $\theta$-$\lambda^{-1}$-cyclic code of length $n$ over $F_q.$
\end{theorem}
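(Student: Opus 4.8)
The plan is to prove that $C^\perp$ is closed under the shift map $\rho_{\theta,\lambda^{-1}}$; since $C^\perp$ is automatically a linear code, this closure is exactly what is needed to conclude that $C^\perp$ is a $\theta$-$\lambda^{-1}$-cyclic code. The whole argument rests on one adjointness-type identity relating the two shift maps through the inner product, namely
\begin{equation*}
    \rho_{\theta,\lambda^{-1}}(y) \cdot \rho_{\theta,\lambda}(x) = \theta(y\cdot x), \qquad \text{for all } x,y\in F_q^n.
\end{equation*}
First I would establish this identity by a direct coordinate-by-coordinate computation. Writing out both shifted vectors, every coordinate of $\rho_{\theta,\lambda^{-1}}(y)$ and of $\rho_{\theta,\lambda}(x)$ carries a $\theta$; at each position $1,2,\dots,n-1$ the contributing product is $\theta(y_{i-1})\theta(x_{i-1}) = \theta(y_{i-1}x_{i-1})$, while at the wrap-around position $0$ the two prefactors $\lambda^{-1}$ and $\lambda$ cancel, leaving $\theta(y_{n-1})\theta(x_{n-1}) = \theta(y_{n-1}x_{n-1})$. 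Summing these $n$ terms and using that $\theta$ is a ring homomorphism on $F_q$ gives $\theta\bigl(\sum_{j} y_j x_j\bigr) = \theta(y\cdot x)$, as claimed.

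Next I would record that $\rho_{\theta,\lambda}$ is a bijection of $F_q^n$ onto itself, being invertible because $\theta\in\Aut(F_q)$ and multiplication by $\lambda\in F_q^\ast$ are both invertible. By hypothesis $\rho_{\theta,\lambda}(C)\subseteq C$, and since $C$ is finite this forces $\rho_{\theta,\lambda}(C) = C$; hence every $c\in C$ can be written as $c = \rho_{\theta,\lambda}(x)$ with $x\in C$. With this in hand the conclusion is immediate: given any $y\in C^\perp$ and any $c\in C$, choose $x\in C$ with $c = \rho_{\theta,\lambda}(x)$, and apply the identity to obtain
\begin{equation*}
    \rho_{\theta,\lambda^{-1}}(y)\cdot c = \rho_{\theta,\lambda^{-1}}(y)\cdot\rho_{\theta,\lambda}(x) = \theta(y\cdot x) = \theta(0) = 0,
\end{equation*}
since $y\in C^\perp$ and $x\in C$. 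As $c\in C$ was arbitrary, $\rho_{\theta,\lambda^{-1}}(y)\in C^\perp$, so $C^\perp$ is $\theta$-$\lambda^{-1}$-cyclic.

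There is no genuinely hard step here; the only delicate point is the bookkeeping in the key identity, specifically the wrap-around cancellation. It is precisely the asymmetry $\lambda^{-1}$-on-one-side versus $\lambda$-on-the-other that makes this cancellation work, and it is exactly this asymmetry that explains why the dual is $\lambda^{-1}$-constacyclic rather than $\lambda$-constacyclic. I would emphasize that no extra hypothesis on $\theta$ and $\lambda$ (such as $\theta(\lambda)=\lambda$) is required for this statement, in contrast to the earlier quasi-cyclic reduction theorem.
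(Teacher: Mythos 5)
Your proof is correct and takes essentially the same route as the paper: both arguments rest on the identity $\rho_{\theta,\lambda^{-1}}(y)\cdot\rho_{\theta,\lambda}(x)=\theta(y\cdot x)$ (with the wrap-around cancellation of $\lambda^{-1}$ against $\lambda$) together with the observation that every codeword of $C$ is the $\rho_{\theta,\lambda}$-image of a codeword of $C$. You are in fact slightly more careful than the paper, which asserts this last fact directly from the definition of a $\theta$-$\lambda$-cyclic code, whereas you justify it via injectivity of $\rho_{\theta,\lambda}$ and finiteness of $C$.
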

\begin{proof}
    For any $z=(z_0,z_1,\dots,z_{n-1}) \in C^\perp,$ we need to prove that $$\tilde{z}=\left(\lambda^{-1}\theta(z_{n-1}),\theta(z_0),\dots,\theta(z_{n-2})\right) \in C^\perp.$$ Since $C$ is a \tlcycliccode, for any $y\in C,$ it can be written as $$y= \left(\lambda\theta(c_{n-1}),\theta(c_0),\dots,\theta(c_{n-2})\right),$$ where $(c_0,c_1,\dots,c_{n-1}) \in C.$ Thus $ \tilde{z} \cdot y= 0,$ and by the arbitrariness of $y$ we get $\tilde{z}\in C^\perp.$
\end{proof}

If $C$ is a nontrivial \tlcycliccode\ of length $n$ over $F_q$, then $C^\perp$ is a $\theta$-$\lambda^{-1}$-cyclic code of length $n$ over $F_q$ by Theorem \ref{dualcodesoverF}. Naturally, $C^\perp$ also has generator skew polynomial $g^\perp(x),$ and since $C$ uniquely determines $C^\perp,$ it follows that $g^\perp(x)$ is uniquely determined by the generator skew polynomial $g(x)= a_0+a_1x+\dots + a_{n-k}x^{n-k}$ of $C.$ In fact, if $g^\perp(x)$ corresponds to the element $c=(c_0,c_1,\dots,c_{n-1}),$ then since $g^\perp(x)$ is degree $k$ and monic we get: $c_k = 1, c_i=0, k<i<n,$ only $c_0,c_1,\dots ,c_{k-1}$ is still need to be determined, and then the system of linear equations $Gc\trans =0$ is uniquely solved for $c_i, 0\leq i\leq k-1,$ where $G$ is given by the Eq. \eqref{eq:genmat}. Thus, the coefficients of $g^\perp(x)$ can be expressed by the coefficients of $g(x)$, but the formula is a bit complicated and is omitted here.

Although it is complicated to give a specific expression for $g^\perp(x)$ in terms of the coefficients of $g(x)$ directly, there are simple formulas to write down the relationship between $g(x)$ and $g^\perp(x)$ indirectly. Let us first prove a technical lemma, originally given by \citep[Lemma 2,][]{boucher2011note} of Boucher et al.

\begin{lemma} \label{lem:dualpoly}
    Let $\lambda \in F_q^\ast,$  $\theta \in \Aut(F_q),$ if $g(x)$ is monic with degree $n-k,$ and $x^n- \lambda = h(x)g(x)$ in $F_q[x;\theta],$ then $x^n - \theta^{ -k}(\lambda) = g(x)\lambda^{-1}h(x)\theta^{-k}(\lambda). $
\end{lemma}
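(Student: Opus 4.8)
The plan is to reduce the claimed identity to a single commutation relation about $g(x)$ and then settle that relation. First I would record the structural facts I will lean on. Since $g(x)$ is monic of degree $n-k$ and $x^n-\lambda = h(x)g(x)$, comparing leading terms shows $h(x)$ is monic of degree $k$, and comparing constant terms gives $h_0g_0=-\lambda$. Also $F_q[x;\theta]$ is an integral domain (the leading coefficient of a product is a $\theta$-twist of the product of leading coefficients, hence nonzero), so the right division algorithm of the cited Theorem II.11 yields right cancellation: $A(x)g(x)=B(x)g(x)$ with $g\neq 0$ forces $A=B$.

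Next I would rewrite the target in a cleaner form. Right-multiplying the claimed equation $x^n-\theta^{-k}(\lambda)=g(x)\lambda^{-1}h(x)\theta^{-k}(\lambda)$ by $\theta^{-k}(\lambda^{-1})$ and using the rule $x^n c=\theta^n(c)x^n$ collapses the trailing constants and turns the goal into the relation
\[
    g(x)\lambda^{-1}h(x) = \theta^{n-k}(\lambda^{-1})x^n - 1,
\]
which I call $(\star)$. Now right-multiply $(\star)$ by $g(x)$ and substitute $h(x)g(x)=x^n-\lambda$ on the left: the left side becomes $g(x)\lambda^{-1}x^n-g(x)$ and the right side becomes $\theta^{n-k}(\lambda^{-1})x^ng(x)-g(x)$. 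By right cancellation, $(\star)$ — and hence the whole lemma — is equivalent to the commutation identity
\[
    g(x)\lambda^{-1}x^n = \theta^{n-k}(\lambda^{-1})x^n g(x),
\]
which I call $(\dagger)$. This is a statement about $g$ alone: writing $g(x)=\sum_i g_ix^i$, identity $(\dagger)$ reads $g_i\theta^i(\lambda^{-1})=\theta^{n-k}(\lambda^{-1})\theta^n(g_i)$ for every $i$, and it holds trivially for the leading coefficient $g_{n-k}=1$.

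The main obstacle is proving $(\dagger)$ for the interior coefficients, i.e. showing that being a monic right divisor of $x^n-\lambda$ forces this rigid relation between $g_i$ and $\theta^n(g_i)$. (As a reassurance that the constants are chosen correctly, the constant term of $(\star)$ is already forced: it equals $g_0\lambda^{-1}h_0=\lambda^{-1}(g_0h_0)=-1$ by $g_0h_0=-\lambda$ and the commutativity of $F_q$.) Conceptually $(\dagger)$ is the transfer of the factor $g$ from the right to the left, which is exactly what the twisting constants $\lambda^{-1}(\,\cdot\,)\theta^{-k}(\lambda)$ are designed to implement. To carry this out cleanly I would introduce the skew reciprocal polynomial $f^{*}(x)=\sum_{i}\theta^{i}(f_{d-i})x^{i}$ for $\deg f=d$, together with a product rule of the form $(fg)^{*}=\Theta^{\deg f}(g^{*})\,f^{*}$, where $\Theta$ denotes the ring automorphism applying $\theta$ to every coefficient; applying $\ast$ to $x^n-\lambda=h(x)g(x)$ converts the right factor into a left factor and, after tracking the $\theta$-powers, produces $(\star)$ directly. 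An elementary alternative that avoids new machinery is to prove $(\dagger)$ by descending induction on $i$ using the coefficient equations $\sum_{i+j=m}h_j\theta^j(g_i)=\delta_{m,n}-\delta_{m,0}\lambda$ coming from $h(x)g(x)=x^n-\lambda$; there I expect the bookkeeping of the $\theta$-powers to be the only delicate point.
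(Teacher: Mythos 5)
There is a genuine gap: your reduction to the commutation identity $(\dagger)$ is correct and clean, but $(\dagger)$ itself is never proved, and $(\dagger)$ is not a peripheral detail --- it \emph{is} the lemma. To see this, write $g(x)=\sum_i a_ix^i$ and $g_\theta(x)=\sum_i \theta^n(a_i)x^i$, so that $x^ng(x)=g_\theta(x)x^n$. Substituting this into the right side of $(\dagger)$ and cancelling $x^n$ on the right (legitimate, by the same cancellation property you invoke) shows that $(\dagger)$ is equivalent to
\begin{equation*}
    \theta^{n-k}(\lambda)\,g(x) \;=\; g_\theta(x)\,\lambda,
\end{equation*}
which is exactly the identity at the heart of the matter: it encodes how being a monic right divisor of $x^n-\lambda$ rigidly ties each $g_i$ to $\theta^n(g_i)$. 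Your argument transports the problem to this point and then stops, explicitly deferring it (``the main obstacle is proving $(\dagger)$'') to two strategies that are sketched but not executed.

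Neither sketch is a safe fallback as stated. The descending induction on the coefficient equations $\sum_{i+j=m}h_j\theta^j(g_i)=\delta_{m,n}-\delta_{m,0}\lambda$ faces a structural difficulty: those equations couple the $g_i$ with the $h_j$, while $(\dagger)$ concerns $g$ alone; the equations with $m\geq n-k$ merely define the $h_j$ recursively in terms of $g$, and the remaining ones are constraints on $g$ from which no inductive step for $(\dagger)$ is evident --- this is far more than ``bookkeeping of the $\theta$-powers.'' The skew-reciprocal route does work, but the product rule $(fg)^{*}=\Theta^{\deg f}(g^{*})f^{*}$ is itself an unproven lemma whose verification is a computation of the same size as the whole problem. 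For comparison, here is how the paper closes exactly this gap, in a form you could splice directly into your framework: left-multiply $x^n-\lambda=h(x)g(x)$ by $g_\theta(x)$ and use $x^ng(x)=g_\theta(x)x^n$ to get $\left(x^n-g_\theta(x)h(x)\right)g(x)=g_\theta(x)\lambda$; the right side has degree $n-k=\deg g$, so $x^n-g_\theta(x)h(x)$ is a constant, and comparing leading coefficients identifies that constant as $\theta^{n-k}(\lambda)$; hence $\theta^{n-k}(\lambda)g(x)=g_\theta(x)\lambda$, which is precisely your $(\dagger)$. With that degree argument inserted, your proof is complete; without it, the substantive content of the lemma remains unproven.
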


\begin{proof}
    Let $g(x) = a_0 + a_1 x + \dots + a_{n-k}x^{n-k},$ denote $g_\theta(x) = \theta^{n}(a_0) + \theta^{n}(a_1) x + \dots + \theta^{n}(a_{n-k}) x^{n-k},$ then $x^n g(x) = g _\theta(x) x^n,$
    and thus
    \begin{align*}
        \left(x^n - g_\theta(x) h(x) \right) g(x) & = x^n g(x) - g_\theta(x) h(x)g(x) \\\
                                                    & = g_\theta(x) \left(x^n - h(x)g(x) \right) \\\
                                                    & = g_\theta(x)\lambda.
    \end{align*}
    From the fact that both sides of the equation have the same degree and the same coefficient of the highest term we get $x^n - g_\theta(x) h(x) = \theta^{n-k}(\lambda)$ and $\theta^{n-k}(\lambda) g(x) = g_\theta(x)\lambda,$ so $x^n - \theta^{n-k}(\lambda)$ $\theta^{n-k}(\lambda) = g_\theta(x)h(x) = \theta^{n-k}(\lambda) g(x) \lambda^{-1} h(x), $ multiple the left side of the equation by $\theta^{n-k}(\lambda^{-1})$ the right side by $\theta^{-k} (\lambda)$ to get $x^n - \theta^{-k}(\lambda) = g(x) \lambda^{-1}h(x) \theta^{-k}(\lambda). $
\end{proof}

The following theorem is derived from \citep[Theorem 1,][]{boucher2011note} by Boucher et al. and can also be found in \citep[Proposition 1,][]{valdebenito2018dual} by Valdebenito et al.
\begin{theorem} \label{polynomialofdualcodes}
    Let $\theta \in \Aut(F_q),$  $\lambda \in F_q^\ast,$  $C$ be a \tlcycliccode\ of length $n$ over $F_q$ and $C$ is not $\{0\}$ and $F_q^{n},$  $g(x)= a_0+a_1x + \dots + a_{n-k}x^{ n-k}$ is the generator skew polynomial of $C,$ $x^n - \lambda=h(x) g(x),$ denote $\hbar(x) = \lambda^{-1}h(x) \theta^{-k}(\lambda) = b_0 + b_1x + \dots + b_kx^k,$  $\hbar^{\ast} (x) = b_ k + \theta(b_{k-1}) x + \dots + \theta^k(b_0) x^k,$
    then the generator skew polynomial of $C^\perp$ is $ g^\perp(x) = \theta^k(b_0^{-1}) \hbar^{\ast}(x). $
\end{theorem}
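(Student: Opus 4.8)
The plan is to prove the statement by the orthogonality method: I will exhibit $g^\perp(x)$ as an explicit element of $(C^\perp)^x$ that is monic of the correct degree, and then invoke the uniqueness of the generator skew polynomial to conclude. By Theorem \ref{dualcodesoverF}, $C^\perp$ is a $\theta$-$\lambda^{-1}$-cyclic code, and since $\dim C = k$ we have $\dim C^\perp = n-k$; consequently its generator skew polynomial has degree $n-(n-k)=k$ and is the unique monic skew polynomial of minimal degree $k$ lying in $(C^\perp)^x$. Thus it suffices to check two things: that $g^\perp(x)=\theta^k(b_0^{-1})\hbar^\ast(x)$ is monic of degree $k$, and that the length-$n$ vector it represents lies in $C^\perp$.

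For the degree and normalization, I would first feed $x^n-\lambda=h(x)g(x)$ into Lemma \ref{lem:dualpoly} to obtain the identity $x^n-\theta^{-k}(\lambda)=g(x)\hbar(x)$, where $\hbar(x)=\lambda^{-1}h(x)\theta^{-k}(\lambda)=\sum_{m=0}^{k}b_mx^m$. Comparing degrees and leading coefficients on both sides (recall $g$ is monic of degree $n-k$) shows $\hbar$ is monic of degree $k$, so $b_k=1$; comparing constant terms shows $b_0\neq0$, since the constant term $-\theta^{-k}(\lambda)$ of the left side is nonzero. Hence $\hbar^\ast(x)=\sum_{j=0}^{k}\theta^j(b_{k-j})x^j$ genuinely has degree $k$ with leading coefficient $\theta^k(b_0)$, and $g^\perp(x)=\theta^k(b_0^{-1})\hbar^\ast(x)$ has leading coefficient $\theta^k(b_0^{-1})\theta^k(b_0)=1$; it is monic of degree $k$, as required.

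The heart of the argument is the membership $g^\perp(x)\in(C^\perp)^x$. Here I would use that $C^x$ is spanned by the coefficient vectors of $g(x),xg(x),\dots,x^{k-1}g(x)$, which are exactly the rows of the matrix $G$ in \eqref{eq:genmat}, since $x^ig(x)=\sum_\ell \theta^i(a_\ell)x^{\ell+i}$ has degree below $n$ and needs no reduction modulo $x^n-\lambda$. Therefore the vector $\mathbf v$ associated with $g^\perp$, whose $j$-th entry is $\theta^k(b_0^{-1})\theta^j(b_{k-j})$, lies in $C^\perp$ if and only if it is orthogonal to each row of $G$. Writing out $\mathbf v\cdot G_i=\sum_\ell \theta^k(b_0^{-1})\,\theta^{i+\ell}(b_{k-i-\ell})\,\theta^i(a_\ell)$, then using commutativity of $F_q$ and the reindexing $m=k-i-\ell$, I would collapse this to $\theta^k(b_0^{-1})\,\theta^i\!\big(\sum_{\ell+m=k-i}a_\ell\theta^\ell(b_m)\big)$, in which the inner sum is precisely the coefficient of $x^{k-i}$ in the product $g(x)\hbar(x)$. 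Since $1\leq k-i\leq n-1$ for $0\leq i\leq k-1$ (using $1\leq k\leq n-1$, which holds because $C$ is nontrivial), the identity $g(x)\hbar(x)=x^n-\theta^{-k}(\lambda)$ forces every such coefficient to vanish, so $\mathbf v\cdot G_i=0$ for all $i$ and $g^\perp(x)\in(C^\perp)^x$.

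Combining the two checks, $g^\perp(x)$ is a monic degree-$k$ element of $(C^\perp)^x$, hence by uniqueness it is the generator skew polynomial of $C^\perp$. I expect the main obstacle to be the bookkeeping in the inner-product computation: one must track the twists $\theta^{i+\ell}$ and $\theta^i$ carefully and exploit the commutativity of $F_q$ to factor a single $\theta^i$ out of the whole sum, so that the remaining expression is recognizable as a coefficient of $g(x)\hbar(x)$. The secondary point requiring care is confirming that degree and monicity, together with the dimension-based uniqueness of the generator skew polynomial, really do pin $g^\perp$ down. An alternative route that avoids the index juggling would be to invoke a multiplicativity rule for the skew-reciprocal operation applied to $g(x)\hbar(x)=x^n-\theta^{-k}(\lambda)$, but the direct orthogonality computation keeps the argument self-contained.
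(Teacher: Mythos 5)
Your proposal is correct and follows essentially the same route as the paper: both apply Lemma \ref{lem:dualpoly} to get $x^n-\theta^{-k}(\lambda)=g(x)\hbar(x)$, identify the vanishing coefficients of $x^{k-i}$ ($0\leq i\leq k-1$) in that product with the orthogonality of the vector of $\hbar^{\ast}(x)$ against the rows of the generator matrix $G$ in Eq.~\eqref{eq:genmat}, and then pin down $g^\perp(x)$ by the degree count $\dim C^\perp=n-k$ and uniqueness of the monic minimal-degree generator. Your version merely runs the computation in the reverse direction (expanding the inner products and reindexing, rather than comparing coefficients and applying $\theta^i$) and adds the explicit checks that $b_k=1$ and $b_0\neq 0$, which the paper leaves implicit.
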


\begin{proof}
    For $i>n-k,$ let $a_i =0.$ For $j<0,$ let $b_j =0.$
    By Lemma \ref{lem:dualpoly} we know that $x^n - \theta^{-k}(\lambda) = g(x) \hbar(x),$
    comparing the coefficients of $x^k$ on both sides of the equation gives
    \begin{equation*}
        a_0b_k + a_1\theta(b_{k-1}) + \dots + a_k\theta^k (b_0) = 0.
    \end{equation*}
    Comparing the coefficients of $x^{k-1}$ yields
    \begin{equation*}
        a_0b_{k-1} + a_1\theta(b_{k-2})+ \dots + a_{k-1}\theta^{k-1}(b_0) = 0,
    \end{equation*} act with $\theta$ to get
    \begin{equation*}
        \theta(a_0)\theta(b_{k-1}) + \theta(a_1)\theta^2(b_{k-2}) + \dots + \theta(a_{k-1})\theta^{k}(b_0) = 0.
    \end{equation*}
    And so on, comparing the coefficients, and then using the $\theta$ action, we get $k$ equations. The last one is to compare the coefficients of $x$ to get $a_0b_1 + a_1 \theta(b_0) = 0,$ so $\theta^{k-1}(a_0) \theta^{k-1}(b_1) + \theta^{k-1}(a_1)\theta^{k}(b_0) = 0.$

    Observe that a generator matrix of $C,$ i.e., a parity-check matrix $G$ of $C^\perp$ yields $$\left(b_k,\theta(b_{k-1}),\theta^2(b_{k-2}),\dots,\theta^{k}(b_0),0,\dots,0\right)$$ belonging to $C^\perp,$ where $ G$ is specified in Eq. \eqref{eq:genmat}. Notice that $\dim C^\perp = n-k,$ so the degree of the generator skew polynomial of $C^\perp$ is $k,$ and the degree of $\hbar^{\ast}(x) = b_k + \theta(b_{k-1}) x + \dots + \theta^k(b_0) x^k$ is $k,$
    thus $\theta^k(b_0^{-1}) \hbar^{\ast}(x)$ is the generator skew polynomial of $C^\perp.$
\end{proof}

Assume the same as the above theorem, let $\tilde{g}(x) = \theta^{-k}(a_k) + \theta^{-(k-1)}(a_{k-1})x + \dots + a_0x^k,$ then it is straightforward to verify that $1-\lambda x^n = \hbar^\ast (x) \tilde{g}(x). $ One can use skew polynomials to determine whether a linear code has an inclusion relation with its dual.

\begin{theorem}\label{containingcondition}
    Let $\theta \in \Aut(F_q),$  $\lambda \in F_q^\ast,$ if $C$ is a \tlcycliccode\ of length $n$ over $F_q$, $g(x)$ is the generator skew polynomial of $C,$ $ x^n -\lambda=h(x)g(x) ,$ $x^n - \theta^{ -k}(\lambda) = g(x) \hbar(x),$ $1-\lambda x^n = \hbar^\ast (x) \tilde{g}(x). $
    Suppose $C$ is not $\{0\}$ and $F_q^{n},$ then $C^\perp \subset C$ if and only if $\hbar^\ast (x) \hbar(x) $ is right divisible by $x^n - \theta^{-k}(\lambda) $ and $\lambda^{-1} = \lambda.$ Similarly, $C \subset C^\perp$ if and only if $g (x) \tilde{g}(x) $ is right divisible by $x^n - \lambda^{-1}$ and $\lambda^{-1} = \lambda.$
\end{theorem}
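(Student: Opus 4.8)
The plan is to reduce each inclusion to a right-divisibility relation between generator skew polynomials in a common module, and then to convert that relation into the stated divisibility of products using the factorizations already on hand. The appearance of the condition $\lambda^{-1}=\lambda$ is the organizing point: by Theorem~\ref{dualcodesoverF} the dual $C^\perp$ is a $\theta$-$\lambda^{-1}$-cyclic code, so $(C^\perp)^x$ is a left submodule of $F_q[x;\theta]/\langle x^n-\lambda^{-1}\rangle$, whereas $C^x$ lives in $R_n=F_q[x;\theta]/\langle x^n-\lambda\rangle$. Comparing the two codes in the module language only makes sense once these ambient modules coincide, i.e. once $\lambda=\lambda^{-1}$, which is exactly why that equality is part of the criterion.

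First I would show $\lambda=\lambda^{-1}$ is forced whenever a nontrivial inclusion holds. Suppose $C^\perp\subset C$ with $C^\perp\neq\{0\}$; pick a nonzero $z\in C^\perp$ and, by applying powers of $\rho_{\theta,\lambda^{-1}}$ (which keep $z$ inside $C^\perp$), shift a nonzero entry into the last coordinate so that $z_{n-1}\neq0$. Since $z\in C^\perp\subset C$, with $C$ being $\theta$-$\lambda$-cyclic and $C^\perp$ being $\theta$-$\lambda^{-1}$-cyclic, both $\rho_{\theta,\lambda}(z)$ and $\rho_{\theta,\lambda^{-1}}(z)$ lie in $C$, so their difference $\bigl((\lambda-\lambda^{-1})\theta(z_{n-1}),0,\dots,0\bigr)$ lies in $C$. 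If $\lambda\neq\lambda^{-1}$ this is a nonzero scalar multiple of $(1,0,\dots,0)$, and repeatedly applying $\rho_{\theta,\lambda}$ forces every standard basis vector into $C$, so $C=F_q^n$, contradicting nontriviality. The same argument with the roles of $C$ and $C^\perp$ interchanged settles the $C\subset C^\perp$ case.

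With $\lambda=\lambda^{-1}$ in force, $C^x$ and $(C^\perp)^x$ are both submodules of the single module $R_n$, and the generators $g(x)$ (degree $n-k$) and $g^\perp(x)=\theta^k(b_0^{-1})\hbar^\ast(x)$ (degree $k$) both right-divide $x^n-\lambda$. Using the description of $\langle g\rangle$ as the degree-$<n$ representatives right-divisible by $g$, submodule containment becomes right-divisibility of generators: $C^\perp\subset C$ iff $g$ right-divides $g^\perp$, and $C\subset C^\perp$ iff $g^\perp$ right-divides $g$. Since $g^\perp$ and $\hbar^\ast$ differ only by the nonzero scalar $\theta^k(b_0^{-1})$, these are equivalent to $g$ right-dividing $\hbar^\ast$ and to $\hbar^\ast$ right-dividing $g$, respectively.

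Finally I would convert these into the claimed product-divisibilities using $x^n-\theta^{-k}(\lambda)=g(x)\hbar(x)$ (Lemma~\ref{lem:dualpoly}) and $1-\lambda x^n=\hbar^\ast(x)\tilde g(x)$, together with the fact that $F_q[x;\theta]$ is a domain so that right factors can be cancelled. For example, $\hbar^\ast=ug$ gives $\hbar^\ast\hbar=u\,g\hbar=u\bigl(x^n-\theta^{-k}(\lambda)\bigr)$; conversely $\hbar^\ast\hbar=w\bigl(x^n-\theta^{-k}(\lambda)\bigr)=w g\hbar$ yields $\hbar^\ast=wg$ after cancelling $\hbar$ on the right (note $\hbar\neq0$). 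This gives $C^\perp\subset C$ iff $x^n-\theta^{-k}(\lambda)$ right-divides $\hbar^\ast\hbar$; the symmetric computation, starting from $\hbar^\ast\tilde g=-\lambda\bigl(x^n-\lambda^{-1}\bigr)$ and cancelling $\tilde g$ (whose leading coefficient $a_0$ is nonzero because $x^n-\lambda$ has nonzero constant term), gives $C\subset C^\perp$ iff $x^n-\lambda^{-1}$ right-divides $g\tilde g$. I expect the main obstacle to be the necessity of $\lambda=\lambda^{-1}$, where one must argue a genuine inclusion cannot survive for two codes constacyclic with different scalars; a secondary source of care is the bookkeeping of the noncommutative left and right scalar multiplications throughout the divisibility conversions.
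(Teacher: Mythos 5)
Your proposal is correct, and the divisibility--translation part of it (code inclusion $\Leftrightarrow$ right divisibility of generators $\Leftrightarrow$ right divisibility of the products, via cancellation in the domain $F_q[x;\theta]$ and the identities $g(x)\hbar(x)=x^n-\theta^{-k}(\lambda)$, $\hbar^\ast(x)\tilde g(x)=1-\lambda x^n$) is essentially the same machinery the paper uses. Where you genuinely diverge is in how the condition $\lambda^{-1}=\lambda$ is forced. The paper extracts it at the \emph{end} of the necessity argument, purely inside the polynomial formalism: from $C^\perp\subset C$ it gets that $g(x)$ right divides $\hbar^\ast(x)$, hence $g(x)$ right divides both $x^n-\lambda$ and $x^n-\lambda^{-1}$, hence their difference $\lambda-\lambda^{-1}$, a constant; since $\deg g\geq 1$ (because $C\neq F_q^n$), that constant must be zero. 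You instead prove $\lambda=\lambda^{-1}$ \emph{first}, by an elementary codeword-shifting argument: take $z\in C^\perp$ with $z_{n-1}\neq 0$, observe that $\rho_{\theta,\lambda}(z)$ and $\rho_{\theta,\lambda^{-1}}(z)$ both land in $C$, and that their difference $\bigl((\lambda-\lambda^{-1})\theta(z_{n-1}),0,\dots,0\bigr)$ would generate all of $F_q^n$ under shifts if it were nonzero. Both arguments are sound. Yours has the organizational advantage that, once $\lambda=\lambda^{-1}$ is in hand, $C^x$ and $(C^\perp)^x$ sit in a single ambient module $R_n$ and the whole proof becomes a clean chain of equivalences (it also explains conceptually why two constacyclic codes with genuinely different scalars cannot be nontrivially nested); the paper's argument is shorter and never leaves the skew-polynomial language, but its sufficiency direction has to silently invoke $\langle x^n-\lambda^{-1}\rangle=\langle x^n-\lambda\rangle$, a point your ordering makes explicit. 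You also spell out the right-cancellation steps (noting $\hbar\neq 0$ and that the leading coefficient $a_0$ of $\tilde g$ is nonzero) that the paper leaves implicit, and you actually carry out the $C\subset C^\perp$ case rather than declaring it similar.
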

\begin{proof}
    Only the case $C^\perp \subset C$ will be proved, and the proof for $C \subset C^\perp$ is similar.

    Sufficiency. If $\hbar^\ast(x) \hbar(x) = \ell(x)(x^n-\theta^{-k}(\lambda)),$ then $\hbar^\ast(x) = \ell(x)g(x),$ so
    $$F_q[x;\theta]\left(\hbar^\ast(x) + \left\langle x^n-\lambda^{-1} \right\rangle \right) = F_q[x;\theta]\left(\hbar^\ast(x) + \left\langle x^n-\lambda\right\rangle \right) \subset F_q[x;\theta] \left(g(x) + \left\langle x^n-\lambda\right\rangle\right),$$
    Thus $\Phi \left(C^\perp\right) \subset \Phi(C),$ Therefore $C^\perp \subset C.$

    Necessity.    If $C^\perp \subset C,$ then $\hbar^\ast ( x) + \left\langle x^n-\lambda\right\rangle = \ell(x)\left(g(x) + \left\langle x^n-\lambda\right\rangle\right),$ therefore
    \begin{gather*}
        \hbar^\ast(x) = \ell(x)g(x) + u(x)(x^n - \lambda)  
                     = \left(\ell(x) + u(x)h(x)\right)g(x),
    \end{gather*}
    thus $g(x)$ right divides $\hbar^\ast(x),$ and
    \begin{gather*}
        \hbar^\ast (x) \hbar(x) =\left(\ell(x) + u(x)h(x)\right) g(x)\hbar(x)                             
                                = \left(\ell(x) + u(x)h(x)\right) \left(x^n - \theta^{-k}(\lambda)\right),
    \end{gather*}
    Thus $\hbar^\ast(x)\hbar(x) $ is right divisible by $x^n - \theta^{-k}(\lambda)$.

    We have already obtained that $g(x)$ right divides $\hbar^\ast(x),$ and that $\hbar^\ast(x)$ right divides $x^n - \lambda^{-1},$ by the property of the generator skew polynomial of a skew constacyclic code. Combine with $g(x)$ divides $x^n - \lambda$ to get $g(x)$ right divides $\lambda - \lambda^{-1},$ so $\lambda = \lambda^{-1}. $
\end{proof}

\section{Skew constacyclic codes over \texorpdfstring{$\fring$}{R}}
\subsection{Structure of \texorpdfstring{$\fring$}{R} and its automorphism}
Suppose $R$ is a finite commutative semisimple ring, then according to the Wedderburn-Artin Theorem, $R$ is isomorphic to the direct product of finite fields, that is, $R \cong \prod_{i=1}^s F_{q_i },$ where $F_{q_i}$ is a finite field. If we put the same direct product terms together, the above isomorphism can be written as
$$R \cong \prod_{j=1}^\ell \left(\prod_{k=1}^{t_j} F_{q_j} \right),$$
So we can see that the finite commutative semisimple ring $\prod_{i=1}^t F_q$ is the basic structural component of a general finite commutative semisimple ring, and the ring to be considered in this paper is exactly this kind of finite commutative semisimple ring.

Let $p$ be a prime number, $r$ be a positive integer, $q=p^r,$ denote a finite field with $q$ elements by $F_q$. In this paper, unless otherwise specified, let
\begin{equation*}
    \fring = \prod_{i=1}^t F_q = \left\{(x_1,x_2,\dots,x_t)\mid x_i \in F_q,1\leq i\leq t \right\},
\end{equation*}
the addition of two elements in the ring $\fring$ is the component addition, and the multiplication is the component multiplication. Let $e_1=(1,0,\dots,0),e_2=(0,1,0,\dots,0),\dots,e_t=(0,\dots,0,1)$, then the elements in  $\fring$ are of the form $x=x_1e_1 + \dots + x_te_t,$ where $x_i\in F_q.$ In the following, we always use $F_q$ linear sum of $e_1,e_2,\dots,e_t$ to represent elements in $\fring,$ instead of being written as vectors, and by convention $1e_i = e_i, 1\leq i\leq t.$

Under the above convention, for any $x=\sum_{i=1}^t x_ie_i,y=\sum_{i=1}^t y_ie_i\in \fring ,$  the addition and multiplication formulas are:
\begin{equation*}
    x+y = \sum_{i=1}^t (x_i+y_i)e_i, \qquad x\cdot y = \sum_{i=1}^t (x_iy_i)e_i.
\end{equation*}
The additive zero element of the ring $\fring$ is $0_{\fring} = 0e_1 + 0e_2 + \dots + 0e_t,$ the multiplication identity is $1_{\fring} = 1e_1 + 1e_2 + \dots + 1e_t.$ Write $0_{\fring}$ as $0.$ 

Obviously, the number of elements in $\fring$ is $|\fring| = q^t.$ The characteristic of $\fring$ is $\cchar(\fring) = \cchar(F_q) = p.$
Thus, for any $ x=\sum_{j=1}^t x_j e_j \in \fring, x^q=x_1^q e_1 + \dots + x_t^q e_t
    = x_1e_1 + \dots + x_te_t = x.$
According to $x\in U(\fring) $ if and only if $ x_j \in F_q^\ast, 1\leq j \leq t.$ We get \begin{equation*}
    U(\fring) \cong \underbrace{F_q^\ast \times \dots \times F_q^\ast}_{t\text{ terms}}
    \cong \underbrace{\mathbb{Z}_{q-1} \times \dots \times \mathbb{Z}_{q-1}}_{t\text{ terms}}.
\end{equation*}
and $|U(\fring)| = (q-1)^t.$

\begin{proposition}
    The ring $\fring$ is a principal ideal ring, with $2^t$ ideals and $t$ maximal ideals.
    The maximal ideal is $(1_{\fring}-e_j), 1\leq j\leq t.$
\end{proposition}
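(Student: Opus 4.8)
The plan is to exploit the orthogonal idempotent decomposition $1_{\fring} = e_1 + \dots + e_t$ to reduce the ideal theory of $\fring$ to that of the factors $F_q$. The central observation is that every ideal of $\fring$ splits as a direct sum of ideals coming from the individual factors, and once this is established the three assertions (principal, $2^t$ ideals, $t$ maximal ideals) follow by a counting argument together with a quotient computation.

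First I would prove the structural lemma: if $I$ is any ideal of $\fring$, then $I = \bigoplus_{i=1}^t (I \cap F_q e_i)$. Given $x = \sum_i x_i e_i \in I$, since $I$ is an ideal and each $e_i \in \fring$, we have $e_i x = x_i e_i \in I$, and clearly $x_i e_i \in F_q e_i$; summing over $i$ recovers $x$, which gives $I \subseteq \sum_i (I \cap F_q e_i)$, while the reverse inclusion is immediate. The sum is direct because the $F_q e_i$ have pairwise trivial intersection. This is the one step where the ideal property is genuinely used, and hence the point requiring the most care, though it is not a serious obstacle.

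Next, note that $F_q e_i \cong F_q$ is a field, so its only ideals are $\{0\}$ and $F_q e_i$. Hence each ideal $I$ is completely determined by the subset $S = \{\, i : I \cap F_q e_i = F_q e_i \,\} \subseteq \{1, \dots, t\}$, with $I = \bigoplus_{i \in S} F_q e_i$. This yields a bijection between ideals of $\fring$ and subsets of $\{1, \dots, t\}$, and therefore exactly $2^t$ ideals. Each such ideal is principal, generated by the idempotent $\sum_{i \in S} e_i$, since $\bigl(\sum_{i \in S} e_i\bigr)\fring = \bigoplus_{i \in S} F_q e_i$; this establishes that $\fring$ is a principal ideal ring.

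Finally, for the maximal ideals I would pass to the quotient: $\fring \big/ \bigoplus_{i \in S} F_q e_i \cong \prod_{i \notin S} F_q$, which is a field precisely when the complement of $S$ is a singleton. Thus the maximal ideals are exactly those with $S = \{1, \dots, t\} \setminus \{j\}$ for some $j$, of which there are $t$, and the one omitting index $j$ is generated by $\sum_{i \neq j} e_i = 1_{\fring} - e_j$. As noted, no step presents a real difficulty; the whole argument rests on the idempotent decomposition lemma, after which everything reduces to counting subsets and identifying when the residue ring is a field.
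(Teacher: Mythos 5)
Your proof is correct and takes essentially the same approach as the paper: decompose an arbitrary ideal along the idempotents (your $I \cap F_q e_i$ coincides with the paper's $Ie_j$), observe each piece is $\{0\}$ or $F_q e_i$ since $F_q e_i$ is a field, count subsets to get $2^t$, and exhibit the idempotent generator $\sum_{i \in S} e_i$. The only difference is that you substantiate the maximal-ideal claim via the quotient $\fring \big/ \bigoplus_{i \in S} F_q e_i \cong \prod_{i \notin S} F_q$, a detail the paper dismisses as obvious.
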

\begin{proof}
    Assuming that $I$ is an ideal of $\fring$, then $I = Ie_1 + \dots + Ie_t.$
    For any $ x=\sum_{i=1}^t x_ie_i \in \fring,$ where $x_i\in F_q,$ we have
    $xe_j = x_je_j \in F_qe_j,$ so $Ie_j \subset F_qe_j.$ If
    $Ie_j \neq \left\{0\right\},$ then there are non-zero elements $x_j \in F_q$
    such that $x_je_j \in Ie_j,$ so $e_j \in x_j^{-1}Ie_j = Ie_j,$
    So $F_qe_j = Ie_j.$

    Each $Ie_j, 1\leq j \leq t$ has two ways of choices, and different ways of choices correspond to different ideals,
    therefore there are $2^t$ ideals. If $I\neq \left\{0\right\} ,$ then the sum of $e_j$ such that $Ie_j\neq \left\{0\right\}$ is a generator of $I,$ which shows that $\fring$ is a principal ideal ring. The conclusion about the maximal ideal is obvious.
\end{proof}
\begin{definition}
    Suppose $R$ is a ring with identity $1_R,$ and $f$ is a mapping from $R$ to $R.$ If for any $x,y\in R,$ there is $f(x+y )=f(x)+f(y),\ f(xy) = f(x)f(y),\ f(1_R)=1_R,$ then $f$ is a ring homomorphism of $R.$ Moreover, if $f$ is a bijective homomorphism, then $f$ is said to be an automorphism of $R.$ Let $\Aut(R)$ be the group of automorphism of $R.$
\end{definition}

Before discussing $\Aut(\fring),$ some preparations need to be done.
\begin{definition}
    Let $e$ be an element in a commutative ring $R,$ if $e^2=e,$ then $e$ is an idempotent of $R.$ Let $e_1,e_2 \in R,$ if $e_1e_2=0,$ then $e_1$ is said to be orthogonal to $e_2.$ For an idempotent $e \in R,$ if $e=e_1+e_2,\ e_1^2=e_1 ,\ e_2^2=e_2,\ e_1e_2=0$ implies $e_1=0$ or $e_2=0,$ then $e$ is a primitive idempotent of $R.$
\end{definition}
\begin{lemma}
   The element $e_j,1\leq j\leq t,$
are pairwise orthogonal primitive idempotents of $\fring.$
\end{lemma}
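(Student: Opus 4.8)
The plan is to verify the three claimed properties directly from the explicit description of the elements $e_j$, since everything reduces to componentwise arithmetic in $\fring = \prod_{i=1}^t F_q$. Recall that $e_j$ is the element with a $1$ in the $j$-th coordinate and $0$ elsewhere, and that multiplication in $\fring$ is componentwise, as recorded earlier by the formula $x\cdot y = \sum_{i=1}^t (x_iy_i)e_i$.

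First I would check idempotency: for each $j$, the product $e_j\cdot e_j$ multiplies the coordinate vectors of $e_j$ with itself componentwise, and since $1\cdot 1 = 1$ and $0\cdot 0 = 0$ in $F_q$, we get $e_j^2 = e_j$. Next, for orthogonality, take $j\neq k$ and observe that $e_j$ and $e_k$ have their unique nonzero coordinate in different positions, so in every coordinate at least one factor is $0$; hence $e_j\cdot e_k = 0$. This establishes that the $e_j$ are pairwise orthogonal idempotents.

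The only step requiring a little more thought is primitivity, and this is where I would concentrate the argument. Suppose $e_j = f_1 + f_2$ with $f_1^2 = f_1$, $f_2^2 = f_2$, and $f_1 f_2 = 0$. Writing $f_1 = \sum_i a_i e_i$ in coordinates, the idempotent condition $f_1^2 = f_1$ forces each $a_i^2 = a_i$ in $F_q$, so $a_i \in \{0,1\}$; likewise for $f_2$. The equation $e_j = f_1 + f_2$ read coordinatewise says the coordinates sum to $e_j$'s coordinates, and $f_1 f_2 = 0$ says $f_1$ and $f_2$ have disjoint support. In the $j$-th coordinate the two $\{0,1\}$-values must add to $1$, so exactly one of $f_1,f_2$ carries the $j$-th coordinate; in every other coordinate both are $0$ since their sum is $0$ and each lies in $\{0,1\}$. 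Thus one of $f_1,f_2$ equals $e_j$ and the other equals $0$, which is exactly the primitivity condition from the definition.

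I do not expect any genuine obstacle here; the statement is essentially a bookkeeping exercise in the componentwise ring structure. The one point deserving care is not to overlook that idempotency in $F_q$ genuinely forces coordinates into $\{0,1\}$ (using that $F_q$ is a field, so $a^2 = a$ gives $a(a-1)=0$ and hence $a\in\{0,1\}$), since this is what makes the decomposition argument for primitivity go through cleanly.
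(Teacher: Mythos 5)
Your proof takes exactly the same route as the paper's: componentwise verification of idempotency and orthogonality, followed by a coordinatewise analysis of a decomposition $e_j = f_1 + f_2$ to establish primitivity. There is, however, one step whose stated justification does not hold over every $F_q$: you claim that in each coordinate $i \neq j$ both values vanish ``since their sum is $0$ and each lies in $\{0,1\}$,'' but when $p = 2$ (a case the paper certainly includes, with examples over $F_2$ and $F_4$) one has $1 + 1 = 0$, so the sum condition together with membership in $\{0,1\}$ does not exclude both values being $1$. The repair is already contained in your own preceding sentence: the relation $f_1 f_2 = 0$ gives $a_i b_i = 0$ in every coordinate, so at least one of the two values is $0$, and then the sum being $0$ forces the other to be $0$ as well. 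This is precisely how the paper's proof proceeds: it records the full list of relations $x_i^2 = x_i$, $y_i^2 = y_i$, $x_i y_i = 0$ together with the sum conditions, and draws the conclusion from all of them jointly rather than from the sum and idempotency alone. With that one-line adjustment your argument is complete and identical in substance to the paper's.
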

\begin{proof}
    From the definition of the multiplication of elements in $\fring,$ $e_1,e_2,\dots,e_t$ are idempotents and pairwise orthogonal to each other, and then it is only necessary to show that these idempotents are primitive. The following gives the proof of $e_1$ is a primitive idempotent, and the proof that other idempotents are also primitive idempotents is similar.

If $e_1 = x+y,\ x^2=x,\ y^2 = y,\ xy=0,$ we need to show that $x=0$ or $y=0.$
Let $x=\sum_{j=1}^t x_je_j,\ y=\sum_{j=1}^t y_je_j,$
Then $e_1 = \sum_{j=1}^t (x_j+y_j)e_j,$
So $1 = x_1 + y_1,\ 0=x_j + y_j,\ 2\leq j\leq t.$
From $x^2 = x,\ y^2=y,\ xy=0$ we get
$x_j^2=x_j,\ y_j^2 = y_j,\ x_jy_j=0,\ 1\leq j\leq t.$
From the previous relations, we can get $x=e_1,\ y=0$ or $x=0,\ y=e_1.$
\end{proof}

\begin{lemma} \label{lem:morphismonbasis}
    The set $\{e_1,e_2,\dots,e_t\}$ is stable under the action of any automorphism of $\fring.$
\end{lemma}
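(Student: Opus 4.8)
The plan is to give an intrinsic, automorphism-invariant description of the set $\{e_1,\dots,e_t\}$, namely as the collection of all \emph{nonzero primitive idempotents} of $\fring$; once this is established, stability under any automorphism becomes a formal consequence.

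First I would determine all idempotents of $\fring$. Writing an arbitrary element as $x=\sum_{i=1}^t x_ie_i$, the condition $x^2=x$ amounts to $x_i^2=x_i$ in $F_q$ for each $i$, and since $F_q$ is a field this forces $x_i\in\{0,1\}$. Hence the idempotents are exactly the elements $\sum_{i\in S}e_i$ for subsets $S\subseteq\{1,\dots,t\}$. Among these I would show the nonzero primitive ones are precisely $e_1,\dots,e_t$: each $e_j$ is primitive by the preceding lemma, while any idempotent $\sum_{i\in S}e_i$ with $|S|\ge 2$ splits as $e_{i_0}+\sum_{i\in S\setminus\{i_0\}}e_i$, a sum of two nonzero orthogonal idempotents, and so fails to be primitive.

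Next I would verify that any automorphism $\phi$ of $\fring$ carries the set of nonzero primitive idempotents into itself. That $\phi$ sends idempotents to idempotents and orthogonal pairs to orthogonal pairs is immediate from $\phi(e)^2=\phi(e^2)=\phi(e)$ and $\phi(a)\phi(b)=\phi(ab)=\phi(0)=0$, while injectivity of $\phi$ ensures nonzero elements map to nonzero elements. For primitivity the key is to use that $\phi^{-1}$ is again an automorphism: if $\phi(e)=u+v$ with $u,v$ orthogonal idempotents, then applying $\phi^{-1}$ gives $e=\phi^{-1}(u)+\phi^{-1}(v)$, a decomposition into orthogonal idempotents, so primitivity of $e$ forces $\phi^{-1}(u)=0$ or $\phi^{-1}(v)=0$, whence $u=0$ or $v=0$. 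Thus $\phi(e)$ is primitive whenever $e$ is.

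Finally, combining these steps, each $\phi(e_j)$ is a nonzero primitive idempotent and hence lies in $\{e_1,\dots,e_t\}$, so $\phi$ maps this finite set into itself injectively and therefore bijectively, which is the asserted stability. The only substantive point is the characterization of $\{e_1,\dots,e_t\}$ as exactly the set of nonzero primitive idempotents; everything afterward is formal preservation of ring-theoretic properties, and I do not expect any real obstacle. The mild care needed is simply to route the primitivity argument through $\phi^{-1}$ rather than $\phi$, since primitivity is a condition on decompositions of $\phi(e)$ that one must pull back to $e$.
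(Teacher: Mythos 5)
Your proof is correct and takes essentially the same route as the paper's: the paper likewise shows that the coordinates of an idempotent lie in $\{0,1\}$ (applied to $\sigma(e_1)$) and rules out more than one nonzero coordinate by pulling the decomposition back through $\sigma^{-1}$ to contradict primitivity of $e_1$. Your repackaging of these two ingredients as an intrinsic characterization (the $e_j$ are exactly the nonzero primitive idempotents) followed by formal automorphism-invariance is the same mathematics in a slightly cleaner arrangement.
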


\begin{proof}
    For any $\sigma \in Aut(\fring),$ we have
\begin{gather*}
    \sigma(e_1) + \dots + \sigma(e_t) = \sigma(e_1+\dots + e_t)
    =\sigma(1_{\fring}) = 1_{\fring}.\\
    \sigma(e_i)\sigma(e_j) = \sigma(e_ie_j) = \sigma(0) = 0, \quad 1\leq i,j\leq t.\\
    \sigma(e_i)\sigma(e_i) = \sigma(e_ie_i)=\sigma(e_i),\quad 1\leq i \leq t.
\end{gather*}
From the above equation it can be seen that $\sigma(e_1),\dots,\sigma(e_t)$
is also $t$ pairwise orthogonal idempotents of $\fring$ and sums to $1_{\fring}.$
Because $\sigma(e_1) \in \fring,$ $\sigma(e_1) = x_1e_1+\dots + x_te_t,$ where
$x_j\in F_q, 1\leq j \leq t.$
From $e_1^2 = e_1,$ then
$\sigma(e_1)\sigma(e_1) = \sigma(e_1),$
so $x_j^2 = x_j, 1\leq j\leq t,$ so $x_j=0$ or $x_j=1,$
$ 1\leq j\leq t.$ These $x_1,\dots,x_t$ cannot be all $0,$ otherwise
$e_1 = 0.$ and there is only one non-zero element in $x_1,\dots,x_t$, otherwise
$e_1=\sigma^{-1}(x_1e_1+\dots+x_te_t)$ decomposes into a sum of nontrivial orthogonal idempotents, which contradicts with $e_1$ is primitive. Similarly, $\sigma(e_j ) \in \{e_1,e_2,\dots,e_t\}, 1\leq j\leq t.$

Note again that $\sigma$ is bijective, so the action of $\sigma$ on $e_1,\dots,e_t$ is equivalent to a permutation on
$e_1,\dots,e_t.$
\end{proof}

\begin{remark}
    Denote $\theta(e_i) = e_{\bar{\theta}(i)},$ then $\bar{\theta}$ is naturally a permutation of $1,2,\dots,t,$ namely $\bar {\theta}\in S_t,$ where $S_t$ represents the symmetric group of $t$ elements, and $\bar{\theta}$ is uniquely determined by $\theta.$
\end{remark}

Use $F_q1_{\fring}$ to represent the set $\{a1_{\fring} \colon a\in F_q \}, $ then $F_q1_{\fring}$ is a subring of $\fring$ isomorphic to finite fields $F_q,$ but $F_q$ is not included in $\fring $ if $t>1.$ For any $\sigma \in \Aut(\fring),$ since $\sigma(1_{\fring} )=1_{\fring},$ we have
$\sigma|_{F_p1_{\fring}} = \id.$ For any $a\in F_p,$
$
    \sigma(ae_1) = \sigma(a1_{\fring}e_1)=\sigma(a1_{\fring})\sigma(e_1) = a1_{\fring} \sigma(e_1) = a\sigma(e_1).
$
So when $r=1,$ i.e., $q=p^r=p,$ once you know how $\sigma$ acts on $\{e_1,e_2,\dots,e_t\},$ the way $\sigma$ acts on $\fring$ is determined, and $\Aut(\fring) \cong
    S_t.$

For $r>1,$ i.e., $q\neq p$, you also need to know the action of $\sigma$ on $F_q1_{\fring}$ to completely determine $\sigma.$ In fact, if $\sigma $ is a injective ring homomorphism from $F_{q}1_{\fring}$ to $\fring,$ and the action of $\sigma$ on $e_1,\dots,e_t$ is permutation on
$e_1,\dots,e_t,$ then an automorphism of $\fring$ can be naturally obtained $\tilde{\sigma}: \alpha_1e_1 + \dots + \alpha_te_t \mapsto \sigma(\alpha_11_ {\fring})\sigma(e_1) + \dots + \sigma(\alpha_t1_{\fring})\sigma(e_t), \alpha_j\in F_q, 1\leq j \leq t.$

We can actually consider it a little simpler. In fact, in order to determine $\sigma,$ we only need to determine how $\sigma$ acts on each $F_qe_i,$ where $ 1\leq i\leq t. $ The general form of automorphism of $\fring$ is given below.

\begin{theorem} \label{thm:autofR}
    Let $\theta \in \Aut(\fring),$ then $\theta \left(\sum_{j=1}^t \alpha_j e_j \right) = \sum_{j=1}^t \psi_j(\alpha_j)\theta(e_j),$ where $\alpha_j \in F_q,$ $ \psi_j\in \Aut(F_q),$ $1\leq j\leq t.$
\end{theorem}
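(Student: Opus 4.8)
The plan is to exploit Lemma~\ref{lem:morphismonbasis} in order to reduce the whole statement to understanding how $\theta$ acts on each ``coordinate subring'' $F_q e_j$. First I would record that, by Lemma~\ref{lem:morphismonbasis}, $\theta$ permutes $\{e_1,\dots,e_t\}$, so we may write $\theta(e_j) = e_{\bar{\theta}(j)}$ for a permutation $\bar{\theta}\in S_t$. The key observation is that $\theta$ carries the subring $F_q e_j = \{\alpha e_j \colon \alpha\in F_q\}$ into $F_q e_{\bar{\theta}(j)}$: since $e_j$ is idempotent we have $\alpha e_j = (\alpha e_j)e_j$, whence $\theta(\alpha e_j) = \theta(\alpha e_j)\theta(e_j) = \theta(\alpha e_j)\,e_{\bar{\theta}(j)} \in F_q e_{\bar{\theta}(j)}$.

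Next I would argue that the restriction $\theta|_{F_q e_j}\colon F_q e_j \to F_q e_{\bar{\theta}(j)}$ is a ring isomorphism. It is additive and multiplicative as a restriction of $\theta$, and it sends the multiplicative identity $e_j$ of the subring $F_q e_j$ to the identity $e_{\bar{\theta}(j)}$ of $F_q e_{\bar{\theta}(j)}$; injectivity is inherited from $\theta$, and since both subrings have exactly $q$ elements the restriction is a bijection. Both $F_q e_j$ and $F_q e_{\bar{\theta}(j)}$ are fields isomorphic to $F_q$, via $\alpha\mapsto \alpha e_j$ and $\beta\mapsto \beta e_{\bar{\theta}(j)}$ respectively. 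Transporting $\theta|_{F_q e_j}$ across these two identifications yields a map $\psi_j$ of $F_q$ to itself, determined by the equation $\theta(\alpha e_j) = \psi_j(\alpha)\,e_{\bar{\theta}(j)} = \psi_j(\alpha)\,\theta(e_j)$ for all $\alpha\in F_q$. Concretely, $\psi_j$ is the composite $F_q \xrightarrow{\sim} F_q e_j \xrightarrow{\theta} F_q e_{\bar{\theta}(j)} \xrightarrow{\sim} F_q$ of field isomorphisms, hence $\psi_j\in\Aut(F_q)$.

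Finally I would conclude by additivity: for an arbitrary element $\sum_{j=1}^t \alpha_j e_j$ of $\fring$, we get $\theta\bigl(\sum_{j=1}^t \alpha_j e_j\bigr) = \sum_{j=1}^t \theta(\alpha_j e_j) = \sum_{j=1}^t \psi_j(\alpha_j)\,\theta(e_j)$, which is exactly the claimed formula.

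As for difficulty, no single step is genuinely hard; the entire content is the structural remark that $\theta$ respects the decomposition $\fring = \bigoplus_{j=1}^t F_q e_j$ coming from the idempotents. The one point that deserves care is the passage to $\psi_j$: one must check it is a bona fide automorphism of $F_q$ and not merely a ring isomorphism between two distinct copies of $F_q$. This is precisely what the composite of field isomorphisms above settles, since an isomorphism of $F_q$ onto $F_q$ is automatically an element of $\Aut(F_q)$, so $\psi_j$ is well defined and lies in $\Aut(F_q)$ as required.
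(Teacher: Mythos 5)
Your proof is correct and follows essentially the same route as the paper: both use the idempotent relation $\alpha e_j=(\alpha e_j)e_j$ to show $\theta(F_qe_j)\subset F_q\theta(e_j)$, upgrade this to equality by injectivity/cardinality, define $\psi_j$ by $\theta(\alpha e_j)=\psi_j(\alpha)\theta(e_j)$, and verify $\psi_j\in\Aut(F_q)$ before concluding by additivity. Your only cosmetic difference is that you invoke Lemma~\ref{lem:morphismonbasis} explicitly (which the paper uses implicitly in the step $\fring\theta(e_j)=F_q\theta(e_j)$) and package the automorphism check as a composite of field isomorphisms rather than verifying the homomorphism axioms directly.
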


\begin{proof}
    Because 
    $$\theta(F_qe_j ) =\theta(F_qe_j e_j) \subset \theta(F_qe_j)\theta(e_j) \subset \fring \theta(e_j) = F_q\theta(e_j),$$
    and $\theta$ is bijective, we have $\theta(F_qe_j) = F_q\theta(e_j).$ So for any $a\in F_q,$ there is a unique $a^\prime \in F_q$ such that $\theta(ae_j) = a^\prime \theta(e_j), $ then we can define a mapping from $F_q$ to $F_q,$ i.e., $\psi_j: a\mapsto a^\prime.$
    For $\theta(ae_j) = \psi_j(a)\theta(e_j), \theta(be_j) = \psi_j(b)\theta(e_j), $ according to $\theta$ is a ring isomorphism of $\fring,$ we get $\psi_j (a+b)=\psi_j(a) + \psi_j(b),\psi_j(ab) = \psi_j(a)\psi_j(b), \psi_j(1)=1$ and $\psi_j$ is bijective, so $\psi_j \in \Aut(F_q).$
\end{proof}

\begin{remark}
    By imitating the above proof, we can determine automorphism of general finite commutative semisimple ring, and the automorphism group of $\fring$ to be given below can also be extended to general finite commutative semisimple ring.
\end{remark}

Let $G_1,G_2$ be a subset of $\Aut(\fring)$, where the element $\theta\in G_1$ satisfies $\theta(e_i) = e_i, 1\leq i \leq t,$ that is to say, $\theta$ acts on $e_1,\dots,e_t$ $\bar{\theta}$ as the identity map $\id.$ The element $\theta\in G_2$ satisfies $\theta\mid_{F_{q}1_{\fring}}= \id, $ that is to say that the elements in $G_2$ are the automorphisms of $\fring$ that keep the elements in $F_{q}1_{\fring}$ stable.

\begin{theorem}
    The $G_1,G_2$ defined above are subgroups of $\Aut(\fring),$ and
    \begin{equation*}
        G_1 \cong \underbrace{\mathbb{Z}_r\times \dots \times \mathbb{Z}_r }_{t \text{ terms}},\qquad G_2 \cong S_t, \qquad \Aut(\fring) /G_1 \cong G_2.
    \end{equation*}
\end{theorem}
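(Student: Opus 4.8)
The plan is to read every automorphism through the decomposition supplied by Theorem \ref{thm:autofR}: each $\theta\in\Aut(\fring)$ is completely recorded by a pair of data, namely the permutation $\bar\theta\in S_t$ (from the Remark, $\theta(e_j)=e_{\bar\theta(j)}$) together with the tuple of field automorphisms $(\psi_1,\dots,\psi_t)\in\Aut(F_q)^t$, where $\theta(\sum_j\alpha_je_j)=\sum_j\psi_j(\alpha_j)e_{\bar\theta(j)}$. The entire theorem then reduces to tracking how these two pieces of data compose, so I would organize the proof around the ``forgetful'' map $\Phi\colon\Aut(\fring)\to S_t,\ \theta\mapsto\bar\theta$.

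First I would verify that $\Phi$ is a group homomorphism: since $\theta\theta'(e_j)=\theta(e_{\overline{\theta'}(j)})=e_{\bar\theta(\overline{\theta'}(j))}$, one gets $\overline{\theta\theta'}=\bar\theta\,\overline{\theta'}$. By definition $G_1=\{\theta:\bar\theta=\id\}=\ker\Phi$, so $G_1$ is a subgroup and, being a kernel, is automatically normal in $\Aut(\fring)$; this avoids any direct conjugation computation. Next I would compute $G_1$ itself. For $\theta\in G_1$ the formula of Theorem \ref{thm:autofR} collapses to $\theta(\sum_j\alpha_je_j)=\sum_j\psi_j(\alpha_j)e_j$, so the assignment $\theta\mapsto(\psi_1,\dots,\psi_t)$ defines a map $G_1\to\Aut(F_q)^t$. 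Composing two such automorphisms multiplies the tuples componentwise, making this a homomorphism; it is injective because the $\psi_j$ determine $\theta$, and surjective because any tuple yields an automorphism via the construction $\tilde\sigma$ preceding the theorem. Since $F_q=F_{p^r}$ has prime field $F_p$, every field automorphism fixes $F_p$, hence $\Aut(F_q)=\mathrm{Gal}(F_q/F_p)$ is cyclic of order $r$ generated by the Frobenius $x\mapsto x^p$, i.e. $\Aut(F_q)\cong\mathbb{Z}_r$. Therefore $G_1\cong(\mathbb{Z}_r)^t$.

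For $G_2$ the crux is translating the condition $\theta|_{F_q1_{\fring}}=\id$ into the language of the decomposition. Writing $a1_{\fring}=\sum_j ae_j$ and applying $\theta$ gives $\theta(a1_{\fring})=\sum_j\psi_j(a)e_{\bar\theta(j)}=\sum_i\psi_{\bar\theta^{-1}(i)}(a)e_i$; requiring this to equal $\sum_i ae_i$ for every $a\in F_q$ forces $\psi_{\bar\theta^{-1}(i)}=\id$ for all $i$, hence $\psi_j=\id$ for every $j$. Thus $G_2$ consists exactly of the ``pure permutation'' automorphisms $\sum_j\alpha_je_j\mapsto\sum_j\alpha_je_{\bar\theta(j)}$, which are clearly closed under composition and inversion and so form a subgroup. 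The restriction $\Phi|_{G_2}\colon G_2\to S_t$ is injective (if $\bar\theta=\id$ and all $\psi_j=\id$ then $\theta=\id$) and surjective (each permutation $\pi$ gives such an automorphism), so $G_2\cong S_t$.

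Finally, since $\Phi|_{G_2}$ is already onto $S_t$, the map $\Phi$ itself is surjective, and the first isomorphism theorem yields $\Aut(\fring)/G_1=\Aut(\fring)/\ker\Phi\cong\mathrm{Im}\,\Phi=S_t\cong G_2$. I expect the main obstacle to be the characterization of $G_2$: one must carefully carry the permutation $\bar\theta$ through the indexed sum and argue that fixing $F_q1_{\fring}$ pointwise forces all the $\psi_j$ to be trivial \emph{simultaneously}; everything else is bookkeeping about how the pair $(\bar\theta,(\psi_j))$ multiplies.
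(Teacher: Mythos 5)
Your proposal is correct and follows essentially the same route as the paper: the paper likewise uses the map $\theta\mapsto(\psi_1,\dots,\psi_t)$ to identify $G_1$ with $\Aut(F_q)^t\cong(\mathbb{Z}_r)^t$, the map $\theta\mapsto\bar\theta$ to identify $G_2$ with $S_t$, and the surjective homomorphism $\Aut(\fring)\to S_t$ with kernel $G_1$ plus the first isomorphism theorem for the quotient. The only difference is that you spell out details the paper leaves as ``directly verify,'' notably that fixing $F_q1_{\fring}$ pointwise forces all $\psi_j=\id$.
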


\begin{proof}
    By Theorem \ref{thm:autofR} we can define a map from $G_1$ to $\prod_{i=1}^t \Aut(F_q)$: $\theta \mapsto (\psi_1,\dots,\psi_t ),$ we can directly verify that this is a group isomorphism, and then combine with $\Aut(F_q)\cong \mathbb{Z}_r$ to get the first isomorphism.

    For $\theta\in \Aut(\fring),$ from Lemma \ref{lem:morphismonbasis} we know that the action of $\theta$ on $e_1,\dots,e_t$ is equivalent to permutation on
$e_1,\dots,e_t,$ recall $\theta(e_i) = e_{\bar{\theta}(i)},$ then $\bar{\theta}$ is naturally a permutation of $1,2,\dots, t$. Define a map from $G_2$ to $S_t:$ $\theta \mapsto \bar{\theta},$ we can directly verify that this is a group isomorphism.

Consider the surjective homomorphism $f$ from $\Aut(\fring)$ to $S_t\colon \theta \mapsto \bar{\theta},$ then $\ker f = G_1,$ so $\Aut(\fring)/G_1 \cong S_t \cong G_2.$
\end{proof}

\begin{corollary}
    The number of elements in $\Aut(\fring)$ is $ |\Aut(\fring)| = r^t \times t!.$ \qed
\end{corollary}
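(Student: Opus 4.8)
The plan is to read the claim directly off the preceding theorem, since the order of a finite group is determined once we know the order of a normal subgroup together with the order of the associated quotient.

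First I would collect the two order computations furnished by that theorem. The isomorphism $G_1 \cong \mathbb{Z}_r \times \dots \times \mathbb{Z}_r$ with $t$ factors gives $|G_1| = r^t$, and the isomorphism $\Aut(\fring)/G_1 \cong G_2 \cong S_t$ gives $|\Aut(\fring)/G_1| = |S_t| = t!$. These are the only two inputs needed.

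Next I would apply the index formula. Since $G_1$ arises in the proof of the preceding theorem as the kernel of the surjective homomorphism $f\colon \Aut(\fring)\to S_t$, it is normal, so $\Aut(\fring)/G_1$ is a genuine group whose order equals the index $[\Aut(\fring):G_1]$. Lagrange's theorem then yields
\[
    |\Aut(\fring)| = |G_1|\cdot [\Aut(\fring):G_1] = |G_1|\cdot |\Aut(\fring)/G_1| = r^t\cdot t!,
\]
which is exactly the asserted count.

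I do not expect any genuine obstacle here: the statement is an immediate numerical corollary of the structural theorem, which is presumably why it carries no separate proof. As a sanity check (and an alternative route that sidesteps the quotient), one can instead use $\Aut(\fring) = G_1 G_2$ together with $G_1 \cap G_2 = \{\id\}$. Indeed, an automorphism lying in both subgroups fixes every $e_j$, so its induced permutation is trivial, and fixes $F_q 1_{\fring}$ pointwise, so each $\psi_j$ in Theorem \ref{thm:autofR} is trivial; hence it must be the identity. The product formula $|G_1 G_2| = |G_1|\,|G_2|/|G_1 \cap G_2|$ then reproduces the value $r^t \cdot t!$, in agreement with the computation above.
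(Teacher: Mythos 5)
Your main argument coincides with the paper's (implicit) proof: the corollary carries no written proof precisely because it is immediate from the preceding theorem via $|\Aut(\fring)| = |G_1|\cdot|\Aut(\fring)/G_1| = r^t\cdot t!$, exactly the Lagrange/index computation you give. One caution about your ``sanity check'': in the paper the decomposition $\Aut(\fring) = G_1G_2$ is proved \emph{after} this corollary, and its proof uses this very count ($|G_1G_2| = |G_1||G_2|/|G_1\cap G_2| = |\Aut(\fring)|$), so that alternative route would be circular within the paper's logical order --- but since you present it only as a secondary check and your primary argument stands on its own, the proposal is correct.
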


Now we can give the structure of $\Aut(\fring).$

\begin{theorem}
     The automorphism group of $\fring$ is $\Aut(\fring) = G_1G_2. $
\end{theorem}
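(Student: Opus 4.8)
The plan is to deduce this almost immediately from the structural facts already assembled in the previous theorem, namely that $f\colon \Aut(\fring)\to S_t$, $\theta\mapsto\bar\theta$, is a surjective group homomorphism with $\ker f=G_1$, and that the restriction $f|_{G_2}\colon G_2\to S_t$ is a group isomorphism. In other words, $G_2$ is a complement to the normal subgroup $G_1=\ker f$, and the factorization $\Aut(\fring)=G_1G_2$ is exactly the internal product statement that follows whenever a normal subgroup admits a complement.

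Concretely, I would argue as follows. Take an arbitrary $\theta\in\Aut(\fring)$ and let $\bar\theta=f(\theta)\in S_t$. Since $f|_{G_2}$ is an isomorphism onto $S_t$, there is a unique $g_2\in G_2$ with $f(g_2)=\bar\theta$. Now consider $\theta g_2^{-1}$. Because $f$ is a homomorphism,
\begin{equation*}
    f\!\left(\theta g_2^{-1}\right)=f(\theta)f(g_2)^{-1}=\bar\theta\,\bar\theta^{-1}=\id,
\end{equation*}
so $\theta g_2^{-1}\in\ker f=G_1$. Setting $g_1=\theta g_2^{-1}\in G_1$, we obtain $\theta=g_1g_2\in G_1G_2$. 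As $\theta$ was arbitrary, $\Aut(\fring)\subseteq G_1G_2$; the reverse inclusion is trivial since $G_1,G_2\subseteq\Aut(\fring)$. Hence $\Aut(\fring)=G_1G_2$.

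To round off the picture I would also record that $G_1\cap G_2=\{\id\}$: if $\theta\in G_1\cap G_2$, then $\theta(e_j)=e_j$ for all $j$, so by Theorem \ref{thm:autofR} we have $\theta(a1_{\fring})=\sum_{j=1}^t\psi_j(a)e_j$, while $\theta\in G_2$ forces $\theta(a1_{\fring})=a1_{\fring}=\sum_{j=1}^t a e_j$; comparing components gives $\psi_j(a)=a$ for all $a\in F_q$ and all $j$, whence $\theta=\id$. Together with the normality of $G_1$, this upgrades the factorization to an internal semidirect product $\Aut(\fring)=G_1\rtimes G_2$, consistent with the isomorphism $\Aut(\fring)/G_1\cong G_2$ already established.

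I do not anticipate a genuine obstacle here, since all the heavy lifting—surjectivity of $f$, the identification $\ker f=G_1$, and the isomorphism $f|_{G_2}$—was completed in the preceding theorem; the only point requiring a little care is bookkeeping with the order of composition, but this is harmless because $G_1$ is normal, so $G_1G_2=G_2G_1$ regardless. The essential content is simply the observation that $G_2$ is a set-theoretic complement of $\ker f$ inside $\Aut(\fring)$.
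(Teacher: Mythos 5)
Your proof is correct, but it takes a genuinely different route from the paper's. The paper argues by counting: it notes $G_1G_2\subset\Aut(\fring)$ and $G_1\cap G_2=\{\id\}$, then invokes the product formula $|G_1G_2|=|G_1||G_2|/|G_1\cap G_2|$ together with the previously computed order $|\Aut(\fring)|=r^t\cdot t!$ to conclude that the subset $G_1G_2$ exhausts the whole group. Your argument instead exploits the quotient structure directly: since $f\colon\theta\mapsto\bar\theta$ is surjective with $\ker f=G_1$ and restricts to an isomorphism on $G_2$, every $\theta$ factors as $(\theta g_2^{-1})g_2$ with $\theta g_2^{-1}\in\ker f$. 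This buys you two things the paper's proof does not: it never uses the order of $\Aut(\fring)$ (so it would survive in settings where counting is unavailable), and it exhibits the factorization constructively, yielding the semidirect product $\Aut(\fring)=G_1\rtimes G_2$ explicitly rather than just the set equality. Conversely, the paper's counting proof is shorter given the corollary on $|\Aut(\fring)|$ already in hand. One further point in your favor: the paper dismisses $G_1\cap G_2=\{\id\}$ with ``by definition,'' whereas strictly speaking one needs the shape of a general automorphism from Theorem \ref{thm:autofR} to see that a trivial permutation combined with triviality on $F_q1_{\fring}$ forces all the $\psi_j$ to be the identity; your explicit component-wise verification closes that small gap.
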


\begin{proof}
    First, $G_1G_2 \subset \Aut(\fring).$ By definition, $G_1 \cap G_2 = \{\id\},$ so $|G_1G_2| = |G_1||G_2|/|G_1\cap G_2| = |\Aut(\fring)|,$ so $\Aut(\fring) = G_1G_2. $
\end{proof}

\subsection{Decomposition of linear codes over \texorpdfstring{$\fring$}{R}}

In this section, we will establish a one-to-one correspondence between a linear code $C$ over $\fring$ and $t$ linear codes $C_1,C_2,\dots,C_t$ over $F_q,$ the initial idea can be traced back to Dougherty et al. \cite{dougherty1999self} on self-dual codes over $\mathbb{Z}_{2k}$. However, for the class of finite commutative semisimple ring we deal with here, this method is widely used. For example, Zhu et al. \cite{zhu2010some} showed how a linear code over $F_2[v]/(v^2-v)$ can be decomposed into linear codes over $F_2.$

First, the definition of linear codes over a finite ring with identity is given.

\begin{definition}
    Let $R$ be a finite ring with identity. If $C$ is a nonempty subset of free module $R^{n}$ of rank $n$ over $R,$ then $C$ is said to be a code over $R$ of length $n,$ and the elements in $C$ are called codewords. Let $x\in C,$ denote the number of nonzero components in $x$ by $\wt_H(x),$ call it the Hamming weight of $x.$ More specifically, if $C$ is a left $R$-submodule of $R^{n},$ then $C$ is said to be a linear code of length $n$ over $R.$
\end{definition}

We can also define inner product and dual code.

\begin{definition}
    Assume $R$ be a finite ring with identity. Let $u=(u_1,u_2,\dots,u_n),$ $v=(v_1,v_2,\dots,v_n) \in R^{n}, $ define the inner product of $u$ and $v$ by $u\cdot v = u_1v_1 + u_2v_2 + \dots + u_nv_n.$ If $C$ is a code of length $n$ over $R,$ define the dual code of $C $ is 
    \begin{equation*}
        C^{\perp} = \left\{z\in R^{n } : z\cdot x =0,\ \forall\ x\in C \right\}.
    \end{equation*}
\end{definition}

The ring $\fring$ can be viewed as a $t$ dimensional vector space over $F_q.$ In fact, there is an isomorphism of vector spaces. $\varphi: \fring \rightarrow F_q^{t},$ $\sum_{j=1}^t x_je_j
    \mapsto (x_1,x_2,\dots,x_t).$ Then there is a map induced by $\varphi$ from
$\fring^{n}$ to $F_q^{tn}$, and we still use $\varphi$ to represent it,
it maps \begin{equation*}
    \left(\sum_{j=1}^t x_{0j}e_j,\dots,\sum_{j=1}^tx_{n-1,j}e_j\right) \in \fring^{n}
\end{equation*}
 to
\begin{equation*}
    \left(x_{01},x_{11},\dots,x_{n-1,1},x_{02},x_{12},\dots,x_{n-1,2},\dots, x_{0t},x_{1t},\dots,x_{n-1,t}\right) \in F_q^{tn} .
\end{equation*}

\begin{remark} \label{expresscodesbymatrix}
    Observe matrix
    \begin{equation*}
        \begin{pmatrix}
            x_{01} & x_{02} & \dots & x_{0t} \\
            x_{11} & x_{12} & \dots & x_{1t} \\
            \vdots & \vdots & & \vdots \\
            x_{n-1,1} & x_{n-1,2} & \dots & x_{n-1,t}
        \end{pmatrix},
    \end{equation*}
    the preimage of $\varphi$ is the elements in the matrix arranged in rows, and the image is the elements in the matrix arranged in columns.
\end{remark}

\begin{definition}
    Define the weight of the codeword $x$ in $\fring^{n}$ as
    $\wt_{G}(x) = \wt_{H}(\varphi(x)),$
        define the distance between $x,y \in \fring^{n}$ as
    $\mathrm{d}_{G}(x,y) = \wt_{G}(x-y).$ If $C$ is a code over $\fring$ and it has at least two codewords, define the minimum distance of $C$ as $\mathrm{d}_G(C) = \min\{\mathrm{d}_G(x,y) \colon x,y\in C,\ x\neq y \}.$
\end{definition}

\begin{proposition}
    Consider $\fring^{n}$ and $F_{q}^{tn}$
    as vector spaces over $F_q,$ then $\varphi$ is an isomorphism of vector spaces and distance preserving.
\end{proposition}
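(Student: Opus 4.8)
The plan is to treat the two assertions—that $\varphi$ is an $F_q$-linear isomorphism and that it preserves distance—separately, and to deduce the second from the first together with the defining equation $\wt_G(x)=\wt_H(\varphi(x))$. Throughout I would fix the $F_q$-vector space structure on $\fring$ to be the one in which $a\in F_q$ acts as multiplication by $a1_{\fring}$, so that $a\cdot\sum_{j=1}^t x_je_j = \sum_{j=1}^t (ax_j)e_j$; with this convention $e_1,\dots,e_t$ is an $F_q$-basis of $\fring$ and $\dim_{F_q}\fring = t$, whence $\dim_{F_q}\fring^{n}=tn=\dim_{F_q}F_q^{tn}$.

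First I would verify that $\varphi\colon\fring\to F_q^{t}$ is $F_q$-linear and bijective. Linearity is immediate, since addition and scalar multiplication in $\fring$ are both performed component-wise on the $e_j$-coordinates, exactly as in $F_q^{t}$; bijectivity holds because the $e_j$-coordinates of an element are uniquely determined. To pass to the induced map on $\fring^{n}$, I would invoke Remark \ref{expresscodesbymatrix}: the induced $\varphi$ is nothing but reading off the coordinate matrix $(x_{ij})$ by columns rather than by rows. This is the composition of the coordinate-wise isomorphism $\varphi$ applied to each of the $n$ entries with a fixed permutation of the $tn$ coordinates, hence it is again $F_q$-linear. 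Injectivity is clear, since $\varphi(x)=0$ forces every $x_{ij}=0$; combined with the dimension equality $\dim_{F_q}\fring^{n}=tn=\dim_{F_q}F_q^{tn}$, injectivity upgrades to an isomorphism.

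For the distance-preserving claim I would argue directly from the definitions. For any $x,y\in\fring^{n}$, linearity of $\varphi$ gives $\varphi(x-y)=\varphi(x)-\varphi(y)$, and therefore
\[
    \mathrm{d}_{G}(x,y) = \wt_G(x-y) = \wt_H(\varphi(x-y)) = \wt_H(\varphi(x)-\varphi(y)) = \dhamming(\varphi(x),\varphi(y)),
\]
where the second equality is the definition of $\wt_G$ and the last is the definition of the Hamming distance on $F_q^{tn}$. I do not expect any genuinely hard step here: the only points requiring care are the bookkeeping that the ``read by rows / read by columns'' rearrangement on $\fring^{n}$ is a bona fide $F_q$-linear bijection rather than a mere set bijection, and the observation that $\wt_G$—unlike the naive Hamming weight of a vector with entries in $\fring$—counts the nonzero entries of the coordinate matrix, which is precisely what makes the displayed chain of equalities valid and thus turns $\varphi$ into an isometry.
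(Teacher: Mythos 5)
Your proposal is correct and follows essentially the same route as the paper: direct verification of $F_q$-linearity from the component-wise definitions, followed by the identical chain of equalities $\mathrm{d}_{G}(x,y)=\wt_{G}(x-y)=\wt_{H}\left(\varphi(x-y)\right)=\wt_H\left(\varphi(x)-\varphi(y)\right)=\dhamming\left(\varphi(x),\varphi(y)\right)$ for the isometry claim. The only cosmetic difference is that you establish bijectivity via injectivity plus a dimension count (and factor the induced map through a coordinate permutation), whereas the paper simply computes on $\fring^{n}$ directly and declares bijectivity obvious.
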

\begin{proof}
    For any element in $r\in F_q,$ and for any
    \begin{align*} x & =\left(\sum_{j=1}^t x_{0j}e_j,\sum_{j=1}^tx_{1j}e_j,\dots,\sum_{j=1 }^tx_{n-1,j}e_j\right) \in \fring^{n}, \\
        y & =\left(\sum_{j=1}^t y_{0j}e_j,\sum_{j=1}^t y_{1j}e_j,\dots,\sum_{j=1}^t y_{ n-1,j}e_j\right)\in \fring^{n},\end{align*}
    by definition we have \begin{align*}
        \varphi(x+y) & = \left(x_{01}+y_{01},\dots,x_{n-1,1}+y_{n-1,1},\dots,x_{0t} +y_{0t},\dots,x_{n-1,t}+y_{n-1,t}\right) \\
                     & =\varphi(x) + \varphi(y).\\
        \varphi(rx) & = \left(rx_{01},\dots,rx_{n-1,1},\dots,rx_{0,t},\dots,rx_{n-1,t}\right)=r\varphi(x).
    \end{align*}
    So $\varphi$ is a $F_q$ linear map. Obviously, $\varphi$ is bijective, thus $\varphi$ is an isomorphic map. Because
    $$\mathrm{d}_{G}(x,y) = \wt_{G}(x-y) = \wt_{H}\left(\varphi(x-y)\right)=
        \wt_H\left(\varphi(x) - \varphi(y)\right) =
         \dhamming \left(\varphi(x),\varphi(y)\right),$$
    $\varphi$ is distance preserving.
\end{proof}

From the above proposition it follows:
\begin{corollary}
    If $C$ is a linear code of length $n$ over $\fring$ and its minimum distance is $\mathrm{d}_G,$
    then $\varphi(C)$ is a linear code over $F_q$ with length $tn$ and minimum distance $\mathrm{d}_G.$ \qed
\end{corollary}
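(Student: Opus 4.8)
The plan is to invoke the preceding proposition, which already does the substantive work by establishing that $\varphi\colon \fring^{n} \to F_q^{tn}$ is an isomorphism of $F_q$-vector spaces and is distance preserving. Once that is in hand, the corollary splits into two independent claims: that $\varphi(C)$ is a linear code over $F_q$ of length $tn$, and that its minimum distance equals $\mathrm{d}_G$. Since the proposition carries almost all the weight, what remains is essentially a packaging argument.

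First I would check that $C$, being a linear code over $\fring$ --- that is, a left $\fring$-submodule of $\fring^{n}$ --- is automatically an $F_q$-subspace of $\fring^{n}$ when the latter is regarded as a vector space over $F_q$. The key point is that $F_q1_{\fring}$ is a subring of $\fring$ isomorphic to $F_q$, and the $F_q$-scalar action on $\fring^{n}$ is exactly multiplication by the ring elements $a1_{\fring}$ with $a\in F_q$. Because $C$ is closed under multiplication by arbitrary elements of $\fring$ and under addition, it is in particular closed under this scalar action, hence is an $F_q$-subspace. Applying the $F_q$-linear isomorphism $\varphi$ then yields that $\varphi(C)$ is an $F_q$-subspace of $F_q^{tn}$, i.e.\ a linear code over $F_q$ of length $tn$.

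For the minimum distance, I would use that $\varphi$ restricts to a distance-preserving bijection from $C$ onto $\varphi(C)$. Concretely, distinct $x,y\in C$ correspond via $\varphi$ to distinct $\varphi(x),\varphi(y)\in\varphi(C)$ by injectivity, and the distance-preservation identity of the proposition gives $\mathrm{d}_G(x,y)=\dhamming(\varphi(x),\varphi(y))$; surjectivity onto $\varphi(C)$ ensures that every pair of distinct codewords of $\varphi(C)$ arises in this way. Taking minima over all such pairs therefore gives $\dhamming(\varphi(C))=\mathrm{d}_G(C)=\mathrm{d}_G$.

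There is no genuinely hard step here. The only point that merits explicit mention --- and the place where a careless argument could slip --- is the first one, namely confirming that the $\fring$-module structure of $C$ restricts to an $F_q$-vector-space structure through the embedded copy $F_q1_{\fring}$, so that \emph{linear code over $\fring$} genuinely maps to \emph{linear code over $F_q$} rather than to a merely distance-preserving subset.
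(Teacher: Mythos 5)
Your proposal is correct and matches the paper's approach: the paper derives this corollary directly from the preceding proposition (stating only ``From the above proposition it follows'') exactly as you do, with your write-up merely making explicit the two routine details --- that the $\fring$-module structure of $C$ restricts through $F_q1_{\fring}$ to an $F_q$-subspace structure, and that a distance-preserving bijection onto the image preserves minimum distance.
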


\begin{definition}
    Let $A_1, A_2,\dots, A_t$ be non-empty sets, denote
    \begin{align*}
        A_1+A_2+\dots + A_t &= \left\{a_1 +a_2+ \dots + a_t \colon a_i\in A_i,\ 1\leq i \leq t\right\}, \\
        A_1\times A_2\times \dots \times A_t &= \left\{(a_1,a_2,\dots,a_t)\colon a_i\in A_i,\ 1\leq i\leq t\right\}.
    \end{align*}
    For any subset $C$ of $\fring^{n}$ and $1\leq j \leq t$ define
    \begin{equation*}
        C_j = \left\{(x_{0j},x_{1j},\dots,x_{n-1,j}) \in F_q^n : \exists\,\left(\sum_{j=1}^ tx_{0j}e_j,\dots,\sum_{j=1}^t x_{n-1,j}e_j\right) \in C\right\}.
    \end{equation*}
\end{definition}

\begin{definition}
    Let $G$ be a matrix over a finite ring $R,$ and every row vector of $G$ is a codeword in a linear code $C$ over $R.$ If the left $R $-module generated by the row vector of $G$ is $C,$ then the matrix $G$ is said to be a generator matrix of the linear code $C.$
\end{definition}

\begin{theorem} \label{decomposingcodes}      
        Let $C$ be a linear code of length $n$ over $\fring,$ then $C=C_1e_1 + \dots + C_te_t,$ where $C_j$ is a linear code of length $n$ over $F_q,$ $1\leq j \leq t.$
    
        Conversely, if $C_1,C_2,\dots,C_t$ are linear codes of length $n$ over $F_q,$ then $C_1e_1 + \dots + C_te_t$ is a linear code of length $n$ over $\fring.$ 

        If $G_j$ is a generator matrix of $C_j,$ $1\leq j\leq t,$ then
        \begin{equation*}
            \begin{pmatrix}
                G_1e_1 \\
                G_2e_2 \\
                \vdots \\
                G_te_t
            \end{pmatrix},
            \quad \begin{pmatrix}
                G_1 & & & \\
                    &G_2 && \\
                    & & \ddots & \\
                    & & & G_t
            \end{pmatrix}
        \end{equation*}
        are generator matrices of $C=C_1e_1 + \dots + C_te_t$ and $\varphi(C)= C_1\times \dots \times C_t$ respectively,
        $$\mathrm{d}_{G}(C) = \min\left\{\dhamming(C_j): 1\leq j\leq t\right\}.$$
\end{theorem}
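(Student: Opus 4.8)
The plan is to prove the four assertions in order, using the multiplicative structure of $\fring$ and the orthogonality relations $e_ie_j = \delta_{ij}e_i$ throughout. For the first claim, I would start with an arbitrary linear code $C$ over $\fring$ and show $C = C_1e_1 + \dots + C_te_t$ by double inclusion. The inclusion $C_1e_1 + \dots + C_te_t \subset C$ requires showing that if $(x_{0j},\dots,x_{n-1,j}) \in C_j$ for each $j$, then the combined codeword lies in $C$; here the key observation is that for a codeword $x \in C$, multiplying componentwise by the scalar $e_j \in \fring$ (which is legal since $C$ is a left $\fring$-module) extracts the $j$-th coordinate, i.e. $xe_j = \sum_i x_{ij}e_i e_j = \sum_i x_{ij}\delta\, e_j$, isolating the $e_j$-part. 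Summing such projections of (possibly different) codewords recovers any element of $C_1e_1 + \dots + C_te_t$. The reverse inclusion $C \subset C_1e_1 + \dots + C_te_t$ is immediate from the definition of $C_j$. I would also verify that each $C_j$ is itself a linear code over $F_q$: closure under addition and $F_q$-scalar multiplication follows from the projection $x \mapsto xe_j$ being $F_q$-linear and from the fact that $F_qe_j$ is stable, so $C_j \cong Ce_j$ as $F_q$-vector spaces.

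For the converse, I would take arbitrary linear codes $C_1,\dots,C_t$ over $F_q$ and check directly that $C_1e_1 + \dots + C_te_t$ is closed under addition and under left multiplication by any $r = \sum_j r_j e_j \in \fring$. The latter reduces, by orthogonality, to $r\cdot(\sum_j c^{(j)}e_j) = \sum_j (r_j c^{(j)})e_j$ where $r_j c^{(j)} \in C_j$ since $C_j$ is $F_q$-linear and $r_j \in F_q$; this is the essential point and is a short calculation. Submodule closure then follows.

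For the generator matrix claims, I would argue that the rows of the block-diagonal-in-$e_j$ matrix $(G_1e_1, \dots, G_te_t)\trans$ generate $C$ as a left $\fring$-module: by the first part every codeword is an $F_q$-combination (hence an $\fring$-combination) of rows of the $G_j$, tagged by $e_j$, and conversely each tagged row lies in $C$. For the second matrix I would invoke the isomorphism $\varphi$ together with Remark~\ref{expresscodesbymatrix}: applying $\varphi$ sends the $e_j$-component of a codeword to the $j$-th length-$n$ block of coordinates, so the block-diagonal matrix $\mathrm{diag}(G_1,\dots,G_t)$ generates $\varphi(C)$ over $F_q$, its rows being exactly the $\varphi$-images of the rows of the first matrix.

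The minimum-distance formula is where the only genuine subtlety lies, so I expect it to be the main obstacle. The inequality $\mathrm{d}_G(C) \leq \min_j \dhamming(C_j)$ comes from embedding each $C_j$ into $C$ via $c \mapsto ce_j$, which is distance preserving onto its image since $\wt_G(ce_j) = \wt_H(c)$ by orthogonality (the other blocks are zero). For the reverse inequality I would use that $\varphi$ is distance preserving and that $\varphi(C)$ is the "stacked" code with generator $\mathrm{diag}(G_1,\dots,G_t)$: a nonzero codeword of $\fring^n$ has $e_j$-parts lying in the respective $C_j$, and its weight $\wt_G = \sum_j \wt_H$ of the parts is minimized by concentrating all the weight in a single minimum-weight word of whichever $C_j$ achieves the minimum. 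The care needed here is to confirm that the minimizer can indeed be taken supported on one block, i.e. that choosing the $e_{j_0}$-part to be a minimum-weight codeword of the smallest $C_{j_0}$ and all other parts zero produces a legitimate nonzero codeword of $C$ with weight exactly $\min_j \dhamming(C_j)$; this is guaranteed precisely by the direct-sum decomposition $C = C_1e_1 + \dots + C_te_t$ established in the first part, which allows the blocks to be chosen independently.
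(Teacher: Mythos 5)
Your proposal is correct and takes essentially the same approach as the paper: the paper's entire proof is the one-line remark that the theorem ``can be verified according to the definition,'' and your argument is exactly that verification, carried out via the orthogonality relations $e_ie_j=\delta_{ij}e_i$, the projections $x\mapsto xe_j$, and the weight identity $\wt_G\bigl(\sum_j c^{(j)}e_j\bigr)=\sum_j \wt_H\bigl(c^{(j)}\bigr)$. All four assertions (decomposition, converse, the two generator matrices, and the minimum-distance formula) are handled soundly, so there is nothing to add.
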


\begin{proof}
    It can be verified according to the definition.
\end{proof}

\begin{theorem} \label{dualcodesoverR}
    Assuming that $C=C_1e_1 + \dots + C_te_t$ is a linear code over $\fring,$
    then $C^{\perp} = C_1^{\perp} e_1 + \dots + C_t^{\perp}e_t.$
    Moreover, $C\subset C^{\perp}$ if and only if
    $ C_j \subset C_j^{\perp}, 1\leq j \leq t.$ $C^{\perp}\subset C$ if and only if
    $ C_j^{\perp} \subset C_j, 1\leq j \leq t.$

\end{theorem}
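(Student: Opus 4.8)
The plan is to reduce the whole statement to a single component-wise computation of the inner product. Writing a generic pair of vectors in $\fring^{n}$ as $u=\sum_{j=1}^t u^{(j)}e_j$ and $v=\sum_{j=1}^t v^{(j)}e_j$, where $u^{(j)},v^{(j)}\in F_q^{n}$ are the $j$-th ``layers'' (the $j$-th column in the matrix picture of Remark~\ref{expresscodesbymatrix}), I would first expand $u\cdot v$. Using that multiplication in $\fring$ is component-wise and that $e_ie_j=0$ for $i\neq j$ while $e_j^2=e_j$, every cross term drops out and one obtains
\begin{equation*}
    u\cdot v=\sum_{j=1}^t\bigl(u^{(j)}\cdot v^{(j)}\bigr)e_j,
\end{equation*}
where $u^{(j)}\cdot v^{(j)}$ is the ordinary $F_q$-inner product of the layers. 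Since $e_1,\dots,e_t$ are $F_q$-linearly independent, this yields the pivotal equivalence: $u\cdot v=0$ in $\fring$ if and only if $u^{(j)}\cdot v^{(j)}=0$ in $F_q$ for every $j$.

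Next I would prove the identity $C^{\perp}=C_1^{\perp}e_1+\dots+C_t^{\perp}e_t$ by double inclusion. For $\supseteq$, take $z=\sum_j z^{(j)}e_j$ with each $z^{(j)}\in C_j^{\perp}$; for any $x=\sum_j x^{(j)}e_j\in C$ Theorem~\ref{decomposingcodes} gives $x^{(j)}\in C_j$, so every layer product vanishes and hence $z\cdot x=0$, i.e.\ $z\in C^{\perp}$. For $\subseteq$, the key move is the isolation of a single layer: since $C$ is a left $\fring$-module and $e_{j_0}\in\fring$, for any $x\in C$ the product $xe_{j_0}\in C$, and a direct check shows $xe_{j_0}=x^{(j_0)}e_{j_0}$; thus $C_{j_0}e_{j_0}\subseteq C$ for each $j_0$. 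Given $z=\sum_j z^{(j)}e_j\in C^{\perp}$ and any $x^{(j_0)}\in C_{j_0}$, computing $z\cdot(x^{(j_0)}e_{j_0})$ with the layer formula leaves only the $j_0$ term, forcing $z^{(j_0)}\cdot x^{(j_0)}=0$; by arbitrariness $z^{(j_0)}\in C_{j_0}^{\perp}$, and since this holds for every index, $z$ lies in the claimed set.

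Finally, for the ``moreover'' statements I would isolate a general principle: for codes $A=\sum_jA_je_j$ and $B=\sum_jB_je_j$ over $\fring$, one has $A\subseteq B$ if and only if $A_j\subseteq B_j$ for all $j$. The forward direction again uses isolation, since $a^{(j_0)}e_{j_0}\in A\subseteq B$ has $j_0$-layer $a^{(j_0)}\in B_{j_0}$, the layer being well defined because the decomposition over the basis $e_1,\dots,e_t$ is unique; the reverse direction is immediate. Applying this with $(A,B)=(C,C^{\perp})$ gives the first equivalence, and with $(A,B)=(C^{\perp},C)$ the second, after substituting $C^{\perp}=\sum_jC_j^{\perp}e_j$. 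I expect the only genuine subtlety to be this layer-isolation step: it is where the module (rather than merely additive) structure of $C$ is essential, and where one must verify that reading off the $j_0$-th layer is a well-defined operation respecting the decomposition.
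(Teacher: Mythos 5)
Your proposal is correct and follows essentially the same route as the paper: both rest on the componentwise inner-product identity $u\cdot v=\sum_{j=1}^t\bigl(u^{(j)}\cdot v^{(j)}\bigr)e_j$ and establish $C^\perp=C_1^\perp e_1+\dots+C_t^\perp e_t$ by double inclusion. The only differences are minor: where you isolate a layer via the module operation $x\mapsto xe_{j_0}$ (so $C_{j_0}e_{j_0}\subseteq C$), the paper simply lifts an arbitrary $y_j\in C_j$ to a codeword $y\in C$ and reads off the $j$-th component of $x\cdot y=0$; and you spell out the ``moreover'' equivalences via a containment principle, which the paper treats as an immediate consequence of the decomposition.
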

\begin{proof}
    For any $x=x_1e_1 + \dots + x_te_t \in C^{\perp},$ where
    $x_j\in F_q^{n}, 1\leq j \leq t.$
    For any $ y_j \in C_j,$ by definition there exists $ y=y_1e_1 + \dots + y_te_t \in C,$
    so $x\cdot y = x_1\cdot y_1 e_1 + \dots + x_t\cdot y_t e_t = 0,$
    then $x_j\cdot y_j=0,$ so $x_j \in C_j^{\perp}, 1\leq j \leq t.$
    Thus, $C^{\perp} \subset C_1^{\perp}e_1 + \dots + C_t^{\perp}e_t.$

    For any
    $x=x_1e_1 + \dots + x_te_t\in C_1^{\perp}e_1 + \dots + C_t^{\perp}e_t,$
    where $x_j \in C_j^{\perp}, 1\leq j\leq t,$ for any
    $y_j \in C_j,$ we have $x_j\cdot y_j=0, 1\leq j \leq t.$
    So for any $ y=y_1e_1 + \dots + y_te_t \in C,$ there is $x\cdot y = 0,$
    therefore $x\in C^{\perp}.$ This shows that
    $C_1^{\perp}e_1 + \dots + C_t^{\perp}e_t \subset C^\perp.$

    In summary, $C^\perp = \left(C_1e_1 + \dots + C_te_t\right)^\perp=C_1^{\perp}e_1 + \dots + C_t^{\perp}e_t.$
    From this equation it immediately follows that the latter statements hold.
\end{proof}

\section{Characterization of skew constacyclic codes over \texorpdfstring{$\fring$}{R}}

\begin{definition}
    Let $\lambda\in U(\fring),$ $\theta\in \Aut(\fring),$ define a map from $\fring^n$ to $\fring^n$ $$\sigma_{\theta, \lambda}: (c_0,c_1,\dots,c_{n-1}) \mapsto (\lambda \theta(c_{n-1}),\theta(c_0),\dots,\theta(c_{n -2})).$$ Let $C$ be a linear code of length $n$ over $\fring.$ If for any $c \in C , $ we have $ \sigma_{\theta,\lambda }(c) \in C,$ then $C$ is said to be a $\theta$-$\lambda$-cyclic code of length $n$ over $\fring,$ if  $\theta$ and $\lambda$ is not specified, $C$ is also called a skew constacyclic code. When $\lambda =1,$ it is called a $\theta$-cyclic code or skew cyclic code.
\end{definition}

Similar to skew constacyclic code over $F_q$ considered earlier, let $\theta\in \Aut(\fring),$ define the skew polynomial ring over $\fring$
\begin{equation*}
    \fring[x;\theta] = \left\{a_0 +a_1x+\dots+a_kx^k\mid a_i \in \fring, 0\leq i \leq k\right\},
\end{equation*}
that is, the elements in $\fring[x;\theta]$ are elements in $\fring[x],$ but the coefficients are written on the left. Addition is defined in the usual way, and $ax^n \cdot (bx ^m) = a\theta^n(b)x^{n+m}, $ then define multiplication according to the associative and distributive laws. Similarly, it can be proved that in
$\fring[x;\theta]$ we can also do division with remainder on the right, as long as the divisor is monic. Use $\left\langle x^n-\lambda \right\rangle$ to represent the left $\fring[x;\theta]$-module $\fring[x;\theta](x^n-\lambda),$ then $\fring[x;\theta]/\left\langle x^n-\lambda \right\rangle $ is a left $\fring[x;\theta]$-module. Define a map $\Phi: $
\begin{align*}
    \fring^{n} & \rightarrow \fring[x;\theta]/\left\langle x^n-\lambda \right\rangle \\
    (c_0,c_1,\dots,c_{n-1}) & \mapsto c_0+c_1x+\dots+c_{n-1}x^{n-1}+\left\langle x^n-\lambda \right\rangle.
\end{align*}

For special cases of the following theorems see \citep[Theorem 2,][]{2012Onskewcyclic} by Abualrub et al., \citep[Theorem 3.5,][]{gao2013skew} by Gao,
\citep[Theorem 4.1,][]{islam2019note} by Islam et al., \citep[Theorem 3.1,][]{bag2019skew}, \citep[Theorem 2.4,][]{bag2020quantum} by Bag et al.

\begin{theorem}
    Let $\theta \in \Aut(\fring),$ $\lambda \in U(\fring),$ $C$ be a linear code of length $n$ over $\fring, $ 
    then $C$ is a \tlcycliccode\ over $\fring$ if and only if $\Phi(C)$ is a left $\fring [x;\theta]$-submodule of $\fring[x;\theta]/\left\langle x^n-\lambda \right\rangle$.
\end{theorem}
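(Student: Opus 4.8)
The plan is to mirror exactly the proof of Theorem~\ref{thm:skewcodesoverFbymod}, the analogous statement over $F_q$, since the ring-theoretic machinery ($\fring[x;\theta]$, right division with monic divisor, the isomorphism $\Phi$) has already been set up over $\fring$ in complete parallel. The key computation that drives everything is the effect of left multiplication by $x$ on a class in $\fring[x;\theta]/\langle x^n-\lambda\rangle$: I would compute
\begin{align*}
 x\cdot\left(c_0+c_1x+\dots+c_{n-1}x^{n-1}+\langle x^n-\lambda\rangle\right)
 &= \theta(c_0)x+\dots+\theta(c_{n-1})x^n+\langle x^n-\lambda\rangle\\
 &= \lambda\theta(c_{n-1})+\theta(c_0)x+\dots+\theta(c_{n-2})x^{n-1}+\langle x^n-\lambda\rangle,
\end{align*}
using the skew multiplication rule $x\cdot c = \theta(c)x$ and the reduction $x^n\equiv\lambda$ in the quotient. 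This identifies the action of $x$ with the shift $\sigma_{\theta,\lambda}$ under $\Phi$, i.e. $\Phi(\sigma_{\theta,\lambda}(c))=x\cdot\Phi(c)$.

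For necessity, I would assume $C$ is a \tlcycliccode\ and show $C^x$ is closed under left multiplication by arbitrary $a(x)\in\fring[x;\theta]$. The computation above shows $x\cdot C^x\subset C^x$; since $C$ is linear, $C^x=\Phi(C)$ is closed under addition and under left multiplication by the constants $\fring\subset\fring[x;\theta]$ (this uses that $C$ is a left $\fring$-submodule, i.e. a linear code over $\fring$, not merely a subset); then closure under multiplication by every $x^i$ follows by iterating the shift, and closure under a general $a(x)=\sum a_i x^i$ follows by combining these two facts. For sufficiency, I would take any $c\in C$, note $\Phi(c)\in C^x$, and use that $C^x$ is a left submodule to conclude $x\cdot\Phi(c)\in C^x$; applying $\Phi^{-1}$ and the boxed computation gives $\sigma_{\theta,\lambda}(c)\in C$, which is precisely the defining property of a \tlcycliccode.

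The one genuinely new point, relative to the field case, is that the coefficient ring $\fring$ is not a field but a finite commutative semisimple ring, so I should make sure the underlying facts are legitimately available over $\fring$. The paper has already asserted (just before the isomorphism $\Phi$ over $\fring$) that right division with remainder holds in $\fring[x;\theta]$ whenever the divisor is monic; since $x^n-\lambda$ is monic, the quotient $\fring[x;\theta]/\langle x^n-\lambda\rangle$ is a well-defined left module and $\Phi$ is a bijection, so the argument goes through verbatim. I expect the main obstacle to be purely expository rather than mathematical: one must be careful that ``linear code'' over $\fring$ means a left $\fring$-submodule of $\fring^n$, so that $C^x$ is closed under the scalar action of $\fring$, which is exactly what lets us upgrade closure under $x$ to closure under all of $\fring[x;\theta]$. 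Once that is noted, the proof is a direct transcription of the $F_q$ case.
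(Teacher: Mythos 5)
Your proposal is correct and matches the paper's approach exactly: the paper's own proof simply states that the argument is ``almost identical'' to that of Theorem~\ref{thm:skewcodesoverFbymod}, which is precisely the transcription you carry out, with the $x$-action realizing $\sigma_{\theta,\lambda}$ under $\Phi$ and the submodule property supplying both directions. Your added care about the two ring-theoretic prerequisites (right division by the monic $x^n-\lambda$, and ``linear code'' meaning left $\fring$-submodule so that $C^x$ is closed under scalars from $\fring$) is exactly what justifies the paper's brevity.
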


\begin{proof}
    Almost identical to the proof of Theorem \ref{thm:skewcodesoverFbymod}.
\end{proof}

Similar to Theorem \ref{dualcodesoverF}, the dual code of a skew constacyclic code over $\fring$ is still a skew constacyclic code.
 
\begin{theorem}
    Let $\theta \in \Aut(\fring),$ $\lambda \in U(\fring).$ If $C$ is a \tlcycliccode\ of length $n$ over $\fring,$ then $C ^\perp$ is a $\theta$-$\lambda^{-1}$-cyclic code of length $n$ over $\fring.$
\end{theorem}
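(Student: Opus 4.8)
The plan is to transfer the proof of Theorem \ref{dualcodesoverF} almost verbatim, since the only properties of $F_q$ that argument used are ones that $\fring$ also enjoys: $\fring$ is commutative, $\theta$ is a ring homomorphism, and $\lambda$ is a unit. Concretely, I would fix an arbitrary $z=(z_0,z_1,\dots,z_{n-1})\in C^\perp$ and set $\tilde z=\sigma_{\theta,\lambda^{-1}}(z)=\bigl(\lambda^{-1}\theta(z_{n-1}),\theta(z_0),\dots,\theta(z_{n-2})\bigr)$. To prove that $C^\perp$ is a $\theta$-$\lambda^{-1}$-cyclic code it then suffices to show $\tilde z\in C^\perp$, i.e. that $\tilde z\cdot y=0$ for every $y\in C$.

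The first key step is to observe that every $y\in C$ can be written in shifted form $y=\sigma_{\theta,\lambda}(c)=\bigl(\lambda\theta(c_{n-1}),\theta(c_0),\dots,\theta(c_{n-2})\bigr)$ for some $c=(c_0,\dots,c_{n-1})\in C$. This holds because $\sigma_{\theta,\lambda}$ is injective on $\fring^n$ (its two ingredients, the coordinatewise bijection $\theta$ and the cyclic shift with a leading factor of the unit $\lambda$, are both invertible), so its restriction to the finite set $C$, which maps $C$ into $C$ by the skew-constacyclic hypothesis, is a bijection of $C$. The second key step is the inner-product computation: I would expand
\begin{align*}
    \tilde z\cdot y &= \bigl(\lambda^{-1}\theta(z_{n-1})\bigr)\bigl(\lambda\theta(c_{n-1})\bigr) + \sum_{i=0}^{n-2}\theta(z_i)\theta(c_i),
\end{align*}
then use commutativity of $\fring$ to cancel $\lambda^{-1}\lambda=1_{\fring}$ in the first term and use that $\theta$ is a ring homomorphism to pull $\theta$ outside, obtaining $\tilde z\cdot y=\theta\!\left(\sum_{i=0}^{n-1}z_ic_i\right)=\theta(z\cdot c)$. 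Since $z\in C^\perp$ and $c\in C$ give $z\cdot c=0$, and $\theta(0)=0$, this yields $\tilde z\cdot y=0$; arbitrariness of $y$ finishes the argument.

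I do not expect any genuinely hard step here: the statement is a routine ring-analogue of the field case, and the single point deserving a sentence rather than a citation is the surjectivity of $\sigma_{\theta,\lambda}$ on $C$, which follows from finiteness and injectivity. One could instead argue via the decomposition $C=C_1e_1+\dots+C_te_t$ together with Theorem \ref{dualcodesoverR}, reducing to Theorem \ref{dualcodesoverF} on each component. I would avoid that route, however, because it first requires understanding how the $\theta$-$\lambda$-cyclic property of $C$ descends to the factors $C_j$ for a general $\theta\in\Aut(\fring)$, which is precisely the more delicate material deferred to the general characterization; the direct computation above sidesteps this entirely.
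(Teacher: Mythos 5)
Your proposal is correct and takes essentially the same route as the paper's own proof: the paper likewise writes every $x\in C$ as $x=\sigma_{\theta,\lambda}(\tilde{x})$ for some $\tilde{x}\in C$ (the surjectivity-from-finiteness point you spell out) and then computes $\sigma_{\theta,\lambda^{-1}}(y)\cdot\sigma_{\theta,\lambda}(\tilde{x})=\theta(y\cdot\tilde{x})=0$. The only cosmetic difference is that the paper first notes explicitly that $C^\perp$ is itself a linear code over $\fring$, a routine verification you leave implicit.
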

\begin{proof}
    First, it can be directly verified that $C^\perp$ is a left $\fring$-module, that is, $C^\perp$ is a linear code of length $n$ over $\fring.$
    Then we just need to prove: for any $y\in C^\perp,$ there is $\sigma_{\theta,\lambda^{-1}}(y) \in C^\perp.$ For any $x\in C,$ since $C$ is a \tlcycliccode, there exists $\tilde{x} \in C,$ such that $ x = \sigma_{\theta,\lambda}(\tilde{x}).$ Note that $ \sigma_{\theta,\lambda^{-1}}(y) \cdot \sigma_{\theta,\lambda}(\tilde{x}) = \theta(y \cdot \tilde{x} ) =0,$ thus $\sigma_{\theta,\lambda^{-1}}(y) \in C^\perp.$
\end{proof}

\begin{remark}
    It is not difficult to see from the above proof that the above conclusion holds for skew constacyclic codes over finite commutative rings with identity.
\end{remark}

From the definition of skew constacyclic code, it can be seen that skew constacyclic code is related to the parameters $\theta$ and $\lambda,$ and the following discussion indeed shows that these two parameters are important to the structure of the skew constacyclic code. From the previous discussion, we already know the invertible element in $\fring$ and the form of the automorphism of $\fring,$ and know the linear code $C=C_1e_1+\dots+C_te_t$ over $\fring$ is completely determined by $C_1,\dots,C_t,$ with these preparations, we can give the characterization of \tlcycliccode\ over $\fring.$ Because the conclusions in some special cases can be written more clearly, and understanding the conclusions in special cases helps us understand the general case, here we first give the characterization of \tlcycliccode\ in two special cases, and finally give the most general conclusion.

\subsection{Special case one} \label{sec:specialone}

Let $\theta \in G_1,$ i.e., $\theta$ be an automorphism of $\fring$ that keeps $e_1,e_2,\dots,e_t$ stable, specifically $\theta(\alpha_1e_1 + \dots + \alpha_te_t) = \psi_1(\alpha_1)e_1 + \dots + \psi_t(\alpha_t)e_t,$ where $\psi_j \in \Aut(F_q),$ $\alpha_j\in F_q,1\leq j\leq t.$ In this subsection, it is always assumed that $\theta$ is as previously defined. If $\lambda = \lambda_1e_1 + \dots + \lambda_te_t$
is the invertible element in $\fring,$ where $\lambda_j \in F_q,1\leq j\leq t,$ then $\lambda$ is invertible is equivalent to $\lambda_j \neq 0$ for all $1\leq j\leq t.$
The \tlcycliccode\ over $\fring$ will be considered below. The main idea is to transform the problem into processing the $\psi_j$-$\lambda_j$-cyclic code over $F_q.$

For special cases of the following theorems, see \citep[Theorem 3,][]{gursoy2014construction} by Gursoy et al., \citep[Theorem 5,][]{gao2017skew} by Gao et al., \citep[Theorem 4.1,][]{shi2015skew} by Shi et al.,
\citep[Theorem 4.2 and Theorem 8.1,][]{islam2018skew}, \citep[Theorem 5.2,][]{islam2019note} by Islam et al., \citep[Theorem 3.3,][]{Ashraf2019Quantum} by Ashraf et al., \citep[Theorem 4.3,][]{bag2020quantum} by Bag et al.

\begin{theorem} \label{thm:mainofcase1}
    Let $\lambda = \lambda_1e_1 + \dots + \lambda_te_t \in U(\fring),$
    $C=C_1e_1 + \dots+ C_te_t$ is a linear code of length $n$ over $\fring.$ Then $C$ is a \tlcycliccode\ over $\fring$ if and only if $C_j$ is a $\psi_j$-$\lambda_j$-cyclic code over $F_q,$ $1\leq j\leq t.$
\end{theorem}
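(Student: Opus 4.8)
The plan is to exploit the decomposition machinery already established: since $\theta \in G_1$ fixes each $e_j$, the shift map $\sigma_{\theta,\lambda}$ respects the direct-sum structure $C = C_1 e_1 + \dots + C_t e_t$, and the claim should reduce componentwise to the definition of a $\psi_j$-$\lambda_j$-cyclic code over $F_q$. First I would take an arbitrary codeword $c = (c_0, c_1, \dots, c_{n-1}) \in \fring^n$ and write each coordinate in the idempotent basis as $c_i = \sum_{j=1}^t c_{ij} e_j$ with $c_{ij} \in F_q$. The key computation is to evaluate $\sigma_{\theta,\lambda}(c)$ coordinate by coordinate and separate the $e_j$-components. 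Using $\theta(c_{n-1}) = \sum_j \psi_j(c_{n-1,j}) e_j$ and $\lambda = \sum_j \lambda_j e_j$, together with orthogonality $e_i e_j = 0$ for $i \neq j$ and $e_j^2 = e_j$, the first coordinate of $\sigma_{\theta,\lambda}(c)$ becomes $\lambda \theta(c_{n-1}) = \sum_j \lambda_j \psi_j(c_{n-1,j}) e_j$, while the remaining coordinates give $\theta(c_{i-1}) = \sum_j \psi_j(c_{i-1,j}) e_j$.

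The crux is then to observe that, projecting onto the $e_j$-component, the image $\sigma_{\theta,\lambda}(c)$ has $j$-th component precisely
\begin{equation*}
    \bigl(\lambda_j \psi_j(c_{n-1,j}),\, \psi_j(c_{0j}),\, \dots,\, \psi_j(c_{n-2,j})\bigr) = \rho_{\psi_j,\lambda_j}\bigl((c_{0j}, c_{1j}, \dots, c_{n-1,j})\bigr),
\end{equation*}
where $\rho_{\psi_j,\lambda_j}$ is exactly the shift map defining $\psi_j$-$\lambda_j$-cyclic codes over $F_q$ from Section~\ref{sec:skewF_q}. In other words, $\sigma_{\theta,\lambda}$ acts on the decomposition as the ``diagonal'' of the maps $\rho_{\psi_j,\lambda_j}$. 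This identity is the heart of the proof and makes the two directions fall out cleanly.

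For the forward direction, I would assume $C$ is a \tlcycliccode\ over $\fring$ and fix $j$. Given any word $u = (u_0, \dots, u_{n-1}) \in C_j$, by the definition of $C_j$ there is a codeword $c \in C$ whose $j$-th component is $u$; applying $\sigma_{\theta,\lambda}$ keeps us in $C$ and, reading off the $j$-th component via the boxed identity, shows $\rho_{\psi_j,\lambda_j}(u) \in C_j$, so $C_j$ is $\psi_j$-$\lambda_j$-cyclic. One subtlety here is ensuring that a codeword realizing a given $j$-th component exists with arbitrary entries in the other slots; this is harmless because Theorem~\ref{decomposingcodes} guarantees $C = C_1 e_1 + \dots + C_t e_t$, so every element of $C_j$ arises as the $j$-th component of a genuine codeword. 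For the converse, assuming each $C_j$ is $\psi_j$-$\lambda_j$-cyclic, I would take any $c \in C = C_1 e_1 + \dots + C_t e_t$, apply $\sigma_{\theta,\lambda}$, and use the identity to see that its $j$-th component is $\rho_{\psi_j,\lambda_j}$ of the $j$-th component of $c$, which lies in $C_j$ by hypothesis; hence $\sigma_{\theta,\lambda}(c) \in C_1 e_1 + \dots + C_t e_t = C$.

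I do not anticipate a genuine obstacle—the result is essentially bookkeeping once the componentwise factorization of $\sigma_{\theta,\lambda}$ is recorded. The one place demanding care is the \emph{existence} step in the forward direction: from $u \in C_j$ one must produce an actual codeword $c \in C$ with prescribed $j$-th component, which relies on the structural Theorem~\ref{decomposingcodes} rather than on treating the $C_j$ independently. Provided that point is handled, the orthogonality relations $e_i e_j = \delta_{ij} e_j$ do all the remaining work, and the correctness of the shift formula in the first coordinate (the only place $\lambda$ enters) follows from $\lambda \theta(c_{n-1}) = \sum_j \lambda_j \psi_j(c_{n-1,j}) e_j$.
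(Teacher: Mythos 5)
Your proposal is correct and follows essentially the same route as the paper's own proof: the componentwise identity $\sigma_{\theta,\lambda}(c)_j = \rho_{\psi_j,\lambda_j}(c_j)$ obtained from $e_ie_j=\delta_{ij}e_j$ and $\theta(e_j)=e_j$, combined with the decomposition $C = C_1e_1+\dots+C_te_t$, with the two directions read off exactly as you describe. The only cosmetic difference is that the paper's necessity argument assembles arbitrary $x_j\in C_j$ into a single codeword before shifting, whereas you realize one $u\in C_j$ as the $j$-th component of some codeword; both rest on the same structural fact.
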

\begin{proof}
    Necessity. For any $x_j=(x_{0j},x_{1j},\dots,x_{n-1,j}) \in C_j, 1\leq j\leq t.$ We need to prove
    \begin{equation*}
        \left(\lambda_j\psi_j(x_{n-1,j}),\psi_j(x_{0j}),\dots,\psi_j(x_{n-2,j})\right) \in C_j.
    \end{equation*}
    From $x_j \in C_j$, $x=x_1e_1 + \dots + x_te_t\in C,$
    then
    \begin{gather*}    \left(\lambda\theta\left(\sum_{j=1}^t x_{n-1,j}e_j\right),\theta\left(\sum_{j=1}^tx_{0j}e_j\right),\dots,\theta\left(\sum_{j=1}^tx_{n-2,j}e_j\right)\right) \in C,\\
    \left(\sum_{j=1}^t \lambda_j \psi_j(x_{n-1,j})e_j,\sum_{j=1}^t\psi_j(x_{0j})e_j,\dots, \sum_{j=1}^t\psi_j(x_{n-2,j})e_j\right)\in C,    \end{gather*}
    so $\left(\lambda_j\psi_j(x_{n-1,j}),\psi_j(x_{0j}),\dots,\psi_j(x_{n-2,j})\right) \in C_j .$

    Sufficiency. For any
    \begin{gather*}
        \left(\sum_{j=1}^t x_{0j}e_j,\sum_{j=1}^tx_{1j}e_j,\dots,\sum_{j=1}^tx_{n-1,j }e_j\right)\in C,
        \shortintertext{we need to proof}
        \left(\lambda\theta\left(\sum_{j=1}^tx_{n-1,j}e_j\right),\theta\left(\sum_{j=1}^tx_{0j}e_j\right),\dots,\theta\left(\sum_{j=1}^tx_{n-2,j}e_j\right)\right) \in C.
    \end{gather*}
    For $(x_{0j},x_{1j},\dots,x_{n-1,j}) \in C_j,$ we have $\left(\lambda_j\psi_j(x_{n-1,j}),\psi_j(x_{0j}),\dots,\psi_j(x_{n-2,j})\right) \in C_j,$
    then
    \begin{equation*}
        \left(\sum_{j=1}^t \lambda_j \psi_j(x_{n-1,j})e_j,\sum_{j=1}^t\psi_j(x_{0j})e_j,\dots, \sum_{j=1}^t\psi_j(x_{n-2,j})e_j\right) \in C.
    \end{equation*}
\end{proof}

The skew polynomials can used to describe skew constacyclic codes over $\fring,$ mainly using the relevant conclusions of the skew constacyclic codes over $F_q$ obtained before. The special cases of the following theorem can be found in \citep[Theorem 4 and Theorem 5,][]{gursoy2014construction}, \citep[Theorem 6,][]{gao2017skew} by Gao et al.,
\citep[Theorem 4.2 and Theorem 4.3,][]{shi2015skew} by Shi et al., \citep[Theorem 4.3 and Theorem 8.2, ][]{islam2018skew}, \citep[Theorem 5.5,][]{islam2019note} by Islam et al.,  \citep[Theorem 3.4,][]{Ashraf2019Quantum} by Ashraf et al., \citep[Theorem 4.6,][]{bag2020quantum} by Bag et al..

\begin{theorem}\label{characterizedbypolynomial}
    Let $C=C_1e_1 + \dots + C_te_t$ be a \tlcycliccode\ of length $n$ over $\fring,$ and the generator skew polynomial of $C_j$ is $g_j(x), 1\leq j \leq t, $ then
    \begin{align*}
        \Phi(C) & = \fring[x;\theta]\left(g_1(x)e_1+\left\langle x^n-\lambda \right\rangle,\dots,g_t(x)e_t + \left\langle x ^n-\lambda \right\rangle \right) \\
            & = \fring[x;\theta] \left(g_1(x)e_1 + \dots + g_t(x)e_t + \left\langle x^n-\lambda \right\rangle \right).
    \end{align*}
    Where $g_1(x)e_1 + \dots + g_t(x)e_t$ is a right factor of $x^n - \lambda.$
    \begin{equation*}
        \left|C\right| = \prod_{j=1}^t \left|C_j\right| = q^{\sum_{j=1}^t \left(n-\deg g_j(x)\right)}.
    \end{equation*}
\end{theorem}

\begin{proof}
    Since $ e_j \left(g_1(x)e_1 + \dots + g_t(x)e_t\right) = g_j(x)e_j,$ the second equality is naturally true. Because $g_j(x)e_j + \left\langle x^n-\lambda \right\rangle$ corresponds to an element in $C_je_j,$ an element in $C_je_j$ is naturally an element in $C,$ so $g_j(x)e_j +\left\langle x^n-\lambda \right\rangle \in \Phi(C), 1\leq j\leq t,$ thus
    \begin{equation*}
        \Phi(C) \supset \fring[x;\theta]\left(g_1(x)e_1+\left\langle x^n-\lambda \right\rangle,\dots,g_t(x)e_t + \left\langle x ^n-\lambda \right\rangle\right).
    \end{equation*}
    For any $ c(x) + \left\langle x^n-\lambda \right\rangle \in \Phi(C),$ the coefficients of $c(x) $ are in $\fring,$ and it can be written as the $F_q$ linear sum of $e_j .$ Then $c(x) = c_1(x)e_1 + \dots + c_t(x)e_t,$ where $c_j(x) \in F_q[x;\theta],$
    $c_1(x)$ corresponds to an element in $C_1,$ then $c_1(x) = q_1(x)g_1(x) + k(x)(x^n-\lambda_1),$ so $c_1(x )e_1 = \left(q_1(x)g_1(x) + k(x)(x^n-\lambda)\right)e_1.$
    Therefore,
    \begin{equation*}
        c(x) +\left\langle x^n-\lambda \right\rangle = q_1(x)g_1(x)e_1 + \dots + q_t(x)g_t(x)e_t + \left\langle x^n -\lambda \right\rangle ,
    \end{equation*}
    and $c(x)+\left\langle x^n-\lambda \right\rangle \in \fring[x;\theta]\left(g_1(x)e_1+\left\langle x^n-\lambda \right\rangle,\dots,g_t(x)e_t + \left\langle x^n-\lambda \right\rangle\right).$

    Since $h_j(x)g_j(x) = x^n - \lambda_j,$ then $h_j(x)g_j(x)e_j = (x^n-\lambda)e_j,$ $ 1\leq j\leq t,$ thus
    \begin{equation*}
        \left(\sum_{j=1}^th_j(x)e_j\right)\left(\sum_{j=1}^t g_j(x)e_j\right) = \sum_{j=1}^t h_j (x)g_j(x)e_j = x^n-\lambda.
    \end{equation*}
\end{proof}

The above conclusion tells us that if $C$ is a \tlcycliccode\ of length $n$ over $\fring,$ then $\Phi(C)$ can be generated by only one element.
From the skew polynomial characterization of $C,$ the skew polynomial characterization of $C^\perp$ can be obtained. The special case of the following corollary can be found in \citep[Corollary 7,][]{gursoy2014construction} by Gursoy et al.,
\citep[Corollary 4.4,][]{shi2015skew} by Shi et al., \citep[Corollary 4.4 and Corollary 8.4,][]{islam2018skew} by Islam et al.,
\citep[Corollary 3.7,][]{Ashraf2019Quantum} by Ashraf et al.,  \citep[Corollary 4.8,][]{bag2020quantum} by Bag et al.. They assumed $\ord(\theta) \mid n$ and $\theta(\lambda) = \lambda$.

\begin{corollary}
    If $C$ is a \tlcycliccode\ of length $n$ over $\fring,$ $\theta \in \Aut(\fring),$ $\lambda = \sum_{j=1}^t \lambda_je_j \in U(\fring),$ $C=C_1e_1 + \dots + C_te_t,$ then $C^\perp$ is a $\theta$-$\lambda^{-1}$-cyclic code of length $n$ over $\fring.$ For $1\leq j\leq t,$ if $C_j$ is a $\psi_j$-$\lambda_j$-cyclic code of length $n$ over $F_q,$ $g_j(x)$ is the generator skew polynomial of $C_j$ and $h_j(x)g_j(x) = x^n - \lambda_j,$ then $C_j^\perp$ is a $\psi_j$-$\lambda_j$-cyclic code of length $n$ over $F_q$ and the generator skew polynomial of $C_j^{\perp}$ is $\hbar_j^\ast (x)$ left multiplied by the inverse of its coefficient of leading term, and $\left|C^\perp\right| = q^{\sum_{j=1}^t \deg g_j(x)},$
    \begin{align*}
        \Phi \left(C^\perp\right) & = \fring [x;\theta]\left(\hbar_1^\ast(x) e_1 + \left\langle x^n-\lambda^{-1} \right\rangle ,\dots,\hbar_t^\ast(x)e_t + \left\langle x^n-\lambda^{-1} \right\rangle \right) \\
                               & = \fring[x;\theta]\left(\hbar_1^\ast(x)e_1 + \dots + \hbar_t^\ast(x)e_t+ \left\langle x^n-\lambda^{-1} \right\rangle \right).
    \end{align*}
\end{corollary}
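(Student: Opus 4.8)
The plan is to reduce every assertion to the componentwise situation over $F_q$ settled in Section~\ref{sec:skewF_q}, using the decomposition machinery of the previous subsection; no new idea is required, as the proof is an assembly of earlier results. I would first observe that the claim that $C^\perp$ is a $\theta$-$\lambda^{-1}$-cyclic code over $\fring$ is precisely the general dual theorem for skew constacyclic codes over $\fring$ proved above, so it may be quoted directly. The substantive content is the explicit componentwise description, for which the opening step is to invoke Theorem~\ref{dualcodesoverR} to write $C^\perp = C_1^\perp e_1 + \dots + C_t^\perp e_t$, noting that the invertible element decomposes as $\lambda^{-1} = \lambda_1^{-1}e_1 + \dots + \lambda_t^{-1}e_t$.

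Next I would identify each component. Because $C$ is a \tlcycliccode\ over $\fring$, Theorem~\ref{thm:mainofcase1} shows that each $C_j$ is a $\psi_j$-$\lambda_j$-cyclic code over $F_q$, and applying Theorem~\ref{dualcodesoverF} componentwise shows that each $C_j^\perp$ is a $\psi_j$-$\lambda_j^{-1}$-cyclic code over $F_q$; I would be explicit that the dual constacyclic parameter is $\lambda_j^{-1}$, matching the $j$-th component of the global parameter $\lambda^{-1}$. To pin down the generator skew polynomial of $C_j^\perp$, I would specialize Theorem~\ref{polynomialofdualcodes} through the substitution $\theta\mapsto\psi_j$, $\lambda\mapsto\lambda_j$, $g\mapsto g_j$, $h\mapsto h_j$, whose conclusion gives the generator skew polynomial of $C_j^\perp$ as $\hbar_j^\ast(x)$ left multiplied by the inverse of its leading coefficient, exactly as asserted.

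For the cardinality I would count $F_q$-dimensions: since $g_j(x)$ generates $C_j$, one has $\dim_{F_q} C_j = n - \deg g_j(x)$, hence $\dim_{F_q} C_j^\perp = \deg g_j(x)$ and $|C_j^\perp| = q^{\deg g_j(x)}$; the direct-sum decomposition, as in Theorem~\ref{decomposingcodes} together with the size formula of Theorem~\ref{characterizedbypolynomial}, multiplies these to give $|C^\perp| = q^{\sum_{j=1}^t \deg g_j(x)}$. Finally, the two displayed forms of $(C^\perp)^x$ follow by applying Theorem~\ref{characterizedbypolynomial} to $C^\perp$ itself---a $\theta$-$\lambda^{-1}$-cyclic code whose $j$-th component $C_j^\perp$ has generator skew polynomial the normalized $\hbar_j^\ast(x)$---yielding both the single-generator and the product expressions simultaneously.

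The main obstacle is purely bookkeeping. One must carefully align the notation of Theorem~\ref{polynomialofdualcodes} under the substitution, in particular verifying that the leading-coefficient normalization $\theta^k(b_0^{-1})$ there specializes correctly and that the dual constacyclic parameter is consistently $\lambda_j^{-1}$. Once these identifications are fixed, all four assertions drop out of the decomposition and dual theorems without any further work.
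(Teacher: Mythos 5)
Your proposal is correct and takes essentially the same route as the paper: the paper's own proof is a one-line citation of Theorem \ref{dualcodesoverR}, Theorem \ref{dualcodesoverF}, Theorem \ref{polynomialofdualcodes}, and Theorem \ref{characterizedbypolynomial}, which are exactly the results you assemble (together with Theorem \ref{thm:mainofcase1} and the dual theorem over $\fring$, both implicit in the paper's citation). Your explicit observation that the componentwise dual parameter is $\lambda_j^{-1}$ rather than $\lambda_j$ is a correct reading that silently fixes an apparent typo in the corollary's statement.
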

\begin{proof}
    It follows from Theorem \ref{dualcodesoverR}, Theorem \ref{dualcodesoverF}, Theorem \ref{polynomialofdualcodes}, Theorem \ref{characterizedbypolynomial}.
\end{proof}

Using Theorem \ref{containingcondition}, the necessary and sufficient conditions that a \tlcycliccode\ and its dual code over $\fring$ have mutual containment relation can also be obtained, which is omitted here.

\subsection{Special case two}

For $\theta \in \Aut(\fring),$ we have showed that $\theta$ is determined by the way it acts on $F_q1_{\fring}$ and the set of primitive idempotent
$
    \{e_1,e_2,\dots,e_t\}.
$
The effect of $\theta$ on the set $\{e_1,e_2,\dots,e_t\}$ in the case considered above is trivial. In the case now to be considered, the action of $\theta$ on $F_q1_{\fring}$ is trivial, and more specifically, in this section, unless otherwise specified, it is always assumed
$$ \theta: \fring \rightarrow \fring,\quad \alpha_1e_1 + \dots +\alpha_t e_t \mapsto \alpha_1e_2 + \dots + \alpha_{t-1}e_t + \alpha_te_1, $$
that is to say, $\theta\in G_2$ and $\theta$ corresponds to $\bar{\theta}=(1,2,\dots,t)$ in the symmetric group $S_t.$  In addition, we only consider $\theta$-cyclic codes $C$ of length $n$ over $\fring.$ From Theorem \ref{decomposingcodes}, determine $C =C_1e_1 + \dots + C_te_t$ is equivalent to determine $C_i,\ 1\leq i\leq t.$

For the convenience of notation, we define a map first.

\begin{equation*}
    \sigma_{\theta}: \fring^{n} \rightarrow \fring^{n},\quad (c_0,c_1,\dots,c_{n-1}) \mapsto (\theta(c_{n-1 }),\theta(c_0),\dots,\theta(c_{n-2})).
\end{equation*}

The following theorem generalizes Gao's \cite[Theorem 3.7,][]{gao2013skew}.
\begin{theorem} \label{thm:specialcase}
    If $(t,n) =1,$ $C = C_1e_1 + \dots + C_te_t$ is a linear code of length $n$ over $\fring.$ Then $C$ is a $\theta$-cyclic code of length $n$ over $\fring$ if and only if $C_1 = C_2 = \dots = C_t$ and $C_1$ is a cyclic code of length $n$ over $F_q.$
\end{theorem}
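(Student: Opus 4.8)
The plan is to translate the defining condition $\sigma_\theta(C)=C$ into a system of relations among the component codes $C_1,\dots,C_t$, and then to use the hypothesis $(t,n)=1$ to collapse that system to the desired conclusion. The essential structural feature of this case is that $\theta$ cyclically permutes the idempotents, $\theta(e_j)=e_{j+1}$ with indices read mod $t$, so that applying $\sigma_\theta$ performs \emph{two} cyclic shifts at once: the positional shift $\rho$ on each layer and a shift of the layers themselves. Concretely, I would write a codeword of $C$ with $j$-th component word $x_j=(x_{0j},\dots,x_{n-1,j})\in C_j$, expand $\theta\!\left(\sum_j x_{ij}e_j\right)=\sum_j x_{i,j-1}e_j$ (reindexing, with $x_{i,0}=x_{i,t}$), and read off the $e_k$-layer of $\sigma_\theta$ applied to the codeword. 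A short computation shows this layer is exactly $\rho(x_{k-1})$, with the convention $C_0=C_t$. Since the components may be chosen independently from the $C_j$ by Theorem \ref{decomposingcodes}, this gives
$$\sigma_\theta(C)=\rho(C_t)e_1+\rho(C_1)e_2+\dots+\rho(C_{t-1})e_t.$$

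Next, because the decomposition $C=C_1e_1+\dots+C_te_t$ is unique, the condition $\sigma_\theta(C)=C$ is equivalent to the cyclic system $C_k=\rho(C_{k-1})$ for $1\leq k\leq t$ (with $C_0=C_t$). Unwinding this system yields $C_j=\rho^{\,j-1}(C_1)$ for each $j$, together with the single closure relation $C_1=\rho(C_t)=\rho^{t}(C_1)$.

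The hypothesis $(t,n)=1$ then does the real work. Since $\rho$ has order $n$ on $F_q^n$ and $t$ is a unit modulo $n$, the element $\rho^{t}$ generates the same cyclic group $\langle\rho\rangle$ as $\rho$; hence $\rho^{t}(C_1)=C_1$ forces $\rho(C_1)=C_1$, i.e.\ $C_1$ is a cyclic code. The relations $C_j=\rho^{\,j-1}(C_1)=C_1$ then give $C_1=C_2=\dots=C_t$, which is the forward direction. For the converse, if $C_1=\dots=C_t$ is cyclic then $\rho(C_{k-1})=\rho(C_1)=C_1=C_k$ for every $k$, so the displayed formula yields $\sigma_\theta(C)=C$ at once.

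I expect the main obstacle to lie in the first step: one must simultaneously track the positional shift coming from $\sigma_\theta$ and the permutation of the idempotent layers coming from $\theta$, reconciling the two distinct index conventions (mod $n$ for positions and mod $t$ for layers) so that the clean formula for $\sigma_\theta(C)$ emerges. Once that bookkeeping is correct, the number-theoretic collapse in the third step is routine, being simply the observation that invariance under the index-$t$ shift $\rho^t$ upgrades to full cyclic invariance precisely when $\gcd(t,n)=1$.
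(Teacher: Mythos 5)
Your proof is correct, and its sufficiency half together with the opening computation (the $e_k$-layer of $\sigma_\theta(x)$ is $\rho(x_{k-1})$, indices mod $t$) coincide exactly with the paper's; the difference is in how the necessity half is organized. The paper never passes to set equalities: it iterates $\sigma_\theta$ on a fixed codeword with \emph{two} separate Bezout choices, $at=1+bn$ giving $\sigma_\theta^{at}(x)=\rho(x_1)e_1+\dots+\rho(x_t)e_t$ (hence each $C_i$ is cyclic), and $a'n=1+b't$ giving $\sigma_\theta^{a'n}(x)=x_te_1+x_1e_2+\dots+x_{t-1}e_t$ (hence $C_1\subset C_2\subset\dots\subset C_t\subset C_1$). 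You instead apply $\sigma_\theta$ once, use uniqueness of the decomposition of Theorem \ref{decomposingcodes} (valid, since each $C_j$ is by definition the $j$-th projection of $C$) to extract the recursion $C_k=\rho(C_{k-1})$, and then make a single coprimality step inside the cyclic group $\langle\rho\rangle$: since $\gcd(t,n)=1$, $\langle\rho^t\rangle=\langle\rho\rangle$, so the closure relation $\rho^t(C_1)=C_1$ forces $\rho(C_1)=C_1$, after which the equalities $C_j=\rho^{j-1}(C_1)=C_1$ drop out with no further number theory. This is a mild economy over the paper (one Bezout-type argument instead of two, and the layer equalities come for free). The one point you should patch: the paper's definition of a $\theta$-cyclic code only demands $\sigma_\theta(C)\subseteq C$, while your derivation of the recursion starts from $\sigma_\theta(C)=C$. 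The upgrade is immediate --- $\sigma_\theta$ is injective and $C$ is finite, so $\sigma_\theta(C)\subseteq C$ forces equality --- and it is the same cardinality device the paper invokes explicitly in its proof of the general case (Theorem \ref{thm:Generalcase}), but it needs to be stated.
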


\begin{proof}
    Sufficiency. For all $x = x_1e_1 + \dots + x_te_t \in C,$ where $x_i \in C_i, 1\leq i \leq t.$ We have $\sigma_{\theta}(x) = \rho(x_1)e_{\bar{\theta}(1)}+ \dots + \rho(x_t)e_{\bar{\theta}(t)} = \rho(x_t)e_1 + \rho(x_1) e_2 + \dots + \rho(x_{t-1})e_t \in C.$

    Necessity. Because $(t,n)=1,$ there exist $u,v \in \mathbb{Z},$ such that $ut + vn =1,$ then for any $k\in \mathbb{Z},$ $(u+kn)t = 1+ (kt-v)n,$ so there are positive integers $a,b$ such that $at = 1+bn,$ so that for any $ x = x_1e_1 + \dots + x_te_t \in C,$ where $x_i \in C_i, 1\leq i \leq t,$ we have
    \begin{align*}
        \sigma_{\theta}^{at}(x) & = \rho^{at}(x_1)e_1 + \dots + \rho^{at}(x_t)e_t \\
                                & = \rho^{1+bn}(x_1)e_1 + \dots + \rho^{1+bn}(x_t)e_t \\
                                & = \rho(x_1)e_1 + \dots + \rho(x_t)e_t \in C,
    \end{align*}
    Thus $C_i$ is a cyclic code of length $n$ over $F_q,$ $1\leq i\leq t.$

    There are also positive integers $a^\prime,b^\prime$ such that $a^\prime n = 1+b^\prime t,$ so
    \begin{equation*}
        \sigma_{\theta}^{a^\prime n} (x) = x_te_1 + x_1e_2 + x_2e_3 + \dots + x_{t-1}e_t \in C,
    \end{equation*}
    thus for any $x_1 \in C_1,$ there is $x_1 \in C_2,$ then $C_1 \subset C_2.$ Similarly, $C_1\subset C_2 \subset C_3 \subset \dots \subset C_t \subset C_1,$ so $C_1 = C_2 = \dots = C_t.$
\end{proof}

\begin{remark}
    This theorem can be regarded as a special case of Theorem \ref{thm:generalcase} to be proved below, but for the reader to understand the following result better, this result is shown here first. Using Theorem \ref{thm:mainofcase1} we can obtain: under the conditions of the above theorem, $\theta$-cyclic code over $\fring$ must be a cyclic code over $\fring.$
\end{remark}

\begin{corollary} \label{cor:specialcase_coprime}
    If $(t,n) =1,$ $C = C_1e_1 + \dots + C_te_t$ is a linear code of length $n$ over $\fring.$ If $C$ is a $\theta$-cyclic code of length $n$ over $\fring,$ then $C^{\perp} = C_1^{\perp} e_1 + \dots + C_t^{\perp}e_t$ is also a $\theta$-cyclic code of length $n$ over $\fring.$ Moreover, $C^\perp = C$ if and only if $C_1^\perp = C_1.$ Let $x^n -1 = p_1^{k_1}(x) \dots p_s^{k_s}(x) $ is the decomposition of $x^n -1$ in $F_q[x]$, where $k_i$ is a positive integer and $p_i(x)$ is a monic irreducible polynomial in $F_q[x],$ $1\leq i \leq s,$ then the number of $\theta$-cyclic codes of length $n$ over $\fring$ is $(1+k_1 )\dots (1+k_s).$
\end{corollary}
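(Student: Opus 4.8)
The plan is to derive all three assertions from Theorem \ref{thm:specialcase} together with the componentwise description of dual codes in Theorem \ref{dualcodesoverR}, since under the hypothesis $(t,n)=1$ that theorem tells us a $\theta$-cyclic code over $\fring$ is nothing more than a single cyclic code $C_1$ over $F_q$ repeated in every coordinate. Everything then reduces to single-field statements that are either already established in the excerpt or are standard.

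For the first assertion, I would start from Theorem \ref{thm:specialcase}: if $C$ is $\theta$-cyclic then $C_1=C_2=\dots=C_t$ and $C_1$ is cyclic over $F_q$. By Theorem \ref{dualcodesoverR} we have $C^\perp = C_1^\perp e_1 + \dots + C_t^\perp e_t$, so its components are $C_1^\perp = \dots = C_t^\perp$, again all equal. It remains to note that $C_1^\perp$ is cyclic; this is the special case $\theta=\id$, $\lambda=1$ of Theorem \ref{dualcodesoverF}, where $\lambda^{-1}=1$, so the dual of a cyclic code is a cyclic code. Since $C^\perp$ thus has $t$ equal cyclic components, the sufficiency direction of Theorem \ref{thm:specialcase} shows $C^\perp$ is a $\theta$-cyclic code.

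For the self-dual statement, I would invoke Theorem \ref{dualcodesoverR} once more: since the decomposition of a code into components over $F_q$ is unique, $C=C^\perp$ holds precisely when $C_j=C_j^\perp$ for every $j$. Because $C_1=\dots=C_t$ and correspondingly $C_1^\perp=\dots=C_t^\perp$, all $t$ conditions collapse to the single condition $C_1=C_1^\perp$, which gives the claim.

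For the enumeration, the key observation is that $C\mapsto C_1$ is a bijection between $\theta$-cyclic codes of length $n$ over $\fring$ and cyclic codes of length $n$ over $F_q$: well-definedness and surjectivity are exactly the two directions of Theorem \ref{thm:specialcase}, while injectivity holds because $C$ is recovered as $C_1e_1+\dots+C_1e_t$. Hence the desired count equals the number of cyclic codes of length $n$ over $F_q$, which in turn equals the number of monic divisors of $x^n-1$ in $F_q[x]$, each cyclic code being generated by a unique such divisor. Given the factorization into distinct monic irreducibles $x^n-1=p_1^{k_1}(x)\cdots p_s^{k_s}(x)$, a monic divisor is obtained by independently choosing an exponent in $\{0,1,\dots,k_i\}$ for each $p_i$, yielding $\prod_{i=1}^s(1+k_i)$ choices. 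I expect no serious obstacle here; the only point requiring mild care is remembering that this count includes the two trivial codes, corresponding to the divisors $g(x)=x^n-1$ and $g(x)=1$, rather than counting only proper nontrivial factors.
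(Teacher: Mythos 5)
Your proposal is correct and follows essentially the same route as the paper: the paper's (much terser) proof likewise derives the first two assertions from Theorem \ref{dualcodesoverR} together with Theorem \ref{thm:specialcase}, and obtains the count from the one-to-one correspondence between cyclic codes over $F_q$ and ideals of $F_q[x]/(x^n-1)$, i.e.\ monic divisors of $x^n-1$. You have merely filled in the details the paper leaves implicit (dual of cyclic is cyclic via Theorem \ref{dualcodesoverF}, the collapse of the $t$ self-duality conditions to one, and the explicit bijection $C\mapsto C_1$).
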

\begin{proof}
    The first two assertions can be obtained from Theorem \ref{dualcodesoverR} and Theorem \ref{thm:specialcase}. Since cyclic codes over $F_q$ one to one corresponds to ideals of  $F_q[x]/(x^n-1),$ we get the conclusion about the number of $\theta$-cyclic codes.
\end{proof}

The following theorem generalizes Gao's \cite[Theorem 3.3,][]{gao2013skew}.

\begin{theorem} \label{thm:generalcase}
    If $(t,n) =\ell,$ $C = C_1e_1 + \dots + C_te_t$ is a linear code of length $n$ over $\fring.$ Then $C$ is a $\theta$-cyclic code of length $n$ over $\fring$ if and only if $C_i = C_{i+\ell} = \dots = C_{i+t-\ell}, 1\leq i \leq \ell, $ $\rho^\ell(C_1) = C_1,$ and $C_2 = \rho(C_1), C_3 = \rho^2(C_1),\dots,C_\ell = \rho^{\ell-1} (C_1).$ The number of $\theta$-cyclic codes of length $n$ over $\fring$ is equal to the number of quasi cyclic codes of length $n$ with index $\ell$ over $F_q.$
\end{theorem}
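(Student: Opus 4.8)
The plan is to reduce everything to the single relation
$\sigma_\theta(x) = \sum_{j=1}^t \rho(x_{j-1})\, e_j$ for $x = \sum_{j=1}^t x_j e_j \in \fring^n$, where the indices on the $e_j$ and the $x_j$ are read modulo $t$ and each $x_j \in F_q^n$. This is exactly the computation already made inside the proof of Theorem \ref{thm:specialcase}: since $\theta(e_i) = e_{i+1}$, the $e_j$-component of $\sigma_\theta(x)$ is the ordinary cyclic shift $\rho$ applied to the $(j-1)$-st coordinate vector of $x$. Before anything else I would record that $\sigma_\theta$ is a bijection of $\fring^n$, being a composite of the entrywise bijection $\theta$ and the shift, so that the defining inclusion $\sigma_\theta(C) \subseteq C$ of a $\theta$-cyclic code upgrades to $\sigma_\theta(C) = C$ on the finite set $C$.

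For necessity I would assume $C$ is $\theta$-cyclic and feed the codeword $x_{j-1} e_{j-1} \in C$ (permissible because $C = \sum_i C_i e_i$) into $\sigma_\theta$; its only nonzero component is $\rho(x_{j-1})$ sitting in position $j$, so $\rho(C_{j-1}) \subseteq C_j$ for every $j$ modulo $t$. As $\rho$ is a bijection this gives $|C_{j-1}| \leq |C_j|$, and running once around the cycle $1 \to 2 \to \dots \to t \to 1$ forces all cardinalities equal, hence $\rho(C_{j-1}) = C_j$. Iterating yields $C_i = \rho^{i-1}(C_1)$ for all $i$, and closing the loop gives $\rho^t(C_1) = C_1$; combining this with the automatic $\rho^n(C_1) = C_1$ and Bézout's identity $\ell = at + bn$ produces $\rho^\ell(C_1) = C_1$. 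From $C_i = \rho^{i-1}(C_1)$ together with $\rho^\ell(C_1)=C_1$ one then reads off both the periodicity $C_i = C_{i+\ell} = \dots = C_{i+t-\ell}$ and the chain $C_2 = \rho(C_1), \dots, C_\ell = \rho^{\ell-1}(C_1)$.

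For sufficiency the three stated conditions are precisely what is needed to re-establish $C_i = \rho^{i-1}(C_1)$ for every index $i$ modulo $t$, where $\ell \mid t$ guarantees $\rho^t(C_1)=C_1$ and hence consistency of the formula around the cycle. Then for any $x = \sum_j x_j e_j \in C$ with $x_j \in C_j$, the $e_j$-component $\rho(x_{j-1})$ of $\sigma_\theta(x)$ lies in $\rho(C_{j-1}) = \rho\bigl(\rho^{j-2}(C_1)\bigr) = \rho^{j-1}(C_1) = C_j$, so $\sigma_\theta(x) \in C$. The counting statement is then immediate: the characterization exhibits $C \mapsto C_1$ as a bijection between $\theta$-cyclic codes of length $n$ over $\fring$ and quasi cyclic codes of length $n$ with index $\ell$ over $F_q$, since each $\theta$-cyclic $C$ determines a $C_1$ with $\rho^\ell(C_1)=C_1$, and conversely every such $C_1$ reconstitutes a well-defined $C$ via $C_i = \rho^{i-1}(C_1)$.

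I expect the main obstacle to be the passage from the inclusions $\rho(C_{j-1}) \subseteq C_j$ to the equality $\rho^\ell(C_1) = C_1$: the cardinality-around-the-cycle argument that forces the inclusions to be equalities, and the subsequent reduction of $\rho^t(C_1)=C_1$ and $\rho^n(C_1)=C_1$ to $\rho^\ell(C_1)=C_1$ via Bézout, are the only genuinely non-formal steps. Everything else is index bookkeeping modulo $t$, which is routine but must be handled carefully so that the cyclic indexing of the $e_j$ and of the shifted codes stays consistent.
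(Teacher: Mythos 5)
Your proof is correct and follows essentially the same route as the paper's: the explicit component formula for $\sigma_\theta$, inclusions among the $C_i$ chased around the cycle and upgraded to equalities by finiteness, B\'ezout's identity to pass from $t$ and $n$ to $\ell=(t,n)$, and the bijection $C \mapsto C_1$ with quasi cyclic codes of index $\ell$ for the count. The only organizational difference is that the paper reads off the periodicity $C_i = C_{i+\ell}$ and the invariance $\rho^\ell(C_i)=C_i$ by applying the well-chosen iterates $\sigma_\theta^{a^\prime n}$ and $\sigma_\theta^{at}$ (with $a^\prime n = \ell + b^\prime t$ and $at = \ell + bn$), whereas you use only the first iterate, upgrade $\rho(C_{j-1})\subseteq C_j$ to equality by the cardinality argument, and do the gcd arithmetic once on the stabilizer $\{k\in\mathbb{Z} \colon \rho^k(C_1)=C_1\}$; both are sound.
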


\begin{proof}
    Sufficiency. For any $x = x_1e_1 + \dots + x_te_t \in C,$ where $x_i \in C_i, 1\leq i \leq t.$ $\sigma_{\theta}(x) = \rho( x_t)e_1 + \rho(x_1)e_2 + \dots + \rho(x_{t-1})e_t \in C.$

    Necessity. Because $(t,n)=\ell,$ there is a positive integer $a^\prime,b^\prime$ such that $a^\prime n = \ell+b^\prime t,$ then for any $x = x_1e_1 + \dots + x_te_t \in C,$ where $x_i \in C_i, 1\leq i \leq t,$ we have
    \begin{equation*}
        \sigma_{\theta}^{a^\prime n} (x) = x_1e_{1+\ell} + x_2e_{2+\ell} + \dots +x_{t-\ell}e_t+ x_{t-\ ell+1}e_1+\dots+ x_{t}e_{\ell} \in C,
    \end{equation*}
    so $C_i\subset C_{i+\ell} \subset C_{i+2\ell} \subset \dots \subset C_{i+t-\ell} \subset C_i, 1\leq i \leq \ell.$ Thus $C_i = C_{i+\ell} = \dots = C_{i+t-\ell}, 1\leq i \leq \ell.$

    There are positive integers $a,b$ such that $at = \ell+bn,$ thus
    \begin{align*}
        \sigma_{\theta}^{at} (x) & = \rho^{at}(x_1)e_1 + \dots + \rho^{at}(x_t)e_t \\
                                 & = \rho^{\ell+bn}(x_1)e_1 + \dots + \rho^{\ell+bn}(x_t)e_t \\
                                 & = \rho^\ell (x_1)e_1 + \dots + \rho^\ell (x_t)e_t \in C,
    \end{align*} 
    for any $ x_i \in C_i, $ we have $\rho^\ell (x_i) \in C_i,$ so $\rho^\ell (C_i) =C_i, 1\leq i \leq t.$

    From $\sigma_{\theta}(x) = \rho(x_t)e_1 + \rho(x_1)e_2 + \dots + \rho(x_{t-1})e_t \in C,$ we know $\rho( C_1) \subset C_2,$ $\rho(C_2) \subset C_3,$ $\dots,$ $\rho(C_t) \subset C_1,$
    Thus $C_1 = \rho^\ell(C_1) \subset \rho^{\ell-1}(C_2) \subset \rho^{\ell-2}(C_3) \subset \dots \subset \rho^{ 2}(C_{\ell-1}) \subset \rho(C_\ell) \subset C_{\ell+1} = C_1.$ So $C_2 = \rho(C_1), C_3 = \rho^2( C_1),\dots,C_\ell = \rho^{\ell-1}(C_1).$

    According to the above analysis, if $C = C_1e_1 + \dots + C_te_t$ is a $\theta$-cyclic code of length $n$ over $\fring,$ then $\rho^\ell(C_1) = C_1 ,$ and $C_j$ is determined by $C_1$ for $2\leq j \leq t.$ Conversely, if $\rho^\ell(C_1) = C_1,$ and $C_j, 2\leq j \leq t,$ is determined according to the above relationship. Then $C = C_1e_1 + \dots + C_te_t$ is a $\theta$-cyclic code of length $n$ over $\fring.$ Therefore, we find that $C$ is determined by $C_1$ satisfying $\rho^\ell(C_1) = C_1,$ from which the final conclusion about the number of $\theta$-cyclic codes can be obtained.
\end{proof}

For general $\theta \in G_2,$ let its corresponding element in $S_t$ be $\bar{\theta},$ then $\bar{\theta}$ can be expressed as the product of some disjoint cycles, if $C$ is a $\theta$-cyclic code over $\fring,$ then $C_1,C_2,\dots,C_t$ are partitioned according to cycles, each part is essentially determined by one code, it can also be proved that each $C_i$ is a quasi-cyclic code over $F_q.$ Since this result can be obtained from the following general result, we do not write a separate proof.

\subsection{General case}

Let $\theta\in \Aut(\fring),$ $\theta$ act on $F_q1_{\fring}$ as $\theta(a1_{\fring}) = \psi_1(a)\theta( e_1)+\dots+\psi_t(a)\theta(e_t), \psi_j \in \Aut(F_q), 1\leq j \leq t.$ And $\theta$ corresponds to $\bar{\theta}$ in the symmetric group $S_t.$ That is to say, for any $\sum_{j=1}^ta_je_j\in \fring,$ \ $\theta\left(\sum_{j=1}^t a_je_j\right) = \sum_{j=1}^t \psi_j(a_j) e_{\bar{\theta}(j)}.$

A general characterization of \tlcycliccode\ over $\fring$ is given below.
\begin{theorem}[Characterization of skew constacyclic code] \label{thm:Generalcase}
    Let $\theta\in \Aut(\fring),$ $\lambda=\lambda_1e_1+\dots+\lambda_te_t,$ where $\lambda_j\in F_q^\ast, 1\leq j \leq t.$ If $C= C_1e_1+\dots+C_te_t$ is a linear code of length $n$ over $\fring,$ then $C$ is a \tlcycliccode\ over $\fring$ if and only if $\rho_{\psi_{j}, \lambda_{\bar{\theta}(j)}}(C_j)= C_{\bar{\theta}(j)}, 1\leq j \leq t.$
\end{theorem}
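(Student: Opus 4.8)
The plan is to reduce the statement to a single component-wise computation of the shift operator $\sigma_{\theta,\lambda}$, after which both implications follow by bookkeeping. I would write a codeword as $c=(c_0,\dots,c_{n-1})$ with $c_k=\sum_{j=1}^t x_{kj}e_j$, so that the $j$-th component-code entry is $x_j=(x_{0j},\dots,x_{n-1,j})\in C_j$. First I would expand $\theta(c_k)=\sum_{j=1}^t\psi_j(x_{kj})e_{\bar\theta(j)}$ using the given form of $\theta$, and then compute $\lambda\theta(c_{n-1})=\sum_{j=1}^t\lambda_{\bar\theta(j)}\psi_j(x_{n-1,j})e_{\bar\theta(j)}$, where the indices collapse because $e_i e_{\bar\theta(j)}=\delta_{i,\bar\theta(j)}e_i$ by pairwise orthogonality of the idempotents. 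Substituting into the definition of $\sigma_{\theta,\lambda}$ gives every coordinate of $\sigma_{\theta,\lambda}(c)$ explicitly.

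The key identity I would then extract is that, for each fixed $j$, the $e_{\bar\theta(j)}$-component of $\sigma_{\theta,\lambda}(c)$ equals $\rho_{\psi_j,\lambda_{\bar\theta(j)}}(x_j)$. Indeed, collecting the coefficient of $e_{\bar\theta(j)}$ across all $n$ coordinates yields the vector $\bigl(\lambda_{\bar\theta(j)}\psi_j(x_{n-1,j}),\,\psi_j(x_{0j}),\dots,\psi_j(x_{n-2,j})\bigr)$, which is exactly the image of $x_j$ under the field-level skew-constacyclic shift $\rho_{\psi_j,\lambda_{\bar\theta(j)}}$ of Section~\ref{sec:skewF_q}. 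Thus $\sigma_{\theta,\lambda}(c)=\sum_{j=1}^t \rho_{\psi_j,\lambda_{\bar\theta(j)}}(x_j)\,e_{\bar\theta(j)}$, which is the whole content of the theorem in disguise; reindexing by $i=\bar\theta(j)$ shows the $e_i$-component of $\sigma_{\theta,\lambda}(c)$ is $\rho_{\psi_{\bar\theta^{-1}(i)},\lambda_i}(x_{\bar\theta^{-1}(i)})$.

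For sufficiency, assuming $\rho_{\psi_j,\lambda_{\bar\theta(j)}}(C_j)=C_{\bar\theta(j)}$ for all $j$ and taking any $c\in C$, the key identity places each $e_{\bar\theta(j)}$-component of $\sigma_{\theta,\lambda}(c)$ in $C_{\bar\theta(j)}$; since $\bar\theta$ permutes $\{1,\dots,t\}$, this means $\sigma_{\theta,\lambda}(c)\in C_1e_1+\dots+C_te_t=C$ by Theorem~\ref{decomposingcodes}. For necessity, given $x_j\in C_j$ I would test $C$ on the single-component codeword $x_je_j\in C$ (legitimate because $0\in C_i$ for every $i$); since $\rho_{\psi_k,\lambda_{\bar\theta(k)}}(0)=0$ its image is $\rho_{\psi_j,\lambda_{\bar\theta(j)}}(x_j)\,e_{\bar\theta(j)}$, which must lie in $C$, forcing $\rho_{\psi_j,\lambda_{\bar\theta(j)}}(C_j)\subseteq C_{\bar\theta(j)}$.

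Upgrading this containment to equality is the one non-routine point, and I expect it to be the main obstacle alongside keeping the $\bar\theta$ versus $\bar\theta^{-1}$ indexing consistent. Each $\rho_{\psi_j,\lambda_{\bar\theta(j)}}$ is a bijection of $F_q^n$ (a cyclic shift composed with the automorphism $\psi_j$ and multiplication by the unit $\lambda_{\bar\theta(j)}$), so $|C_j|=|\rho_{\psi_j,\lambda_{\bar\theta(j)}}(C_j)|\leq|C_{\bar\theta(j)}|$ for every $j$. Summing over $j$ and using that $\bar\theta$ is a permutation gives $\sum_j|C_j|\leq\sum_j|C_{\bar\theta(j)}|=\sum_j|C_j|$, so every inequality is forced to be an equality and each inclusion $\rho_{\psi_j,\lambda_{\bar\theta(j)}}(C_j)\subseteq C_{\bar\theta(j)}$ becomes an equality, completing the proof.
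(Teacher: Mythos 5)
Your proposal is correct and follows essentially the same route as the paper's proof: the same component-wise computation showing $\sigma_{\theta,\lambda}(x_je_j)=\rho_{\psi_j,\lambda_{\bar{\theta}(j)}}(x_j)e_{\bar{\theta}(j)}$, the same test on single-component codewords for necessity, and the same injectivity-plus-finiteness argument to upgrade the inclusion $\rho_{\psi_j,\lambda_{\bar{\theta}(j)}}(C_j)\subset C_{\bar{\theta}(j)}$ to equality (the paper chains cardinalities around the $\bar{\theta}$-orbit of $j$, you sum over all $j$; these are trivially equivalent uses of the permutation structure).
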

\begin{proof}
    For any $ x=x_1e_1 + \dots + x_te_t \in C,$ it can be calculated directly that
    \begin{align*}
        \sigma_{\theta,\lambda}(x) & = \left( \lambda\theta\left( \sum_{j=1}^t x_{n-1,j}e_j \right),\theta\left( \sum_{j=1}^t x_{0j}e_j \right),\dots,\theta\left( \sum_{j=1}^t x_{n-2,j}e_j \right) \right) \\
                                        & =\left( \lambda \sum_{j=1}^t \psi_{j}\left(x_{n-1,j}\right)e_{\bar{\theta}(j)}, \sum_ {j=1}^t \psi_{j}\left(x_{0j}\right)e_{\bar{\theta}(j)}, \dots,\sum_{j=1}^t \psi_{ j}\left(x_{n-2,j}\right)e_{\bar{\theta}(j)} \right) \\
                                        & = \left( \sum_{j=1}^t \lambda_{\bar{\theta}(j)} \psi_{j}\left(x_{n-1,j}\right)e_{\bar {\theta}(j)}, \sum_{j=1}^t \psi_{j}\left(x_{0j}\right)e_{\bar{\theta}(j)}, \dots,\ sum_{j=1}^t \psi_{j}\left(x_{n-2,j}\right)e_{\bar{\theta}(j)} \right), \\
        \sigma_{\theta,\lambda}(x_je_j) & = \left( \lambda\theta\left(x_{n-1,j}e_j\right),\theta\left(x_{0j}e_j\right) ,\dots,\theta\left(x_{n-2,j}e_j\right) \right) \\
                                        & = \left( \lambda_{\bar{\theta}(j)} \psi_{j}\left(x_{n-1,j}\right)e_{\bar{\theta}(j)}, \psi_{j}\left(x_{0j}\right)e_{\bar{\theta}(j)}, \dots, \psi_{j}\left(x_{n-2,j}\right) e_{\bar{\theta}(j)} \right) \\
                                        & = \rho_{ \psi_{j},\lambda_{\bar{\theta}(j)} }(x_j)e_{\bar{\theta}(j)}.
    \end{align*}

    Necessity. If $C$ is a \tlcycliccode, then for any $x_j=(x_{0j},x_{1j},\dots,x_{n-1,j}) \in C_j,$ naturally $x_je_j \in C,$ thus $\sigma_{\theta,\lambda}(x_je_j) \in C,1\leq j \leq t.$
    So for any $ x_j \in C_j,$ $\rho_{\psi_{j},\lambda_{\bar{\theta}(j)}}(x_j)\in C_{\bar{\theta}(j )}, 1\leq j \leq t.$
    Thus $\rho_{\psi_{j},\lambda_{\bar{\theta}(j)}}(C_j)\subset C_{\bar{\theta}(j)}, 1\leq j \leq t .$ Since $\rho_{\psi_j,\lambda_{\bar{\theta}(j)}}$ is injective, so $\left|C_j\right| \leq \left|C_{\bar{\theta}(j)}\right| \leq \left|C_{\bar{\theta}^2(j)}\right|\leq \dots \leq \left|C_j\right| ,$ we get $\rho_{\psi_{j,\lambda_{\bar{\theta}(j)}}}(C_j)= C_{\bar{\theta}(j)}, 1\leq j \leq t.$

    Sufficiency. It can be directly verified that for any $ x=x_1e_1 + \dots + x_te_t \in C,$ there is $\sigma_{\theta,\lambda}(x) \in C.$
\end{proof}

As an application of the above theorem, we can obtain a general characterization of skew cyclic codes over $\fring.$

\begin{theorem}
    Let $\theta\in \Aut(\fring),$ $C=C_1e_1+\dots+C_te_t$ be a linear code of length $n$ over $\fring,$ then $C$ is a $\theta$-cyclic code over $\fring$ if and only if $\rho_{\psi_{j}}(C_j)= C_{\bar{\theta}(j)},\ 1\leq j \leq t.$ \qed
\end{theorem}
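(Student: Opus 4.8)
The plan is to recognize that this statement is exactly the specialization of Theorem \ref{thm:Generalcase} to the case $\lambda = 1_{\fring}$, so essentially no new work is required beyond a clean reduction. By definition a $\theta$-cyclic code over $\fring$ is precisely a \tlcycliccode\ with $\lambda = 1_{\fring}$. Accordingly, I would begin by writing $\lambda = 1_{\fring} = e_1 + \dots + e_t$, which forces $\lambda_j = 1$ for every $1 \leq j \leq t$, and in particular $\lambda_{\bar{\theta}(j)} = 1$ for every $j$ since $1_{\fring}$ has all components equal to $1$.

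With this choice in hand, Theorem \ref{thm:Generalcase} applies verbatim: $C = C_1 e_1 + \dots + C_t e_t$ is a \tlcycliccode\ over $\fring$ if and only if $\rho_{\psi_j,\,\lambda_{\bar{\theta}(j)}}(C_j) = C_{\bar{\theta}(j)}$ for all $1 \leq j \leq t$. The remaining step is purely notational: substituting $\lambda_{\bar{\theta}(j)} = 1$ into the skew constacyclic shift over $F_q$ recovers the $\psi_j$-cyclic (skew cyclic) shift, since by the defining formula $\rho_{\psi_j,1}$ sends $(c_0,\dots,c_{n-1})$ to $(\psi_j(c_{n-1}),\psi_j(c_0),\dots,\psi_j(c_{n-2}))$, which is exactly what the abbreviated symbol $\rho_{\psi_j}$ denotes. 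Thus the condition collapses to $\rho_{\psi_j}(C_j) = C_{\bar{\theta}(j)}$, which is the desired equivalence.

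The only point that demands any attention — and the closest thing to an obstacle — is verifying that the multiplier is correctly propagated through the index $\bar{\theta}(j)$, i.e., that replacing $\lambda$ by $1_{\fring}$ really does set every relevant constacyclic factor to $1$. This is immediate because $1_{\fring}$ is fixed by every automorphism of $\fring$ and all of its idempotent components are $1$, so no constacyclic twist survives in any coordinate. Consequently the general characterization degenerates cleanly, and I would present the proof simply as an invocation of Theorem \ref{thm:Generalcase} with $\lambda = 1_{\fring}$, noting the simplification $\rho_{\psi_j,1} = \rho_{\psi_j}$ and requiring no further computation.
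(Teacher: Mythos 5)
Your proposal is correct and is exactly the paper's own route: the paper states this theorem with no separate argument precisely because it is the specialization of Theorem \ref{thm:Generalcase} to $\lambda = 1_{\fring}$, which forces $\lambda_{\bar{\theta}(j)} = 1$ for every $j$ and collapses $\rho_{\psi_j,\lambda_{\bar{\theta}(j)}}$ to the skew cyclic shift $\rho_{\psi_j}$. Nothing further is needed.
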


\begin{remark}
    This is a more general conclusion than Theorem \ref{thm:specialcase} and Theorem \ref{thm:generalcase}, but since $\theta$ is more general, the resulting necessary and sufficient conditions are not as clear as the previous ones.
\end{remark}
\section{Image of homomorphism}
The reason for studying linear codes over finite commutative rings is: some nonlinear codes with good parameters over binary field can be regarded as the Gray images of linear codes over $\mathbb{Z}_4$ \cite{ Z4linear}, thus promoting the understanding of the original nonlinear codes. Similar to the previous results, we will establish relationship between linear codes over $\fring$ and linear codes over $F_q$ by defining homomorphisms. The main results we get are: homomorphism image of a linear code over $\fring$ is a matrix product code over $F_q,$ some optimal linear codes over $F_q$ can also be viewed as images of the homomorphisms we have defined.

\subsection{Isomorphism}

We previously defined an isomorphic map $\varphi$ from $\fring^n$ to $F_q^{tn}$ and found that the linear code $C$ over $\fring$ is completely determined by the linear code $C_1,C_2,\dots,C_t$ over $F_q.$ About $\varphi(C)$ we have the following conclusion.

\begin{proposition}
    If $C=C_1e_1 + \dots + C_te_t$ is a linear code of length $n$ over $\fring,$ where the parameters of $C_i$ are $[n,k_i,d_i],$ then $\varphi(C) = C_1\times \dots \times C_t =\{(x_1,\dots,x_t) \in F_q^{tn}: x_i \in C_i, 1\leq i\leq t\}$ is a linear code over $F_q$ with parameters $[tn,k_1+\dots + k_t ,\min\{d_1,\dots,d_t\}].$ \qed
\end{proposition}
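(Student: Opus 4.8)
The plan is to assemble the statement from facts already established in the excerpt, handling the three claimed parameters $tn$, $k_1+\dots+k_t$, and $\min\{d_1,\dots,d_t\}$ in turn. First I would pin down the set $\varphi(C)$ explicitly. By Theorem \ref{decomposingcodes} we may write $C = C_1 e_1 + \dots + C_t e_t$, so a typical codeword has the form $\left(\sum_{j=1}^t x_{0j}e_j, \dots, \sum_{j=1}^t x_{n-1,j}e_j\right)$ with $(x_{0j},\dots,x_{n-1,j}) \in C_j$ and the blocks ranging independently. Recalling that $\varphi$ regroups the coordinates by idempotent component, the image of such a word is exactly $(x_1,\dots,x_t)$ with $x_j \in C_j$; hence $\varphi(C) = C_1 \times \dots \times C_t$. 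This is the only place where one must be slightly careful about bookkeeping, namely that the ordering convention for $\varphi$ lists all $e_1$-components first, then all $e_2$-components, and so on, so that the image really is a Cartesian product rather than an interleaving.

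Length and linearity are then immediate: since $\varphi$ is a vector-space isomorphism from $\fring^n$ onto $F_q^{tn}$ (the earlier Proposition), $\varphi(C)$ is an $F_q$-subspace of $F_q^{tn}$, i.e.\ a linear code of length $tn$. For the dimension I would use the block-diagonal generator matrix $\mathrm{diag}(G_1,\dots,G_t)$ furnished by Theorem \ref{decomposingcodes}: its rows are the rows of the individual $G_j$ placed in disjoint coordinate blocks, so they are linearly independent and number $k_1 + \dots + k_t$. Equivalently, injectivity of $\varphi$ gives $\dim_{F_q}\varphi(C) = \dim_{F_q} C$, and the product description yields $\dim_{F_q} C = \sum_{j=1}^t \dim_{F_q} C_j = \sum_{j=1}^t k_j$.

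For the minimum distance I would invoke that $\varphi$ is distance-preserving together with the formula $\mathrm{d}_G(C) = \min\{\dhamming(C_j) : 1\le j\le t\}$ already recorded in Theorem \ref{decomposingcodes}, which combine to give $\dhamming(\varphi(C)) = \mathrm{d}_G(C) = \min\{d_1,\dots,d_t\}$. A self-contained argument is equally short: for a codeword $(x_1,\dots,x_t) \in C_1\times\dots\times C_t$ the Hamming weight splits as $\wt_H(x_1,\dots,x_t) = \sum_{j=1}^t \wt_H(x_j)$, and a nonzero word of least weight is produced by taking one block $x_j$ to be a minimum-weight codeword of $C_j$ and all other blocks zero, giving weight $d_j$. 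I expect no genuine obstacle, as every ingredient is available from the preceding results; the only point worth stating explicitly is why this minimum over $j$ is attained and not undercut by a word with several nonzero blocks, which is clear because weights are nonnegative and add across blocks, so engaging more than one block can only increase the total weight.
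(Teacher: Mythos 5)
Your proposal is correct and follows exactly the route the paper intends: the paper states this proposition with no written proof (marked \qed\ as immediate), precisely because it follows by combining Theorem \ref{decomposingcodes} with the fact that $\varphi$ is a distance-preserving vector-space isomorphism, which is what you spell out. Your block-diagonal generator matrix argument for the dimension and the one-nonzero-block argument for attaining the minimum distance are the natural explicit versions of these omitted details, and both are sound.
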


\begin{example}
    Let $C_1=C_2=C_3=C_4$ be a cyclic code of length $3$ generated by $x-1$ over $F_2,$ then their parameters are $[3,2,2],$ then $C=C_1e_1 + C_2e_2$ is a cyclic code of length $3$ over $F_2\times F_2,$ and $\varphi(C)$ is a linear code over $F_2$ with parameters $[6,4,2].$ Look up the table \cite{Grassl:codetables}, we see that it is an optimal linear code. Similarly, $\varphi(C_1e_1 + C_2e_2+C_3e_3)$ is a linear code over $F_2$ with parameters $[9,6, 2],$ which is also an optimal linear code. However, $\varphi(C_1e_1 + C_2e_2+C_3e_3+C_4e_4)$ is a linear code with parameters $[12,8,2]$ over $F_2$, which is not an optimal linear code.
\end{example}

Because the length and dimension of $\varphi(C)$ is larger than the sum of $C_1, C_2, \dots, C_t$ respectively, but the minimum distance does not become larger, so in general $\varphi(C)$ will not have good parameters. But we can give linear codes with good parameters over $F_q$ by constructing other isomorphism.

Let $M=(m_{ij})_{t\times t}$ be an invertible matrix over $F_q,$ consider the mapping $\eta_M$ from $\fring $ to $F_q^t :$
\begin{equation*}
    \sum_{k=1}^t a_ke_k \mapsto (a_1,a_2,\dots,a_t)M=\left( \sum_{k=1}^t a_km_{k1}, \sum_{k=1}^t a_km_{k2},\dots,\sum_{k=1}^t a_km_{kt}, \right) .
\end{equation*}
Obviously, $\eta_M$ is an isomorphism of vector spaces over $F_q,$ and induces a mapping from $\fring^n$ to $F_q^{tn},$ still denoted by $\eta_M$, it maps
\begin{equation*}
    x = \left(x_0,x_1,\dots,x_{n-1}\right)=\left(\sum_{k=1}^t x_{0k}e_k,\sum_{k=1}^t x_ {1k}e_k,\dots,\sum_{k=1}^t x_{n-1,k}e_k\right)
\end{equation*}
to
\begin{equation*}
    \begin{gathered}
        \biggl( \sum_{k=1}^t x_{0k}m_{k1},\sum_{k=1}^t x_{1k}m_{k1},\dots,\sum_{k=1}^ t x_{n-1,k}m_{k1},\phantom{\biggr)}\\
        \phantom{\biggl(} \sum_{k=1}^t x_{0k}m_{k2},\sum_{k=1}^t x_{1k}m_{k2},\dots,\sum_{k =1}^t x_{n-1,k}m_{k2} ,\phantom{\biggr)} \\
        \dots, \\
        \phantom{\biggl(}\sum_{k=1}^t x_{0k}m_{kt},\sum_{k=1}^t x_{1k}m_{kt},\dots,\sum_{k =1}^t x_{n-1,k}m_{kt} \biggr),
    \end{gathered}
\end{equation*}
which is
\begin{equation*}
    \eta_M(x) =\varphi(x) \left(M\otimes E_n \right),
\end{equation*}
where $E_n$ represents the identity matrix of order $n$ over $F_q,$ and $M\otimes E_n$ represents the tensor product of $M$ and $E_n.$
It is not difficult to find that $\eta_M$ is an isomorphism from the vector space $\fring^n$ over $F_q$ to $F_q^{tn},$ and $\varphi$ defined previously is actually $\eta_{E_t}.$ Let $x_k = \left(x_{0k},x_{1k},\dots,x_{n-1,k}\right), 1\leq k \leq t,$ then $\varphi( x) = (x_1,x_2,\dots,x_t),$
\begin{equation*}
    \begin{aligned}
        \eta_M (x) & = \eta_M (x_1e_1 + x_2e_2 + \dots + x_te_t) \\
                   & = \left( \sum_{k=1}^t m_{k1}x_k, \sum_{k=1}^t m_{k2}x_k,\dots,\sum_{k=1}^t m_{kt }x_k \right).
    \end{aligned}
\end{equation*}

\begin{lemma} \label{lem:dualofimage}
    Let $C$ be a linear code of length $n$ over $\fring,$ then $\varphi\left(C^\perp\right)=\varphi(C)^\perp,$ and $C= C^\perp$ if and only if $\varphi(C) = \varphi(C)^\perp.$
\end{lemma}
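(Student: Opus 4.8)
The plan is to reduce the statement to the decomposition $C = C_1e_1 + \dots + C_te_t$ together with the descriptions of $\varphi(C)$ and $C^\perp$ already obtained. First I would recall from the preceding proposition that $\varphi(C) = C_1\times\dots\times C_t$, and from Theorem \ref{dualcodesoverR} that $C^\perp = C_1^\perp e_1 + \dots + C_t^\perp e_t$; applying $\varphi$ to the latter yields $\varphi(C^\perp) = C_1^\perp\times\dots\times C_t^\perp$. Hence the first claim $\varphi(C^\perp)=\varphi(C)^\perp$ is equivalent to the product identity $(C_1\times\dots\times C_t)^\perp = C_1^\perp\times\dots\times C_t^\perp$ for the standard inner product on $F_q^{tn}$, and this is the only substantive point to establish.

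The key observation making that identity hold is that the coordinate ordering defining $\varphi$ groups $F_q^{tn}$ into $t$ consecutive blocks of length $n$, the $j$-th block consisting of the $e_j$-components. Writing a vector of $F_q^{tn}$ as $(v_1,\dots,v_t)$ with each $v_j\in F_q^n$, the standard inner product splits as $\sum_{j=1}^t v_j\cdot w_j$, so it is block-diagonal with respect to this grouping. I would then prove the product identity by a short double inclusion: for the nontrivial direction, given $(w_1,\dots,w_t)\in(C_1\times\dots\times C_t)^\perp$, I would test against vectors supported in a single block, i.e. take $v_i=0$ for $i\neq j$ and $v_j\in C_j$ arbitrary, forcing $v_j\cdot w_j=0$ and hence $w_j\in C_j^\perp$ for each $j$; the reverse inclusion is immediate from the block splitting of the inner product. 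This gives $\varphi(C)^\perp = C_1^\perp\times\dots\times C_t^\perp = \varphi(C^\perp)$.

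For the second statement I would use that $\varphi$ is a bijection, being an isomorphism of $F_q$-vector spaces. Then $C = C^\perp$ holds if and only if $\varphi(C)=\varphi(C^\perp)$, and by the first part $\varphi(C^\perp)=\varphi(C)^\perp$, so this is equivalent to $\varphi(C)=\varphi(C)^\perp$. The arguments are entirely routine; the only point requiring genuine care is the block-diagonality of the inner product under $\varphi$, which is exactly what lets the dual of a product code factor as the product of the duals. I would flag explicitly that one must use the grouped ordering (all $e_1$-components first, then all $e_2$-components, and so on) rather than an interleaved ordering, since only the grouped one makes the inner product separate across the $t$ blocks.
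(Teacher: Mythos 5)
Your proof is correct, but it finishes differently from the paper's. Both arguments rest on Theorem \ref{dualcodesoverR} and on the block structure of $\varphi$ (the grouped ordering makes the standard inner product on $F_q^{tn}$ split as $\sum_{j=1}^t v_j\cdot w_j$ across the $t$ blocks of length $n$), and both obtain the inclusion $\varphi(C^\perp)\subset\varphi(C)^\perp$ this way. The difference is in how equality is forced. You prove the reverse inclusion explicitly: an element of $(C_1\times\dots\times C_t)^\perp$ is tested against vectors supported in a single block, which is legitimate because each $C_i$ contains $0$, yielding $w_j\in C_j^\perp$ for every $j$. The paper instead stops after the one inclusion and concludes by a cardinality count: since $\varphi$ is a bijection and $|C_j^\perp|=q^n/|C_j|$ for linear codes over $F_q$, it computes $|\varphi(C^\perp)|=|C_1^\perp|\cdots|C_t^\perp|=q^{tn}/|\varphi(C)|=|\varphi(C)^\perp|$, so the inclusion is an equality. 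Your route is more elementary and self-contained — it never invokes the dimension formula $\dim C_j+\dim C_j^\perp=n$ — and it isolates precisely where the grouped coordinate ordering is used; the paper's route is shorter once that standard formula is granted, at the cost of being non-constructive. The treatment of the second statement, via bijectivity of $\varphi$, is the same in both.
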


\begin{proof}
    For any $x=x_1e_1 + \dots + x_te_t\in C^\perp,$ $y=y_1e_1 + \dots + y_te_t \in C,$ where $x_j\in C_j^\perp, y_j \in C_j, 1 \leq j\leq t.$ Then $\varphi(x) = (x_1,\dots,x_t),\varphi(y)=(y_1,\dots,y_t),$
    $\varphi(x)\cdot \varphi(y) = x_1\cdot y_1 + \dots + x_t\cdot y_t = 0,$ so $\varphi(C^\perp) \subset \varphi(C)^\perp .$
    Since $\varphi$ is bijective,
    \begin{equation*}
        \left|\varphi(C^\perp)\right| = \left|C^\perp\right| = \left|C_1^\perp\right|\dots \left|C_t^\perp\right|=\frac{\left|F_q^{n}\right|}{|C_1|} \dots \frac{\left|F_q^{n}\right|}{|C_t|} =\frac{q^{tn} }{|C|}= \frac{\left|F_q^{tn}\right|}{|\varphi(C)|} = \left|\varphi(C)^\perp\right|.
    \end{equation*}
    Thus, $\varphi(C^\perp) = \varphi(C)^\perp.$ The remained statement follows from this equation.
\end{proof}

\begin{theorem}
    Let $C$ be a linear code of length $n$ over $\fring.$ If $MM\trans = kE_t,$ where $k\in F_q^\ast,$ then $\eta_M \left(C^\perp\right)=\eta_M(C)^\perp,$ and $C=C^\perp$ if and only if $\eta_M (C) = \eta_M (C)^\perp.$
\end{theorem}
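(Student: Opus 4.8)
The plan is to reduce the whole statement to the already-established Lemma \ref{lem:dualofimage} by proving a single inner-product identity that captures the effect of $M$. The heart of the matter is the relation $\eta_M(x) = \varphi(x)(M \otimes E_n)$ together with the hypothesis $MM\trans = kE_t$. First I would compute, for arbitrary $x,y \in \fring^n$, the Euclidean pairing of their images over $F_q$, using $u \cdot v = uv\trans$ for row vectors:
\begin{align*}
    \eta_M(x) \cdot \eta_M(y)
    &= \varphi(x)(M \otimes E_n)(M \otimes E_n)\trans \varphi(y)\trans \\
    &= \varphi(x)\bigl((MM\trans) \otimes E_n\bigr)\varphi(y)\trans \\
    &= k\,\varphi(x)\varphi(y)\trans = k\,\bigl(\varphi(x) \cdot \varphi(y)\bigr),
\end{align*}
where I invoke the mixed-product rule $(A \otimes B)(C \otimes D) = (AC) \otimes (BD)$, the identity $(M \otimes E_n)\trans = M\trans \otimes E_n$, and finally $(MM\trans) \otimes E_n = (kE_t) \otimes E_n = kE_{tn}$. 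Since $k \in F_q^\ast$, this shows that $\eta_M(x) \cdot \eta_M(y) = 0$ if and only if $\varphi(x) \cdot \varphi(y) = 0$.

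With this identity in hand the inclusion $\eta_M(C^\perp) \subset \eta_M(C)^\perp$ is immediate: for $x \in C^\perp$, Lemma \ref{lem:dualofimage} gives $\varphi(x) \in \varphi(C)^\perp$, so $\varphi(x) \cdot \varphi(y) = 0$ for every $y \in C$, whence $\eta_M(x) \cdot \eta_M(y) = 0$ for every $y \in C$, i.e. $\eta_M(x) \in \eta_M(C)^\perp$. To upgrade the inclusion to equality I would compare cardinalities: since $\eta_M$ is an $F_q$-vector space isomorphism, $|\eta_M(C^\perp)| = |C^\perp| = q^{tn}/|C|$, while $|\eta_M(C)^\perp| = |F_q^{tn}|/|\eta_M(C)| = q^{tn}/|C|$; the two sets thus have equal size, and the inclusion forces $\eta_M(C^\perp) = \eta_M(C)^\perp$.

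The self-duality equivalence then drops out of bijectivity. If $C = C^\perp$, applying $\eta_M$ and the equality just proved gives $\eta_M(C) = \eta_M(C^\perp) = \eta_M(C)^\perp$; conversely, if $\eta_M(C) = \eta_M(C)^\perp = \eta_M(C^\perp)$, then injectivity of $\eta_M$ yields $C = C^\perp$.

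The only step requiring genuine care is the opening computation; everything afterward is bookkeeping parallel to the proof of Lemma \ref{lem:dualofimage}. I expect the main pitfall to lie precisely in the transpose/tensor manipulation — in particular, remembering that the Euclidean pairing produces $(M \otimes E_n)(M \otimes E_n)\trans$ rather than $(M \otimes E_n)\trans(M \otimes E_n)$, so that the hypothesis $MM\trans = kE_t$ (and not $M\trans M$) is what applies, and noting that the scalar factor $k$ can be cancelled at the level of orthogonality exactly because it is a unit in $F_q$.
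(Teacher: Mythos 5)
Your proposal is correct and follows essentially the same route as the paper's own proof: the identity $\eta_M(x)\cdot\eta_M(y) = \varphi(x)(M\otimes E_n)(M\otimes E_n)\trans\varphi(y)\trans = k\,\varphi(x)\cdot\varphi(y)$, Lemma \ref{lem:dualofimage} for the inclusion $\eta_M(C^\perp)\subset\eta_M(C)^\perp$, and a cardinality count via bijectivity of $\eta_M$ to upgrade it to equality. Your explicit justification of the Kronecker-product steps is merely a more detailed writing of what the paper leaves implicit.
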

\begin{proof}
    For any $x=x_1e_1 + \dots + x_te_t\in C^\perp,$ $y=y_1e_1 + \dots + y_te_t \in C,$ where $x_j\in C_j^\perp, y_j \in C_j, 1 \leq j\leq t.$ Then $\varphi(x) = (x_1,\dots,x_t),\varphi(y)=(y_1,\dots,y_t),$ by Lemma \ref{lem:dualofimage} we know
    $\varphi(x)\cdot \varphi(y) = x_1\cdot y_1 + \dots + x_t\cdot y_t = 0,$ note that $\eta_M(x) = \varphi(x) (M\otimes E_n) , \eta_M(y) = \varphi(y) (M\otimes E_n),$ so
    \begin{equation*}
        \begin{aligned}
            \eta_M(x) \cdot \eta_M(y) & = \varphi(x)(M\otimes E_n) (M\otimes E_n)\trans \varphi(y)\trans \\
                                      & = k\varphi(x)\varphi(y)\trans = k\varphi(x)\cdot \varphi(y) = 0,
        \end{aligned}
    \end{equation*} thus $\eta_M(C^\perp) \subset \eta_M (C)^\perp.$
    Since $\eta_M$ is bijective,
    \begin{equation*}
        \left|\eta_M(C^\perp)\right| =\left|\varphi(C^\perp)\right| = \frac{\left|F_q^{tn}\right|}{|\varphi( C)|} = \frac{\left|F_q^{tn}\right|}{|\eta_M(C)|} = \left|\eta_M(C)^\perp\right|.
    \end{equation*}
    Thus, $\eta_M(C^\perp) = \eta_M(C)^\perp.$ From this equation the remained statement follows.
\end{proof}

Matrix product codes were first defined by Blackmore et al. \cite{blackmore2001matrix}.
\begin{definition}
    Let $C_1,C_2,\dots,C_u$ be linear codes of length $n$ over $F_q,$ $A=(a_{ij})$ be a $u\times v$ matrix over $F_q,$
    \begin{align*}
        C & =[C_1,\dots ,C_u]\cdot A \\
          & = \left\{ \left(\sum_{k=1}^u a_{k1}c_k,\sum_{k=1}^u a_{k2}c_k,\dots,\sum_{k=1}^ u a_{kv}c_k\right) \in F_q^{nv}: c_j\in C_j, 1\leq j \leq u\right\}
    \end{align*}
    is a matrix product code over $F_q.$
\end{definition}
\begin{proposition}
    If $C=C_1e_1 + \dots + C_te_t$ is a linear code of length $n$ over $\fring,$ $M$ is a $t\times t$ invertible matrix over $F_q,$ then $\eta_M( C)$ is a matrix product code over $F_q$ of length $tn,$ and its dimension is $\dim C_1 + \dots + \dim C_t.$
\end{proposition}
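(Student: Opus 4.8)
The plan is to match the explicit formula already derived for $\eta_M$ against the definition of a matrix product code term by term, and then to recover the dimension from the fact that $\eta_M$ is an $F_q$-linear isomorphism. No factorization or module theory is needed; this is essentially a bookkeeping verification that two sets described by formulas coincide.

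First I would recall from Theorem~\ref{decomposingcodes} that a codeword of $C = C_1e_1 + \dots + C_te_t$ is precisely an element $x = x_1e_1 + \dots + x_te_t$ whose blocks $x_k = (x_{0k},x_{1k},\dots,x_{n-1,k})$ range \emph{independently} over the component codes $C_k$. Substituting this description into the expression for $\eta_M$ computed earlier, namely
\begin{equation*}
    \eta_M(x) = \left( \sum_{k=1}^t m_{k1}x_k,\; \sum_{k=1}^t m_{k2}x_k,\; \dots,\; \sum_{k=1}^t m_{kt}x_k \right),
\end{equation*}
I would observe that as $x$ runs over $C$ the right-hand side runs over exactly
\begin{equation*}
    \left\{ \left( \sum_{k=1}^t m_{k1}c_k,\; \dots,\; \sum_{k=1}^t m_{kt}c_k \right) : c_k \in C_k,\ 1\leq k\leq t \right\}.
\end{equation*}
Comparing with the definition of $[C_1,\dots,C_t]\cdot A$ (taking $u=v=t$, $A=M$, $a_{ij}=m_{ij}$), this is literally the matrix product code $[C_1,\dots,C_t]\cdot M$. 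Since each of the $t$ blocks has length $n$, the code $\eta_M(C)$ has length $tn$, as claimed.

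For the dimension I would invoke the fact, noted when $\eta_M$ was introduced, that $\eta_M$ is a bijective $F_q$-linear map from $\fring^n$ to $F_q^{tn}$; hence $\dim_{F_q}\eta_M(C) = \dim_{F_q}C$. Because $C$ is the internal direct sum of the $F_q$-subspaces $C_ke_k$ (the decomposition being direct by orthogonality of the idempotents $e_k$), and each $C_ke_k$ is $F_q$-isomorphic to $C_k$, I get $\dim_{F_q}C = \dim C_1 + \dots + \dim C_t$, which is the asserted dimension. Every step is a direct translation between two defining formulas, so there is no genuine obstacle; the only point deserving care is confirming that the blocks $x_k$ of a codeword of $C$ vary independently over the respective $C_k$ — this guarantees the image is the \emph{full} matrix product code rather than a proper subcode, and it is exactly what Theorem~\ref{decomposingcodes} supplies.
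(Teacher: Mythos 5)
Your proof is correct and follows essentially the same route as the paper's, which simply notes that $\eta_M(C) = [C_1,\dots,C_t]\cdot M$ by definition and that $\eta_M$ is a vector-space isomorphism; you have merely expanded that one-line observation into its constituent verifications (independence of the blocks $x_k$ over the $C_k$ via Theorem~\ref{decomposingcodes}, and the dimension count via directness of the sum $C_1e_1+\dots+C_te_t$). Both points of care you flag are exactly the content the paper's terse proof takes for granted.
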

\begin{proof}
    Just note that $\eta_M(C) = [C_1,\dots ,C_t] \cdot M$ and $\eta_M$ is an isomorphism of vector spaces.
\end{proof}

\begin{remark}
    Since $\eta_M(C)$ is a matrix product code over $F_q,$ we establish a connection between linear codes over $\fring$ and matrix product codes over $F_q$ through $\eta_M.$ The matrix product codes that can be obtained by choosing different $M$ are very different.
\end{remark}

\begin{example} \label{ex:PlotkinSum}
    Let $C_1,C_2$ be a linear code of length $n$ over $F_q,$ we can construct a linear code of length $2n$ over $F_q.$ Specifically, $C=\{(u,u+ v)\in F_q^{2n}: u\in C_1,v\in C_2\},$ this method of constructing new code is called $(u|u+v)$ construction.
    If we set $t=2,M=\left(\begin{smallmatrix}
                1&1\\0&1
            \end{smallmatrix}\right),$ then the image of the linear code $C_1e_1 +C_2e_2$ over $F_q \times F_q$ under the isomorphism $\eta_M$ is the linear code constructed from $C_1,C_2$ by $(u|u+v)$ construction.
\end{example}

Using Magma online calculator and $(u|u+v)$ construction in Example \ref{ex:PlotkinSum}, we can construct optimal linear codes over finite fields.

\begin{example}
    In $F_3[x]$, $x^{10}-1=(x+1)(x+2)(x^4 + x^3 + x^2 + x + 1)(x^4 + 2x^3 + x^2 + 2x + 1) ,\ x^{10}+1 =(x^2+1)(x^4 + x^3 + 2x + 1)(x^4 + 2x^3 + x + 1).$ Let $C_1$ be the cyclic code generated by $g_1(x) = x+1,$ and $C_2$ be the negative cyclic code generated by $g_2(x) = x^4 + x^3 + 2x + 1 ,$ then the parameters of $C_1, C_2$ are $[10,9,2],[10,6,4],$ and the parameters of $\eta_M(C_1e_1 + C_2e_2)$ are $[20 ,15,4].$ Look up the table \cite{Grassl:codetables} we know that this is an optimal linear code, and one of its generator matrices is
    \setcounter{MaxMatrixCols}{30}
    \begin{equation*}
        \begin{pmatrix}
            1 & 0 & 0 & 0 & 0 & 0 & 0 & 0 & 0 & 1 & 0 & 0 & 0 & 0 & 0 & 0 & 1 & 2 & 0 & 2 \\
            0 & 1 & 0 & 0 & 0 & 0 & 0 & 0 & 0 & 2 & 0 & 0 & 0 & 0 & 0 & 0 & 2 & 2 & 2 & 1 \\
            0 & 0 & 1 & 0 & 0 & 0 & 0 & 0 & 0 & 1 & 0 & 0 & 0 & 0 & 0 & 0 & 1 & 1 & 2 & 1 \\
            0 & 0 & 0 & 1 & 0 & 0 & 0 & 0 & 0 & 2 & 0 & 0 & 0 & 0 & 0 & 0 & 0 & 1 & 1 & 1 \\
            0 & 0 & 0 & 0 & 1 & 0 & 0 & 0 & 0 & 1 & 0 & 0 & 0 & 0 & 0 & 0 & 1 & 2 & 1 & 0 \\
            0 & 0 & 0 & 0 & 0 & 1 & 0 & 0 & 0 & 2 & 0 & 0 & 0 & 0 & 0 & 0 & 1 & 0 & 2 & 1 \\
            0 & 0 & 0 & 0 & 0 & 0 & 1 & 0 & 0 & 1 & 0 & 0 & 0 & 0 & 0 & 0 & 1 & 0 & 0 & 1 \\
            0 & 0 & 0 & 0 & 0 & 0 & 0 & 1 & 0 & 2 & 0 & 0 & 0 & 0 & 0 & 0 & 0 & 1 & 0 & 2 \\
            0 & 0 & 0 & 0 & 0 & 0 & 0 & 0 & 1 & 1 & 0 & 0 & 0 & 0 & 0 & 0 & 0 & 0 & 1 & 1 \\
            0 & 0 & 0 & 0 & 0 & 0 & 0 & 0 & 0 & 0 & 1 & 0 & 0 & 0 & 0 & 0 & 2 & 1 & 0 & 2 \\
            0 & 0 & 0 & 0 & 0 & 0 & 0 & 0 & 0 & 0 & 0 & 1 & 0 & 0 & 0 & 0 & 1 & 1 & 1 & 1 \\
            0 & 0 & 0 & 0 & 0 & 0 & 0 & 0 & 0 & 0 & 0 & 0 & 1 & 0 & 0 & 0 & 2 & 2 & 1 & 0 \\
            0 & 0 & 0 & 0 & 0 & 0 & 0 & 0 & 0 & 0 & 0 & 0 & 0 & 1 & 0 & 0 & 0 & 2 & 2 & 1 \\
            0 & 0 & 0 & 0 & 0 & 0 & 0 & 0 & 0 & 0 & 0 & 0 & 0 & 0 & 1 & 0 & 2 & 1 & 2 & 1 \\
            0 & 0 & 0 & 0 & 0 & 0 & 0 & 0 & 0 & 0 & 0 & 0 & 0 & 0 & 0 & 1 & 2 & 0 & 1 & 1
        \end{pmatrix}.
    \end{equation*}
    In fact, the optimal linear code with parameters $[20,15,4]$ over $F_3$ in the table \cite{Grassl:codetables} is also constructed by $(u|u+v)$ construction.
\end{example}

\begin{example}
    In $F_3[x]$, $x^{11}-1=(x+2)(x^5 + 2x^3 + x^2 + 2x + 2)(x^5 + x^4 + 2x ^3 + x^2 + 2) ,\ x^{11}+1 =(x+1)(x^5 + 2x^3 + 2x^2 + 2x + 1)(x^5 + 2x^4 + 2x^3 + 2x^2 + 1).$ Let $C_1$ be the cyclic code generated by $g_1(x) = x-1$, $C_2$ be the negative cyclic code generated by $g_2(x) = x^5 + 2x^3 +  2x^2 + 2x + 1 ,$ then the parameters of $C_1, C_2$ are $[11,10,2],[11,6,5],$ and the parameters of $\eta_M(C_1e_1 + C_2e_2)$ are $[22,16,4].$ Look up the table \cite{Grassl:codetables}, we know that this is an optimal linear code, and one of its generator matrices is 
    
    \begin{equation*}\label{eq:genmatof22164}
        \begin{pmatrix}
            1 & 0 & 0 & 0 & 0 & 0 & 0 & 0 & 0 & 0 & 2 & 0 & 0 & 0 & 0 & 0 & 0 & 1 & 2 & 2 & 2 & 2 \\
            0 & 1 & 0 & 0 & 0 & 0 & 0 & 0 & 0 & 0 & 2 & 0 & 0 & 0 & 0 & 0 & 0 & 0 & 1 & 2 & 2 & 1 \\
            0 & 0 & 1 & 0 & 0 & 0 & 0 & 0 & 0 & 0 & 2 & 0 & 0 & 0 & 0 & 0 & 0 & 1 & 2 & 0 & 1 & 1 \\
            0 & 0 & 0 & 1 & 0 & 0 & 0 & 0 & 0 & 0 & 2 & 0 & 0 & 0 & 0 & 0 & 0 & 1 & 0 & 1 & 2 & 0 \\
            0 & 0 & 0 & 0 & 1 & 0 & 0 & 0 & 0 & 0 & 2 & 0 & 0 & 0 & 0 & 0 & 0 & 2 & 2 & 1 & 2 & 1 \\
            0 & 0 & 0 & 0 & 0 & 1 & 0 & 0 & 0 & 0 & 2 & 0 & 0 & 0 & 0 & 0 & 0 & 1 & 1 & 1 & 0 & 1 \\
            0 & 0 & 0 & 0 & 0 & 0 & 1 & 0 & 0 & 0 & 2 & 0 & 0 & 0 & 0 & 0 & 0 & 1 & 0 & 0 & 0 & 2 \\
            0 & 0 & 0 & 0 & 0 & 0 & 0 & 1 & 0 & 0 & 2 & 0 & 0 & 0 & 0 & 0 & 0 & 0 & 1 & 0 & 0 & 2 \\
            0 & 0 & 0 & 0 & 0 & 0 & 0 & 0 & 1 & 0 & 2 & 0 & 0 & 0 & 0 & 0 & 0 & 0 & 0 & 1 & 0 & 2 \\
            0 & 0 & 0 & 0 & 0 & 0 & 0 & 0 & 0 & 1 & 2 & 0 & 0 & 0 & 0 & 0 & 0 & 0 & 0 & 0 & 1 & 2 \\
            0 & 0 & 0 & 0 & 0 & 0 & 0 & 0 & 0 & 0 & 0 & 1 & 0 & 0 & 0 & 0 & 0 & 2 & 1 & 1 & 1 & 0 \\
            0 & 0 & 0 & 0 & 0 & 0 & 0 & 0 & 0 & 0 & 0 & 0 & 1 & 0 & 0 & 0 & 0 & 0 & 2 & 1 & 1 & 1 \\
            0 & 0 & 0 & 0 & 0 & 0 & 0 & 0 & 0 & 0 & 0 & 0 & 0 & 1 & 0 & 0 & 0 & 2 & 1 & 0 & 2 & 1 \\
            0 & 0 & 0 & 0 & 0 & 0 & 0 & 0 & 0 & 0 & 0 & 0 & 0 & 0 & 1 & 0 & 0 & 2 & 0 & 2 & 1 & 2 \\
            0 & 0 & 0 & 0 & 0 & 0 & 0 & 0 & 0 & 0 & 0 & 0 & 0 & 0 & 0 & 1 & 0 & 1 & 1 & 2 & 1 & 1 \\
            0 & 0 & 0 & 0 & 0 & 0 & 0 & 0 & 0 & 0 & 0 & 0 & 0 & 0 & 0 & 0 & 1 & 2 & 2 & 2 & 0 & 1
        \end{pmatrix}
    \end{equation*}
\end{example}

Zhu et al. studied constacyclic codes over $F_p + vF_p $ ($v^2=v, $ where $p$ is an odd prime number), and the main result\cite[Theorem 3.4,][]{zhu2011class} they obtained is that the $(1-2v)$-cyclic codes of length $n$ over $F_p+vF_p$ under their Gray map are cyclic codes of length $2n$ over $F_p.$ Their results can be generalized.

Let
\begin{equation} \label{eq:specialmat}
    M = \begin{pmatrix}
        m_1 & & & \\
            &m_2 && \\
            & & \ddots & \\
            & & & m_t
    \end{pmatrix} \begin{pmatrix}
        1 & \lambda_1 & \dots & \lambda_1^{t-1} \\
        1 & \lambda_2 & \dots & \lambda_2^{t-1} \\
        \vdots & \vdots & & \vdots \\
        1 & \lambda_t & \dots & \lambda_t^{t-1}
    \end{pmatrix},
\end{equation}

where $m_j \in F_q^\ast, $ and $t\mid (q-1),$ while $\lambda_j$ are $t$ pairwise different roots of $x^t =1$ in $F_q,$ $1\leq j \leq t.$

\begin{theorem} \label{thm:skewtocyclic}
    Let the matrix $M$ as the Eq. \eqref{eq:specialmat}, $\lambda = \lambda_1e_1 + \dots + \lambda_te_t.$ If $C=C_1e_1 + \dots + C_te_t$ is a linear code of length $n$ over $\fring,$ then $C$ is a $\lambda^{-1}$-cyclic code over $\fring$ if and only if $\eta_M(C)$ is a cyclic code of length $tn$ over $F_q.$
\end{theorem}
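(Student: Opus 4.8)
The plan is to reduce the statement to a single \emph{intertwining identity} between the constacyclic shift on $\fring^n$ and the ordinary cyclic shift on $F_q^{tn}$. Write $\sigma = \sigma_{\id,\lambda^{-1}}$ for the map on $\fring^n$ sending $(c_0,\dots,c_{n-1})$ to $(\lambda^{-1}c_{n-1},c_0,\dots,c_{n-2})$, so that a $\lambda^{-1}$-cyclic code over $\fring$ is exactly a linear code $C$ with $\sigma(C)\subseteq C$, and write $\rho$ for the cyclic shift on $F_q^{tn}$. Since $\eta_M$ is a linear isomorphism, once I establish
\begin{equation*}
    \eta_M\circ\sigma = \rho\circ\eta_M,
\end{equation*}
the theorem is immediate: $\sigma(C)\subseteq C$ holds if and only if $\eta_M(\sigma(C))=\rho(\eta_M(C))\subseteq\eta_M(C)$, i.e.\ if and only if the linear code $\eta_M(C)$ is closed under $\rho$. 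Because $\sigma$ and $\rho$ are bijections of finite sets, inclusion is equivalent to equality throughout, so no separate argument for the two directions is needed.

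To prove the identity I would compute both sides blockwise. Writing $\varphi(x)=(x_1,\dots,x_t)$ with $x_k=(x_{0k},\dots,x_{n-1,k})\in F_q^n$, the factorization of $M$ as a diagonal matrix $\mathrm{diag}(m_1,\dots,m_t)$ times the Vandermonde matrix on $\lambda_1,\dots,\lambda_t$ gives $\eta_M(x)=(w_1,\dots,w_t)$, where the $j$-th block has entries $w_{j,i}=\sum_{k=1}^t m_k\lambda_k^{\,j-1}x_{ik}$. Applying $\sigma$ replaces the $k$-th component sequence $x_k$ by $\rho_{\lambda_k^{-1}}(x_k)$, so in $\eta_M(\sigma(x))$ the entry in block $j$ at position $i\geq 1$ is $\sum_k m_k\lambda_k^{\,j-1}x_{i-1,k}=w_{j,i-1}$, while the position-$0$ entry equals $\sum_k m_k\lambda_k^{\,j-1}\lambda_k^{-1}x_{n-1,k}=\sum_k m_k\lambda_k^{\,j-2}x_{n-1,k}$. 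On the other hand $\rho(\eta_M(x))$ has, in block $j$, the entries of $w_j$ shifted right by one, with the leading slot filled by $w_{j-1,n-1}$, the trailing entry of the \emph{previous} block, and for block $1$ by $w_{t,n-1}$.

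The heart of the argument, and the only place where the hypotheses on $M$ are genuinely used, is matching these position-$0$ entries across the block boundaries. For $2\leq j\leq t$ one has $w_{j-1,n-1}=\sum_k m_k\lambda_k^{\,j-2}x_{n-1,k}$, which agrees with the computed value directly. The delicate case is $j=1$, where the wrap-around of the single cyclic shift brings in $w_{t,n-1}=\sum_k m_k\lambda_k^{\,t-1}x_{n-1,k}$, whereas $\sigma$ contributes $\sum_k m_k\lambda_k^{-1}x_{n-1,k}$; these coincide precisely because each $\lambda_k$ is a $t$-th root of unity, so $\lambda_k^{-1}=\lambda_k^{\,t-1}$. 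This is exactly where the requirement $t\mid(q-1)$ and the choice of the $\lambda_j$ as the $t$ distinct roots of $x^t=1$ enter. I expect the bookkeeping of how one ordinary cyclic shift on $F_q^{tn}$ redistributes entries across the $t$ blocks to be the main obstacle, whereas the diagonal factor $\mathrm{diag}(m_1,\dots,m_t)$ plays no role in the shift identity and only rescales $\eta_M$. Once $\eta_M\circ\sigma=\rho\circ\eta_M$ is verified on coordinates, the equivalence stated in the theorem follows as in the first paragraph.
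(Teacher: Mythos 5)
Your proof is correct, and the coordinate computation at its heart (including the use of $\lambda_k^{-1}=\lambda_k^{t-1}$ at the block-$1$ wrap-around) is exactly the computation the paper performs; the difference is in how it is packaged. The paper first invokes Theorem \ref{thm:mainofcase1} to translate ``$C$ is $\lambda^{-1}$-cyclic'' into ``each $C_j$ is $\lambda_j^{-1}$-cyclic,'' and then verifies the shift compatibility only on the single-component elements $x_je_j$, namely $\rho\left(\eta_M(x_je_j)\right)=\eta_M\left(\rho_{\lambda_j^{-1}}(x_j)e_j\right)$, extending to general codewords by linearity of $\rho$ and handling necessity and sufficiency separately. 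You instead prove the global intertwining identity $\eta_M\circ\sigma_{\id,\lambda^{-1}}=\rho\circ\eta_M$ on all of $\fring^n$, which makes the equivalence immediate in both directions at once and renders the appeal to Theorem \ref{thm:mainofcase1} unnecessary; this is slightly more self-contained and makes transparent that the theorem is really a conjugation statement, $\rho=\eta_M\circ\sigma_{\id,\lambda^{-1}}\circ\eta_M^{-1}$, i.e.\ that $\eta_M$ transports the constacyclic shift to the cyclic shift. What the paper's route buys in exchange is that the component-level identity it isolates is precisely the form needed for its subsequent remark that the necessity direction survives even when $\eta_M$ is not bijective (the surjective-homomorphism generalization), whereas your argument leans on bijectivity of $\eta_M$ for the clean two-way equivalence; if you wanted that generalization you would restrict to the one-sided inclusion $\eta_M(\sigma(C))\subseteq\eta_M(C)$, which your identity still yields.
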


\begin{proof}
    Denote $\rho$ the map from $F_q^{tn}$ to $F_q^{tn},$ which maps $(x_0,x_1,\dots,x_{tn-1})$ to $(x_{tn-1},x_0,\dots,x_{tn-2}).$ Use $\rho_{\lambda^{-1}_j}$ to represent the map from $F_q^{n}$ to $ F_q^{n},$ which maps $(y_0,y_1,\dots,y_{n-1})$ to $(\lambda^{-1}_j y_{n-1},y_0,\dots,y_{n-2}).$ According to Theorem \ref{thm:mainofcase1} we know that $C$ is a $\lambda^{-1}$-cyclic code over $\fring$ if and only if $C_j$ is a $\lambda^{-1}_j$-cyclic code over $F_q,$ $1\leq j\leq t.$

    Necessity. For any $x=\sum_{j=1}^t x_je_j \in C,$ we have 
    $
        \eta_M(x) = \eta_M(x_1e_1) + \dots + \eta_M(x_te_t).
    $
    Therefore, to prove that $\eta_M (C)$ is a cyclic code, it is only necessary to prove that for all $x_j =(x_{0j},x_{1j},\dots,x_{n-1,j}) \in C_j,$ $\eta_M(x_je_j) \in F_q^{tn}$ is still in $\eta_M(C)$ after a cyclic shift $\rho$. Direct calculation shows
    \begin{align*}
        \eta_M(x_je_j) & = m_j\left(x_j,\lambda_jx_j,\dots,\lambda_j^{t-1}x_j \right), \\
        \rho(\eta_M(x_je_j)) & = m_j\left( \rho_{\lambda_j^{-1}}(x_j),\lambda_j \rho_{\lambda_j^{-1}} (x_j),\dots, \lambda_j^{t-1}\rho_{\lambda_j^{-1}} (x_j) \right) \\
                             & = \eta_M\left(\rho_{\lambda_j^{-1}} (x_j)e_j\right) \in \eta_M(C).
    \end{align*}

    Sufficiency. For any $x_j \in C_j,$ we need to prove that $\rho_{\lambda_j^{-1}}(x_j) \in C_j.$ Since
    \begin{equation*}
        \rho(\eta_M(x_je_j)) = \eta_M\left(\rho_{\lambda_j^{-1}} (x_j)e_j\right) \in \eta_M(C),
    \end{equation*}
    then $\rho_{\lambda_j^{-1}} (x_j)e_j \in C,$ and $\rho_{\lambda_j^{-1}} (x_j) \in C_j.$
\end{proof}

The results of Zhu et al. can be stated in our language here:

\begin{example}\cite[Theorem 3.4,][]{zhu2011class}
    Let $C=C_1e_1 + C_2e_2$ be a linear code over $\fring= F_p\times F_p,$ where $p$ is an odd prime, $M = \left(\begin{smallmatrix}
                1&1\\
                -1&1
            \end{smallmatrix}\right),$ then $C$ is a $(e_1 - e_2)$-cyclic code over $\fring$ if and only if $\eta_M(C)$ is a cyclic code of length $2n$ over $F_p.$
\end{example}

\begin{remark}
    In Theorem \ref{thm:skewtocyclic}, we require $\lambda_j$ to be $t$ distinct roots of $x^t =1$ in $F_q$ to make $\eta_M$ be bijective, but in the proof of necessity we do not use the fact that $\eta_M$ is bijective. That is, we can define other mappings that map skew constacyclic codes over $\fring$ to cyclic codes over $F_q.$ For example, Bag et al. defined a surjective homomorphism that maps skew constacyclic codes over $F_p+uF_p + vF_p + wF_p$ to cyclic codes over $F_p,$ see \citep[Theorem 4.2][]{bag2019skew}.
\end{remark}

\subsection{Surjective homomorphism}

Consider the surjective homomorphism $\Psi: \fring \rightarrow F_q,\ \sum_{j=1}^t a_je_j
    \mapsto \sum_{j=1}^t a_j.$ A mapping from $\fring^{n}$ to $F_q^{n}$ is induced by $\Psi$, and we still denote it by $\Psi,$
i.e., $\Psi: $
\begin{align*}
    \fring^{n} & \rightarrow F_q^{n}, \\
    \left(\sum_{j=1}^t x_{0j}e_j,\dots,\sum_{j=1}^tx_{n-1,j}e_j\right) & \mapsto \left(\sum_{ j=1}^t x_{0j},\dots,\sum_{j=1}^tx_{n-1,j}\right).
\end{align*}

\begin{proposition}
    If $C=C_1e_1 + \dots + C_te_t$ is a linear code of length $n$ over $\fring,$ then $\Psi(C) = C_1+ \dots + C_t$ is a linear code of length $n$ over $F_q.$ Moreover, if the parameters of $C_i$ are $[n,k_i,d_i],$ then the parameters of $\Psi(C)$ are $[n,\dim\left( C_1+ \dots + C_t \right),d],$ where $d\leq \min\{d_1,\dots,d_t\}.$ \qed
\end{proposition}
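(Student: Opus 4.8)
The plan is to first determine the image $\Psi(C)$ explicitly and then read off the three assertions from it. A generic codeword of $C = C_1 e_1 + \dots + C_t e_t$ has the form $\sum_{j=1}^t c_j e_j$ with $c_j \in C_j$; applying the induced map $\Psi$ collapses the idempotents, since $\Psi$ sends $\sum_j a_j e_j$ to $\sum_j a_j$, and hence yields the ordinary vector sum $\sum_{j=1}^t c_j$ in $F_q^n$. Therefore $\Psi(C) = \{c_1 + \dots + c_t : c_j \in C_j\}$, which is by definition the subspace sum $C_1 + \dots + C_t$. Because $\Psi$ is $F_q$-linear (it is additive as a coordinatewise ring homomorphism, and commutes with scalars from $F_q$ acting through $a \mapsto a 1_{\fring}$), the image of the subspace $C$ is again a subspace of $F_q^n$; thus $\Psi(C)$ is a linear code of length $n$, and the dimension claim is simply $\dim \Psi(C) = \dim(C_1 + \dots + C_t)$.

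For the minimum-distance bound I would exploit that each $C_i$ sits inside the sum. Fixing $i$ and taking the $i$-th summand to be an arbitrary element of $C_i$ while all other summands are $0$ (legitimate since $0 \in C_j$ for every $j$) shows $C_i \subseteq C_1 + \dots + C_t = \Psi(C)$, and this inclusion leaves Hamming weights unchanged. Choosing a nonzero $c_i \in C_i$ of minimal weight, namely $\wt_H(c_i) = d_i$, produces a nonzero codeword of $\Psi(C)$ of weight $d_i$; since the minimum distance of a linear code equals its minimum nonzero weight, this gives $d \leq d_i$. Letting $i$ run over $1, \dots, t$ yields $d \leq \min\{d_1, \dots, d_t\}$.

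There is no real obstacle here: the only steps requiring care are the explicit evaluation of $\Psi$ on a general codeword and the observation that the embeddings $C_i \hookrightarrow \Psi(C)$ preserve weight. I would, however, add a remark explaining why only an inequality rather than an equality is asserted for $d$: the sum $C_1 + \dots + C_t$ also contains mixed combinations $c_1 + \dots + c_t$ whose coordinates may cancel, so $\Psi(C)$ can acquire nonzero codewords of weight strictly smaller than every $d_i$, and the bound cannot in general be sharpened to equality.
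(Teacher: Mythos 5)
Your proof is correct. The paper states this proposition without proof (the terminal \qed marks it as immediate from the definitions), and your argument supplies exactly the routine verification being left to the reader: evaluating $\Psi$ on a general codeword $\sum_j c_j e_j$ to get $\Psi(C)=C_1+\dots+C_t$, invoking $F_q$-linearity of $\Psi$ for linearity and the dimension claim, and using the weight-preserving inclusions $C_i \subseteq \Psi(C)$ together with the fact that the minimum distance of a linear code is its minimum nonzero weight to obtain $d\leq\min\{d_1,\dots,d_t\}$.
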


Let $A=(1,1,\dots,1)\trans,$ then $\Psi(C) = [C_1,C_2,\dots,C_t]\cdot A,$ that is to say, $\Psi(C )$ is also a matrix product code over $F_q.$ Using the mapping $\Psi,$ we may get linear codes with good parameters. After the following proposition, we will give the specific examples.

\begin{proposition}
    Suppose that the characteristic of $F_q$ is an odd prime. If $C_1$ is a cyclic code with parameters $[n,k_1,d_1]$ over $F_q,$ $C_2$ is a negative cyclic code with parameters $[n,k_2,d_2]$ over $F_q,$ and $k_1 + k_2 < n,$ then $C_1\cap C_2 =\{0\},$ thus, the parameters of $\Psi\left(C_1e_1 +C_2e_2\right) =  C_1+C_2$ are $[n,k_1+k_2,d],$ where $d\leq \min\{d_1,d_2\}.$
\end{proposition}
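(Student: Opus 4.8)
The plan is to describe $C_1$ and $C_2$ through their generator polynomials and then run a coprimality-plus-degree argument. Taking $\theta=\id$ and $\lambda=1$ in the earlier theory, $C_1$ is the cyclic code with a generator polynomial $g_1(x)$ dividing $x^n-1$, and identifying a word $(c_0,\dots,c_{n-1})$ with $c(x)=c_0+c_1x+\dots+c_{n-1}x^{n-1}$ of degree $<n$, membership $c\in C_1$ is equivalent to $g_1(x)\mid c(x)$ in $F_q[x]$. Taking instead $\lambda=-1$, the negacyclic code $C_2$ has a generator polynomial $g_2(x)$ dividing $x^n+1$, and $c\in C_2$ iff $g_2(x)\mid c(x)$. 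Here $\deg g_1=n-k_1$ and $\deg g_2=n-k_2$.

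The crucial point is that $x^n-1$ and $x^n+1$ are coprime in $F_q[x]$: their difference is the constant $-2$, which is a unit since $\cchar(F_q)$ is odd. Hence $\gcd\bigl(x^n-1,x^n+1\bigr)=1$, and in particular $\gcd(g_1,g_2)=1$, so $\mathrm{lcm}(g_1,g_2)=g_1g_2$.

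Now let $c(x)$ represent any word in $C_1\cap C_2$. Then $g_1(x)\mid c(x)$ and $g_2(x)\mid c(x)$, whence $g_1(x)g_2(x)\mid c(x)$. But $\deg(g_1g_2)=2n-k_1-k_2>n$ by the hypothesis $k_1+k_2<n$, while $\deg c(x)<n$; the only such multiple is $0$, so $c(x)=0$. This proves $C_1\cap C_2=\{0\}$, and therefore $\dim(C_1+C_2)=\dim C_1+\dim C_2-\dim(C_1\cap C_2)=k_1+k_2$. Combined with the earlier proposition giving $\Psi(C_1e_1+C_2e_2)=C_1+C_2$, this yields a code of length $n$ and dimension $k_1+k_2$.

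For the distance bound I would simply observe that $C_1\subseteq C_1+C_2$ and $C_2\subseteq C_1+C_2$, so a minimum-weight nonzero codeword of each $C_i$ survives in $C_1+C_2$; hence $d\le d_1$ and $d\le d_2$, i.e.\ $d\le\min\{d_1,d_2\}$. The only step that needs care --- the main (and rather mild) obstacle --- is the bookkeeping that lets me treat $C_1$ and $C_2$ as subspaces of the single vector space of degree-$<n$ polynomials and characterize membership in each by divisibility; once that identification is in place, the coprimality of $x^n\mp 1$ and the degree count finish the argument immediately.
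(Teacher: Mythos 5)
Your proposal is correct and follows essentially the same route as the paper: pass to generator polynomials, use that $g_1(x)\mid x^n-1$ and $g_2(x)\mid x^n+1$ are coprime in odd characteristic so that $g_1(x)g_2(x)$ divides any word in $C_1\cap C_2$, and then conclude $C_1\cap C_2=\{0\}$ by comparing degrees. Your write-up just spells out details the paper leaves implicit (the unit $-2$ argument for coprimality, the dimension count, and the subcode argument for the distance bound).
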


\begin{proof}
    Let $a\in C_1\cap C_2,$ and $a(x)$ be the polynomial corresponding to $a$ with degree less than $n,$ then the generator polynomial $g_1(x)$ of $C_1$ divides $ a( x),$ and $C_2$'s generator polynomial $g_2(x) \mid a(x).$ Because $g_1(x) \mid x^n -1,$ $g_2(x) \mid x^n + 1,$ then $g_1(x),g_2(x)$ are coprime, so $g_1(x)g_2(x) \mid a(x),$ consider the degree of polynomials, we get $a(x) =0.$
\end{proof}

The $C_1e_1+C_2e_2$ in the above proposition is a $(e_1 -e_2)$-cyclic code over $\fring=F_q \times F_q,$ and $C_1+C_2$ is the image of $\Psi.$ Using Magma online calculator, we get some optimal linear codes over $F_3.$

\begin{example}
    In $F_3[x],$ $x^8-1=(x+1)(x+2)(x^2 + 1)(x^2 + x + 2)(x^2 + 2x + 2 ),\ x^8+1 =(x^4 + x^2 + 2)(x^4 + 2x^2 + 2).$ Let $C_1$ is the cyclic code generated by $g_1(x) = (x+1)(x +2)(x^2 + 1)(x^2 + 2x + 2),$ $C_2$ is the negative cyclic code generated by $g_2(x) = x^4 + 2x^2 + 2,$ then the parameters of $C_1$ and $C_2$ are $[8,2,6],[8,4,3]$ respectively, and the parameters of $C_1 + C_2$ are $[8,6,2].$ Look up the table \cite{Grassl:codetables}, we know that this is an optimal linear code, and one of its generator matrices is
    \begin{equation*}
        \begin{pmatrix}
            1 & 0 & 0 & 0 & 0 & 0 & 2 & 2 \\
            0 & 1 & 0 & 0 & 0 & 0 & 2 & 1 \\
            0 & 0 & 1 & 0 & 0 & 0 & 0 & 2 \\
            0 & 0 & 0 & 1 & 0 & 0 & 2 & 2 \\
            0 & 0 & 0 & 0 & 1 & 0 & 2 & 1 \\
            0 & 0 & 0 & 0 & 0 & 1 & 1 & 0
        \end{pmatrix}.
    \end{equation*}
\end{example}

\begin{example}
    In $F_3[x]$, $x^{10}-1=(x+1)(x+2)(x^4 + x^3 + x^2 + x + 1)(x^4 + 2x^3 + x^2 + 2x + 1),\ x^{10}+1 =(x^2+1)(x^4 + x^3 + 2x + 1)(x^4 + 2x^3 + x + 1).$ Let $C_1$ is the cyclic code generated by $g_1(x) = (x+1)(x^4 + x^3 + x^2 + x + 1)(x^4 + 2x^3 + x^ 2 + 2x + 1),$ $C_2$ is the negative cyclic code generated by $g_2(x) = x^2+1,$ then the parameters of $C_1$ and $C_2$ are $[10,1, 10],[10,8,2]$ respectively. And the parameters of $C_1 + C_2$ are $[10,9,2],$ Look up the table \cite{Grassl:codetables}, we know that this is an optimal linear code, one of its generator matrices is
    \begin{equation*}
        \begin{pmatrix}
            1 & 0 & 0 & 0 & 0 & 0 & 0 & 0 & 0 & 1 \\
            0 & 1 & 0 & 0 & 0 & 0 & 0 & 0 & 0 & 2 \\
            0 & 0 & 1 & 0 & 0 & 0 & 0 & 0 & 0 & 2 \\
            0 & 0 & 0 & 1 & 0 & 0 & 0 & 0 & 0 & 1 \\
            0 & 0 & 0 & 0 & 1 & 0 & 0 & 0 & 0 & 1 \\
            0 & 0 & 0 & 0 & 0 & 1 & 0 & 0 & 0 & 2 \\
            0 & 0 & 0 & 0 & 0 & 0 & 1 & 0 & 0 & 2 \\
            0 & 0 & 0 & 0 & 0 & 0 & 0 & 1 & 0 & 1 \\
            0 & 0 & 0 & 0 & 0 & 0 & 0 & 0 & 1 & 1 \\
        \end{pmatrix}.
    \end{equation*}
\end{example}

\begin{example}
    In $F_3[x]$, $x^{12}-1=(x+1)^3(x+2)^3(x^2 + 1)^3,\ x^{12}+1 = (x^2 +x+ 2)^3(x^2 + 2x + 2)^3.$ Let $C_1$ is the cyclic code generated by $g_1(x) = (x+1)(x+2)(x^2 + 1 )^3,$ $C_2$ is the negative cyclic code generated by $g_2(x) = (x^2 +x+ 2)(x^2 + 2x + 2)^3,$ then the parameters of $C_1,C_2 $ are $[12,4,4],[12,4,6],$ and the parameters of $C_1 + C_2$ are $[12,8,3].$ Look up table \cite{Grassl:codetables}, we know that this is an optimal linear code, and one of its generator matrices is
    \begin{equation*}
        \begin{pmatrix}
            1 & 0 & 0 & 0 & 0 & 0 & 0 & 0 & 2 & 1 & 2 & 1 \\
            0 & 1 & 0 & 0 & 0 & 0 & 0 & 0 & 1 & 1 & 2 & 1 \\
            0 & 0 & 1 & 0 & 0 & 0 & 0 & 0 & 0 & 1 & 1 & 2 \\
            0 & 0 & 0 & 1 & 0 & 0 & 0 & 0 & 0 & 0 & 1 & 1 \\
            0 & 0 & 0 & 0 & 1 & 0 & 0 & 0 & 2 & 1 & 2 & 2 \\
            0 & 0 & 0 & 0 & 0 & 1 & 0 & 0 & 0 & 2 & 1 & 2 \\
            0 & 0 & 0 & 0 & 0 & 0 & 1 & 0 & 0 & 0 & 2 & 1 \\
            0 & 0 & 0 & 0 & 0 & 0 & 0 & 1 & 2 & 1 & 2 & 0 \\
        \end{pmatrix}.
    \end{equation*}
\end{example}

\section{Conclusions}
In this article, we study skew constacyclic codes over a class of finite commutative semisimple rings. The automorphism group of $\fring$ is determined, and we characterize skew constacyclic codes over ring by linear codes over finite field. We also define homomorphisms which map linear codes over $\fring$ to matrix product codes over $F_q,$ some optimal linear codes over finite fields are obtained.

\section{Acknowledgements}
This article contains the main results of the author's thesis of master degree. This research did not receive any specific grant from funding agencies in the public, commercial, or
not-for-profit sectors.

\bibliography{mybibfile}

\end{document}